\title{Secure Lossy Transmission of Vector Gaussian Sources\thanks{This work was supported by NSF Grants CCF 07-29127, CNS 09-64632, CCF
09-64645 and CCF 10-18185.}}
\author{Ersen Ekrem \qquad Sennur Ulukus
\\
\normalsize Department of Electrical and Computer Engineering\\
\normalsize University of Maryland, College Park, MD 20742 \\
\normalsize {\it ersen@umd.edu} \qquad {\it ulukus@umd.edu}}
\newcommand{\bblambda}{\bm \Lambda}
\newcommand{\brho}{\bm \rho}
\newcommand{\bbdelta}{\bm \Delta}
\newcommand{\bbsigma}{\bm \Sigma}
\newcommand{\bbi}{{\mathbf{I}}}
\newcommand{\bzero}{{\mathbf{0}}}
\newcommand{\bbv}{{\mathbf{V}}}
\newcommand{\bbh}{{\mathbf{H}}}
\newcommand{\bbw}{{\mathbf{W}}}
\newcommand{\bbm}{{\mathbf{M}}}
\newcommand{\bbq}{{\mathbf{Q}}}
\newcommand{\bbr}{{\mathbf{R}}}
\newcommand{\bbk}{{\mathbf{K}}}
\newcommand{\bz}{{\mathbf{z}}}
\newcommand{\bbz}{{\mathbf{Z}}}
\newcommand{\bbn}{{\mathbf{N}}}
\newcommand{\bba}{{\mathbf{A}}}
\newcommand{\bbd}{{\mathbf{D}}}
\newcommand{\bbt}{{\mathbf{T}}}
\newcommand{\bbb}{{\mathbf{B}}}
\newcommand{\bbc}{{\mathbf{C}}}
\newcommand{\bu}{{\mathbf{u}}}
\newcommand{\bbf}{{\mathbf{F}}}
\newcommand{\bbj}{{\mathbf{J}}}
\newcommand{\bbu}{{\mathbf{U}}}
\newcommand{\bx}{{\mathbf{x}}}
\newcommand{\bbx}{{\mathbf{X}}}
\newcommand{\by}{{\mathbf{y}}}
\newcommand{\bby}{{\mathbf{Y}}}
\def\ci{\perp\!\!\!\perp}
\newtheorem{Example}{Example}
\newtheorem{Theo}{Theorem}
\newtheorem{Lem}{Lemma}
\newtheorem{Def}{Definition}
\newenvironment{proof}[1]{\medskip\par\noindent{\bf Proof:\,}\,#1}{{\mbox{\,$\blacksquare$}\par}}
\begin{document}

%\IEEEoverridecommandlockouts

\maketitle

\begin{abstract}
We study the secure lossy transmission of a vector Gaussian source
to a legitimate user in the presence of an eavesdropper, where
both the legitimate user and the eavesdropper have vector Gaussian
side information. The aim of the transmitter is to describe the
source to the legitimate user in a way that the legitimate user
can reconstruct the source within a certain distortion level while
the eavesdropper is kept ignorant of the source as much as
possible as measured by the equivocation. We obtain an outer bound
for the rate, equivocation and distortion region of this secure
lossy transmission problem. This outer bound is tight when the
transmission rate constraint is removed. In other words, we obtain
the maximum equivocation at the eavesdropper when the legitimate
user needs to reconstruct the source within a fixed distortion
level while there is no constraint on the transmission rate. This
characterization of the maximum equivocation involves two
auxiliary random variables. We show that a non-trivial selection
for both random variables may be necessary in general. The
necessity of two auxiliary random variables also implies that, in
general, Wyner-Ziv coding is suboptimal in the presence of an
eavesdropper. In addition, we show that, even when there is no
rate constraint on the legitimate link, uncoded transmission
(deterministic or stochastic) is suboptimal; the presence of an
eavesdropper necessitates the use of a coded scheme to attain the
maximum equivocation.

\end{abstract}

\newpage

\section{Introduction}

Information theoretic secrecy was initiated by Wyner
in~\cite{Wyner}, where he studied the secure lossless transmission
of a source over a degraded wiretap channel, and obtained the
necessary and sufficient conditions. Later, his result was
generalized to arbitrary, i.e., {\it not necessarily degraded},
wiretap channels in~\cite{Korner}. In recent years, information
theoretic secrecy has gathered a renewed interest, where mostly
channel coding aspects of secure transmission is considered, in
other words, secure transmission of uniformly distributed messages
is studied.

{\it Secure} source coding problem has been studied for both
lossless and lossy reconstruction cases
in~\cite{Deniz_Secure_Source_Coding_1,Deniz_Secure_Source_Coding_2,Vinod_Secure_Source_Coding,Ravi_Secure_Source_Coding,
Ravi_Secure_Source_Coding_J,Kundur_Secure_Source_Coding,Grokop_Secure_Source_Coding,Cuff_Secure_RD,Yamamoto_Secure_RD_1,Yamamoto_Secure_RD_2,Yamamoto_Secure_RD_3,Merhav_Secure_RD_1,Merhav_Secure_RD_2,Secure_Lossy_SC_Villard}.
Secure {\it lossless} source coding problem is studied
in~\cite{Deniz_Secure_Source_Coding_1,Deniz_Secure_Source_Coding_2,Vinod_Secure_Source_Coding,Ravi_Secure_Source_Coding,
Ravi_Secure_Source_Coding_J,Kundur_Secure_Source_Coding,Grokop_Secure_Source_Coding}.
The common theme of these works is that the legitimate receiver
wants to reconstruct the source in a lossless fashion by using the
information it gets from the transmitter in conjunction with its
side information, while the eavesdropper is being kept ignorant of
the source as much as possible. Secure {\it lossy} source coding
problem is studied
in~\cite{Cuff_Secure_RD,Yamamoto_Secure_RD_1,Yamamoto_Secure_RD_2,Yamamoto_Secure_RD_3,Merhav_Secure_RD_1,Merhav_Secure_RD_2,Secure_Lossy_SC_Villard}.
In these works, unlike the ones focusing on secure lossless source
coding, the legitimate receiver does not want to reconstruct the
source in a lossless fashion, but within a distortion level.

The most relevant works to our work here
are~\cite{Merhav_Secure_RD_2,Secure_Lossy_SC_Villard}.
In~\cite{Merhav_Secure_RD_2}, the author considers the secure
lossy transmission of a source over a degraded wiretap channel
while both the legitimate receiver and the eavesdropper have side
information about the source. In~\cite{Merhav_Secure_RD_2}, in
addition to the degradedness that the wiretap channel exhibits,
the source and side information also have a degradedness structure
such that given the legitimate user's side information, the source
and the eavesdropper's side information are independent. For this
setting, in~\cite{Merhav_Secure_RD_2}, a single-letter
characterization of the distortion and equivocation region is
provided. In particular, the optimality of a separation-based
approach, i.e., the optimality of a code that concatenates a
rate-distortion code and a wiretap channel code, is shown.
In~\cite{Secure_Lossy_SC_Villard}, the setting
of~\cite{Merhav_Secure_RD_2} is partially generalized such that
in~\cite{Secure_Lossy_SC_Villard}, the source and side information
do not have any degradedness structure. On the other hand, as
opposed to the {\it noisy} wiretap channel
of~\cite{Merhav_Secure_RD_2}, in~\cite{Secure_Lossy_SC_Villard},
the channel between the transmitter and receivers is assumed to be
{\it noiseless}. For this setting,
in~\cite{Secure_Lossy_SC_Villard}, a single-letter
characterization of the rate, equivocation, and distortion region
is provided.

Here, we consider the setting of~\cite{Secure_Lossy_SC_Villard}
for jointly Gaussian source and side information. In particular,
we consider the model where the transmitter has a vector Gaussian
source which is jointly Gaussian with the vector Gaussian side
information of both the legitimate receiver and the eavesdropper.
In this model, the transmitter wants to convey information to the
legitimate user in a way that the legitimate user can reconstruct
the source within a distortion level while the eavesdropper is
being kept ignorant of the source as much as possible as measured
by the equivocation. A single-letter characterization of the rate,
equivocation, and distortion region for this setting exists due
to~\cite{Secure_Lossy_SC_Villard}. Although we are unable to
evaluate this single-letter characterization for the vector
Gaussian source and side information case to obtain the
corresponding rate, equivocation, distortion region explicitly, we
obtain an outer bound for this region. We obtain this outer bound
by optimizing the rate and equivocation constraints separately. We
note that a joint optimization of the rate and equivocation
constraints for a fixed distortion level would yield the exact
achievable rate and equivocation region for this fixed distortion
level. Thus, optimizing the rate and equivocation constraints
separately yields a larger region, i.e., an outer bound. We show
that this outer bound is tight when we remove the rate constraint
at the transmitter. In other words, we obtain the maximum
achievable equivocation at the eavesdropper when the legitimate
user needs to reconstruct the vector Gaussian source within a
fixed distortion while there is no constraint on the transmission
rate.

We note some implications of this result. First, we note that
since there is no rate constraint on the transmitter, it can use
an uncoded scheme to describe the source to the legitimate user,
and, indeed, it can use any instantaneous (deterministic or
stochastic) encoding scheme for this purpose. However, we show
through an example that even when there is no rate constraint on
the transmitter, to attain the maximum equivocation at the
eavesdropper, in general, the transmitter needs to use a coded
scheme. Hence, the presence of an eavesdropper necessitates the
use of a coded scheme even in the absence of a rate constraint on
the transmitter. Second, we note that the maximum equivocation
expression has two different covariance matrices originating from
the presence of two auxiliary random variables in the
single-letter expression. We show through another example that
both of these covariance matrices, in other words, both of these
two auxiliary random variables, are needed in general to attain
the maximum equivocation at the eavesdropper. The necessity of two
covariance matrices, and hence two auxiliary random variables,
implies that, in general, Wyner-Ziv coding scheme~\cite{Wyner_Ziv}
is not sufficient to attain the maximum equivocation at the
eavesdropper.

\section{Secure Lossy Source Coding}

Here, we describe the secure lossy source coding problem and state
the existing results. Let $\{(X_i,Y_i,Z_i)\}_{i=1}^n$ denote
i.i.d. tuples drawn from a distribution $p(x,y,z)$. The
transmitter, the legitimate user and the eavesdropper observe
$X^n\in\mathcal{X}^n,Y^n\in\mathcal{Y}^n,$ and
$Z^n\in\mathcal{Z}^n$, respectively. The transmitter wants to
convey information to the legitimate user in a way that the
legitimate user can reconstruct the source $X^n$ within a certain
distortion, and meanwhile the eavesdropper is kept ignorant of the
source $X^n$ as much as possible as measured by the equivocation.
We note that if there was no eavesdropper, this setting would
reduce to the Wyner-Ziv problem~\cite{Wyner_Ziv}, for which a
single-letter characterization for the minimum transmission rate
of the transmitter for each distortion level exists.

The distortion of the reconstructed sequence at the legitimate
user is measured by the function $d^n(X^n,\hat{X}^n)$ where
$\hat{X}^n\in\mathcal{\hat{X}}^n$ denotes the legitimate user's
reconstruction of the source $X^n$. We consider the function
$d^n(X^n,\hat{X}^n)$ that has the following form
\begin{align}
d^n(X^n,\hat{X}^n)=\frac{1}{n}\sum_{i=1}^n d(X_i,\hat{X}_i)
\end{align}
where $d(a,b)$ is a non-negative finite-valued function. The
confusion of the eavesdropper is measured by the following
equivocation term
\begin{align}
\frac{1}{n}H(X^n|Z^n,M)
\end{align}
where $M\in\mathcal{M}$, which is a function of the source $X^n$,
denotes the signal sent by the transmitter.

An $(n,R)$ code for secure lossy source coding consists of an
encoding function $f_n:\mathcal{X}^n\rightarrow
\mathcal{M}=\{1,\ldots,2^{nR}\}$ at the transmitter and a decoding
function at the legitimate user $g_n:\mathcal{M}\times
\mathcal{Y}^n\rightarrow\mathcal{\hat{X}}^n$. A rate, equivocation
and distortion tuple $(R,R_e,D)$ is achievable if there exists an
$(n,R)$ code satisfying
\begin{align}
\lim_{n\rightarrow\infty} \frac{1}{n} H(X^n|Z^n,M)&\geq R_e \\
\lim_{n\rightarrow \infty} E[d(X^n,\hat{X}^n)]&\leq D
\end{align}
The set of all achievable $(R,R_e,D)$ tuples is denoted by
$\mathcal{R}^*$ which is given by the following theorem.
\begin{Theo}{\bf (\!\!\cite[Theorem 1]{Secure_Lossy_SC_Villard})}
\label{theorem_general} $(R,R_e,D)\in\mathcal{R}^*$ iff
\begin{align}
R&\geq I(V;X|Y) \label{constraint_1}\\
R_e&\leq H(X|V,Y)+I(X;Y|U)-I(X;Z|U)\label{constraint_2}\\
D&\geq E[d(X,\hat{X}(V,Y))] \label{constraint_3}
\end{align}
for some $U,V$ satisfying the following Markov chain
\begin{align}
U\rightarrow V\rightarrow X \rightarrow Y,Z
\end{align}
and a function $\hat{X}(V,Y)$.
\end{Theo}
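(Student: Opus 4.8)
The plan is the usual two parts: a converse, showing that every achievable $(R,R_e,D)$ must satisfy \eqref{constraint_1}--\eqref{constraint_3} for some admissible $U,V$ and reconstruction function, and a direct part, producing a code for every $(U,V)$ and $\hat X(\cdot,\cdot)$ obeying those inequalities. For the converse I would fix a length-$n$ code with message $M=f_n(X^n)$ achieving $(R,R_e,D)$ and, for each $i$, set
$$
V_i=(M,Y^{i-1},Y_{i+1}^n,X^{i-1}),\qquad U_i=(M,Y_{i+1}^n,Z^{i-1}).
$$
Four points have to be checked. (i) \emph{Markov structure}: $V_i\to X_i\to(Y_i,Z_i)$ holds since $(Y_i,Z_i)$ is independent of all coordinates $\neq i$ given $X_i$; and $U_i\to V_i\to X_i$ holds because, once $X^{i-1}$ is fixed, $M=f_n(X^n)$ together with the product form of $p(x^n,y^n,z^n)$ makes $p(x_i,z^{i-1}\mid M,Y^{i-1},Y_{i+1}^n,X^{i-1})$ factor, i.e.\ $Z^{i-1}\perp X_i\mid V_i$. (ii) \emph{Rate}: $nR\ge H(M)\ge H(M\mid Y^n)=I(X^n;M\mid Y^n)=\sum_i I(X_i;V_i\mid Y_i)$ (deterministic encoder, then coordinate-wise expansion). (iii) \emph{Distortion}: $\hat X_i$ is a function of $(M,Y^n)$, hence of $(V_i,Y_i)$, so $D+\epsilon_n\ge\frac1n\sum_i E[d(X_i,\hat X(V_i,Y_i))]$. (iv) \emph{Equivocation}: from $nR_e\le H(X^n\mid Z^n,M)+n\epsilon_n$ and
$$
H(X^n\mid Z^n,M)=H(X^n\mid Y^n,M)+I(X^n;Y^n\mid M)-I(X^n;Z^n\mid M),
$$
the first term equals $\sum_i H(X_i\mid V_i,Y_i)$ by the chain rule, while the Csisz\'ar sum identity --- applied twice, once conditioned on $M$ and once conditioned on $(M,X^n)$ --- turns the remaining difference into exactly $\sum_i[I(X_i;Y_i\mid U_i)-I(X_i;Z_i\mid U_i)]$. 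A time-sharing variable $Q$ uniform on $\{1,\dots,n\}$, absorbed into $U$ and $V$, then delivers \eqref{constraint_1}--\eqref{constraint_3} as $n\to\infty$.

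For the direct part I would use a two-layer (superposition) Wyner--Ziv binning scheme. Fix $p(u,v\mid x)$ with $U\to V\to X$ and $\hat x(\cdot,\cdot)$; generate a $U^n$-codebook of rate $I(U;X)$, a superimposed $V^n$-codebook of rate $I(V;X\mid U)$, and bin the two layers at rates $I(U;X\mid Y)$ and $I(V;X\mid U,Y)$; the encoder finds a jointly typical triple $(u^n,v^n,x^n)$ and sends the two bin indices, for a total rate $I(U;X\mid Y)+I(V;X\mid U,Y)=I(V;X\mid Y)$, matching \eqref{constraint_1}. Using $Y^n$, the legitimate user Wyner--Ziv-decodes $U^n$ and then $V^n$ and reconstructs $\hat X_i=\hat x(V_i,Y_i)$, which meets \eqref{constraint_3} by the Markov lemma and bounded distortion. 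For the equivocation one lower bounds $H(X^n\mid Z^n,M)$ by $H(X^n\mid Z^n,M,U^n)$ (revealing the coarse layer to the eavesdropper only helps it) and then shows, via codebook-cardinality and conditional-typicality estimates, that the eavesdropper's residual uncertainty is $\approx n\big[H(X\mid V,Z)+I(V;Y\mid U)-I(V;Z\mid U)\big]$: the intrinsic part $H(X\mid V,Z)$ plus the fraction of $V^n$ it cannot resolve because its side information is the weaker one. The algebraic identity $H(X\mid V,Z)+I(V;Y\mid U)-I(V;Z\mid U)=H(X\mid V,Y)+I(X;Y\mid U)-I(X;Z\mid U)$, which uses $U\to V\to X\to Y,Z$, then gives \eqref{constraint_2}.

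I expect two main obstacles. On the converse side, the difficulty is producing a \emph{single} pair $(U_i,V_i)$ that simultaneously single-letterizes the rate, distortion, and equivocation while respecting both Markov constraints: the equivocation term naturally calls for the staggered ``past/future'' structure exploited by the Csisz\'ar sum identity, and one must check that the $U_i$ it dictates is consistent --- in the precise sense $U_i\to V_i\to X_i$ --- with the $V_i$ forced by the rate and distortion bounds, which is exactly where the factorization of $p(x_i,z^{i-1}\mid\cdot)$ is used. On the direct side, the delicate point is the equivocation estimate: making rigorous that conditioning the eavesdropper on $U^n$ and on the refinement bin index still leaves it with the claimed uncertainty requires replacing pointwise joint-typicality heuristics by genuine conditional-entropy bounds and carefully tracking the binning-induced losses.
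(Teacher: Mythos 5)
The paper does not prove this theorem; it is imported verbatim from \cite{Secure_Lossy_SC_Villard}, so there is no internal proof to compare against, and the evaluation must rest on the correctness of your sketch alone. Your converse outline is sound: the auxiliaries $V_i=(M,X^{i-1},Y^{i-1},Y_{i+1}^n)$, $U_i=(M,Y_{i+1}^n,Z^{i-1})$ do satisfy the two Markov constraints (the key observation for $U_i\to V_i\to X_i$ being $Z^{i-1}\perp X_i\mid X^{i-1}$, since given $X^{i-1}$ the variable $Z^{i-1}$ is independent of everything at times $\ge i$; and $(Y_i,Z_i)\perp(M,Y_{\neq i},X^{i-1})\mid X_i$ gives $V_i\to X_i\to(Y_i,Z_i)$); the rate and distortion single-letterizations and the decomposition $H(X^n|Z^n,M)=H(X^n|Y^n,M)+I(X^n;Y^n|M)-I(X^n;Z^n|M)$ are all correct. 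One small inaccuracy: a \emph{single} conditional application of the Csisz\'ar sum identity, with $S=Z$, $T=Y$, everything conditioned on $M$, already converts $I(X^n;Y^n|M)-I(X^n;Z^n|M)$ into $\sum_i\bigl[I(X^n;Y_i|U_i)-I(X^n;Z_i|U_i)\bigr]$, after which memorylessness kills the off-index terms to give $\sum_i\bigl[I(X_i;Y_i|U_i)-I(X_i;Z_i|U_i)\bigr]$; the ``applied twice'' phrasing is unnecessary. On the direct part, your superposition/double-binning scheme and the reconciling identity $H(X|V,Z)+I(V;Y|U)-I(V;Z|U)=H(X|V,Y)+I(X;Y|U)-I(X;Z|U)$ (valid under $U\to V\to X\to(Y,Z)$) are right, but you should say explicitly that the list-size heuristic for the residual uncertainty in $V^n$ only produces the stated count when $I(V;Y|U)\ge I(V;Z|U)$, and the argument that the eavesdropper resolves $U^n$ presumes $I(U;Y)\le I(U;Z)$ --- in the complementary regime one observes, as the paper's footnote does, that taking $U=\phi$ is without loss, so the formula is still achievable. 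With those caveats made explicit and the typicality/binning estimates carried out rigorously (the gap you yourself flag), the sketch is correct.
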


The achievable scheme that attains the region $\mathcal{R}^*$ has
the same spirit as the Wyner-Ziv scheme~\cite{Wyner_Ziv} in the
sense that both achievable schemes use binning to exploit the side
information at the legitimate user, and consequently, to reduce
the rate requirement. The difference of the achievable scheme that
attains $\mathcal{R}^*$ comes from the additional binning
necessitated by the presence of an eavesdropper. In particular,
the transmitter generates sequences $(U^n,V^n)$ and bins both
sequences. The transmitter sends these two bin indices. Using
these bin indices, the legitimate user identifies the right
$(U^n,V^n)$ sequences, and reconstructs $X^n$ within the required
distortion. On the other hand, using the bin indices of
$(U^n,V^n)$, the eavesdropper identifies only the right $U^n$
sequence, and consequently, $U$ does not contribute to the
equivocation, see (\ref{constraint_2})\footnote{The fact that the
eavesdropper can decode $U^n$ sequence can be obtained by
observing that for a $(U,V)$ selection, if $I(U;Y)\geq I(U;Z)$,
there is no loss of optimality of setting $U=\phi$ which will
yield a larger region.}. Indeed, this achievable scheme can be
viewed as if it is using a rate-splitting technique to send the
message $M$, since $M$ has two coordinates, one for the bin index
of $U^n$, and one for the bin index of $V^n$. This perspective
reveals the similarity of the achievable scheme that attains
$\mathcal{R}^*$ and the one that attains the capacity-equivocation
region of the wiretap channel~\cite{Korner} where also
rate-splitting is used. In particular, in the latter case, the
message $W$ is divided into two parts $W_{ne},W_e$ such that
$W_{ne}$ is sent by the sequence $U^n$ and $W_e$ is sent by the
sequence $V^n$. The eavesdropper decodes $W_{ne}$ whereas the
other message $W_{e}$ contributes to the secrecy.

We note that Theorem~\ref{theorem_general} holds for continuous
$(X^n,Y^n,Z^n)$ by replacing the discrete entropy term $H(X|V,Y)$
with the differential entropy term $h(X|V,Y)$. To avoid the
negative equivocation that might arise because of the use of
differential entropy, we replace equivocation with the mutual
information leakage to the eavesdropper $I_e$ defined by
\begin{align}
\lim_{n\rightarrow \infty}\frac{1}{n} I(X^n;Z^n,M)
\end{align}
Once we are interested in the mutual information leakage to the
eavesdropper, a rate, mutual information leakage, and distortion
$(R,I_e,D)$ tuple is said to be achievable if there exists an
$(n,R)$ code such that
\begin{align}
\lim_{n\rightarrow\infty} \frac{1}{n}I(X^n;Z^n,M)&\leq I_e \\
\lim_{n\rightarrow \infty} E[d(X^n,\hat{X}^n)]&\leq D
\end{align}
The set of all achievable $(R,I_e,D)$ tuples is denoted by
$\mathcal{R}$. Using Theorem~\ref{theorem_general}, the region
$\mathcal{R}$ can be stated as follows.
\begin{Theo}{\bf (\!\!\cite{Secure_Lossy_SC_Villard})}
\label{theorem_alternative} $(R,I_e,D)\in\mathcal{R}$ iff
\begin{align}
R&\geq I(V;X|Y) \label{constraint_alt_1}\\
I_e&\geq I(V;X)-I(V;Y|U)+I(X;Z|U)\label{constraint_alt_2}\\
D&\geq E[d(X,\hat{X}(V,Y))] \label{constraint_alt_3}
\end{align}
for some $U,V$ satisfying the following Markov chain
\begin{align}
U\rightarrow V\rightarrow X \rightarrow Y,Z
\end{align}
and a function $\hat{X}(V,Y)$.
\end{Theo}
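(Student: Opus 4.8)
The plan is to obtain Theorem~\ref{theorem_alternative} as a direct reformulation of Theorem~\ref{theorem_general}, exploiting the fact that for continuous sources the equivocation and the mutual information leakage are complementary. First I would note that, for any $(n,R)$ code, since $\{X_i\}_{i=1}^n$ is i.i.d.,
\begin{align}
\frac{1}{n} I(X^n;Z^n,M) = \frac{1}{n}h(X^n) - \frac{1}{n}h(X^n|Z^n,M) = h(X) - \frac{1}{n}h(X^n|Z^n,M),
\end{align}
where we use the differential-entropy form of the equivocation noted after Theorem~\ref{theorem_general}. Consequently, for a fixed code the limiting leakage and the limiting equivocation sum to $h(X)$, so $(R,I_e,D)$ is achievable in the leakage formulation if and only if $(R,h(X)-I_e,D)$ is achievable in the equivocation formulation; that is, $(R,I_e,D)\in\mathcal{R}$ iff $(R,h(X)-I_e,D)\in\mathcal{R}^*$, where $H(X|V,Y)$ in $\mathcal{R}^*$ is read as $h(X|V,Y)$.

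Given this, I would substitute $R_e=h(X)-I_e$ into the equivocation constraint~(\ref{constraint_2}). The inequality $h(X)-I_e\le h(X|V,Y)+I(X;Y|U)-I(X;Z|U)$ rearranges to
\begin{align}
I_e\ge h(X)-h(X|V,Y)-I(X;Y|U)+I(X;Z|U)=I(X;V,Y)-I(X;Y|U)+I(X;Z|U).
\end{align}
It then remains to verify the identity $I(X;V,Y)-I(X;Y|U)=I(V;X)-I(V;Y|U)$ under the Markov chain $U\rightarrow V\rightarrow X\rightarrow Y,Z$, which turns the display above into~(\ref{constraint_alt_2}). This is a short chain-rule computation: writing $I(X;V,Y)=I(X;V)+I(X;Y|V)$, and expanding $I(X,V;Y|U)$ in two ways, one finds $I(X;Y|U)=I(V;Y|U)+I(X;Y|V)$ because $I(X;Y|V,U)=I(X;Y|V)$ (as $U$ is conditionally independent of $(X,Y)$ given $V$) and $I(V;Y|X,U)=0$ (as $Y$ is conditionally independent of $(U,V)$ given $X$). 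Combining these gives the claimed identity. The rate constraint~(\ref{constraint_alt_1}) and distortion constraint~(\ref{constraint_alt_3}) are identical to~(\ref{constraint_1}) and~(\ref{constraint_3}) and require nothing further.

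The only step that is not pure bookkeeping is the passage to differential entropies: one must know that $h(X)$ is finite and that $\frac{1}{n}h(X^n)=h(X)$ for the i.i.d. continuous source, so that the correspondence $(R,I_e,D)\in\mathcal{R}\Leftrightarrow(R,h(X)-I_e,D)\in\mathcal{R}^*$ is an exact equivalence and not merely a one-sided containment; for the jointly Gaussian sources treated in the rest of the paper this is immediate. Everything else reduces to the elementary algebra indicated above.
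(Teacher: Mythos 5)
Your proposal is correct, and it is essentially the derivation the paper intends: the paper offers no independent proof of Theorem~\ref{theorem_alternative}, stating only that it follows from Theorem~\ref{theorem_general} by replacing discrete with differential entropy and equivocation with leakage. Your identity $I(X;V,Y)-I(X;Y|U)=I(V;X)-I(V;Y|U)$ (via $I(X;Y|U,V)=I(X;Y|V)$ and $I(V;Y|X,U)=0$, both from the Markov chain $U\rightarrow V\rightarrow X\rightarrow Y$) is exactly the bookkeeping needed, and the caveat about finiteness of $h(X)$ is the right thing to flag.
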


\section{Vector Gaussian Sources}

Now we study the secure lossy source coding problem for jointly
Gaussian $\{(\bbx_i,\bby_i,\bbz_i)\}_{i=1}^n$ where the tuples
$\{(\bbx_i,\bby_i,\bbz_i)\}_{i=1}^n$ are independent across time,
i.e., across the index $i$, and each tuple is drawn from the same
jointly Gaussian distribution $p(\bbx,\bby,\bbz)$. In other words,
we consider the case where $\bbx_i$ is a zero-mean Gaussian random
vector with covariance matrix $\bbk_X\succ \bzero $, and the side
information at the legitimate user $\bby_i$ and the eavesdropper
$\bbz_i$ are jointly Gaussian with the source $\bbx_i$. In
particular, we assume that $\bby_i,\bbz_i$ have the following form
\begin{align}
\bby_i&=\bbx_i+\bbn_{Y,i} \label{aligned_1}\\
\bbz_i&=\bbx_i+\bbn_{Z,i} \label{aligned_2}
\end{align}
where $\bbn_{Y,i}$ and $\bbn_{Z,i}$ are independent zero-mean
Gaussian random vectors with covariance matrices $\bbsigma_Y\succ
\bzero $ and $\bbsigma_Z\succ \bzero$, respectively, and
$(\bbn_{Y,i},\bbn_{Z,i})$ and $\bbx_{i}$ are independent. We note
that the side information given by
(\ref{aligned_1})-(\ref{aligned_2}) are not in the most general
form. In the most general case, we have
\begin{align}
\bby_{i}&=\bbh_{Y}\bbx_{i}+\bbn_{Y,i}\label{most_1}\\
\bbz_{i}&=\bbh_{Z}\bbx_{i}+\bbn_{Z,i} \label{most_2}
\end{align}
for some $\bbh_Y,\bbh_Z$ matrices. However, until Section
\ref{sec:general_case}, we consider the form of side information
given by (\ref{aligned_1})-(\ref{aligned_2}), and obtain our
results for this model. In Section~\ref{sec:general_case}, we
generalize our results to the most general case given by
(\ref{most_1})-(\ref{most_2}). We note that since the rate,
information leakage and distortion region is invariant with
respect to the correlation between $\bbn_{Y,i}$ and $\bbn_{Z,i}$,
the correlation between $\bbn_{Y,i}$ and $\bbn_{Z,i}$ is
immaterial.

The distortion of the reconstructed sequence
$\{\hat{\bbx}_i\}_{i=1}^n$ is measured by the mean square error
matrix:
\begin{align}
E\left[\big(\bbx_i-\hat{\bbx}_i\big)\big(\bbx_i-\hat{\bbx}_i\big)^\top\right]
\end{align}
Hence, the distortion constraint is represented by a positive
semi-definite matrix $\bbd$, which is achievable if there is an
$(n,R)$ code such that
\begin{align}
\frac{1}{n}\sum_{i=1}^nE\left[\big(\bbx_i-\hat{\bbx}_i\big)\big(\bbx_i-\hat{\bbx}_i\big)^\top\right]\preceq
\bbd
\end{align}
Throughout the paper, we assume that $\bzero\preceq \bbd \preceq
\bbk_{X|Y}$. Since the mean square error is minimized by the
minimum mean square error (MMSE) estimator which is given by the
conditional mean, we assume that the legitimate user applies this
optimal estimator, i.e., the legitimate user selects its
reconstruction function $\{\hat{\bbx}_i\}_{i=1}^n$ as
\begin{align}
\hat{\bbx}_i=E\left[\bbx_i|\bby^n,f_n(\bbx^n)\right]
\label{estimator}
\end{align}
Once the estimator of the legitimate user is set as
(\ref{estimator}), using Theorem~\ref{theorem_alternative}, a
single-letter description of the region $\mathcal{R}$ for a vector
Gaussian source can be given as follows.
\begin{Theo}
\label{theorem_Gauss_single} $(R,I_e,\bbd)\in\mathcal{R}$ iff
\begin{align}
R&\geq I(V;\bbx|\bby) \label{rate_constraint} \\
I_e&\geq I(V;\bbx)-I(V;\bby|U)+I(\bbx;\bbz|U)\label{leakage_constraint}\\
\bbd&\succeq  \bbk_{X|VY}
\end{align}
for some $U,V$ satisfying the following Markov chain
\begin{align}
U\rightarrow V\rightarrow \bbx \rightarrow \bby,\bbz
\end{align}
\end{Theo}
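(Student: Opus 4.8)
The plan is to obtain Theorem~\ref{theorem_Gauss_single} as a specialization of Theorem~\ref{theorem_alternative} to jointly Gaussian sources under the mean-square-error matrix distortion. The rate and information-leakage constraints of Theorem~\ref{theorem_alternative} depend only on the joint law of $(U,V,\bbx,\bby,\bbz)$ and not on the reconstruction function, so they carry over verbatim to the Gaussian model, giving the first two constraints in Theorem~\ref{theorem_Gauss_single}. Hence the entire task is to show that, under the matrix distortion, the constraint $D\geq E[d(X,\hat X(V,Y))]$ of Theorem~\ref{theorem_alternative} reduces to $\bbd\succeq\bbk_{X|VY}$. The key tool for this is the orthogonality principle for minimum mean-square-error (MMSE) estimation.

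First I would record the matrix-distortion counterpart of Theorem~\ref{theorem_alternative}: $(R,I_e,\bbd)$ is achievable iff there exist $U\rightarrow V\rightarrow\bbx\rightarrow(\bby,\bbz)$ and a map $\hat{\bbx}(V,\bby)$ with $R\geq I(V;\bbx|\bby)$, $I_e\geq I(V;\bbx)-I(V;\bby|U)+I(\bbx;\bbz|U)$, and $\bbd\succeq E[(\bbx-\hat{\bbx}(V,\bby))(\bbx-\hat{\bbx}(V,\bby))^\top]$. The only element beyond Theorem~\ref{theorem_alternative} is that the positive-semidefinite ordering $\bbd\succeq\frac{1}{n}\sum_i E[(\bbx_i-\hat{\bbx}_i)(\bbx_i-\hat{\bbx}_i)^\top]$ is the conjunction, over all $m\times 1$ vectors $\mathbf{a}$ (with $m=\dim\bbx$), of the scalar constraints $\mathbf{a}^\top\bbd\,\mathbf{a}\geq\frac{1}{n}\sum_i E[(\mathbf{a}^\top(\bbx_i-\hat{\bbx}_i))^2]$. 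In the converse direction the auxiliaries $(U,V)$ and the reconstruction $\hat{\bbx}(\cdot)$ are extracted from the code alone, exactly as in the proof of Theorem~\ref{theorem_alternative}, so one single triple $(U,V,\hat{\bbx})$ works for every direction $\mathbf{a}$ at once; in the achievability direction a single Wyner-Ziv-type code with reconstruction $\hat{\bbx}(V,\bby)$ meets all these scalar constraints simultaneously, which follows from the standard typicality argument together with a continuity/density argument over the directions $\mathbf{a}$.

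Next, fix a joint distribution of $(U,V,\bbx,\bby,\bbz)$ obeying the Markov chain and optimize the reconstruction, which is legitimate because $R$ and $I_e$ do not involve $\hat{\bbx}$. By the orthogonality principle, $E[(\bbx-\hat{\bbx})(\bbx-\hat{\bbx})^\top]\succeq\bbk_{X|VY}:=E[(\bbx-E[\bbx|V,\bby])(\bbx-E[\bbx|V,\bby])^\top]$ for every map $\hat{\bbx}(V,\bby)$, with equality when $\hat{\bbx}=E[\bbx|V,\bby]$. Consequently, for a fixed $(U,V)$ the attainable distortion matrices are precisely $\{\bbd:\bbd\succeq\bbk_{X|VY}\}$: if $(R,I_e,\bbd)$ is achieved through some $\hat{\bbx}$ then $\bbd\succeq E[(\bbx-\hat{\bbx})(\bbx-\hat{\bbx})^\top]\succeq\bbk_{X|VY}$, while if $\bbd\succeq\bbk_{X|VY}$ then taking $\hat{\bbx}=E[\bbx|V,\bby]$ makes $(R,I_e,\bbd)$ achievable while leaving the rate and leakage constraints untouched. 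Plugging this into the matrix-distortion version of Theorem~\ref{theorem_alternative} yields exactly Theorem~\ref{theorem_Gauss_single}; it is also consistent with the standing assumption $\bzero\preceq\bbd\preceq\bbk_{X|Y}$, since $\bbk_{X|VY}\preceq\bbk_{X|Y}$ because conditioning on the extra variable $V$ cannot increase the MMSE matrix. I expect the main obstacle to be making the matrix-distortion extension of Theorem~\ref{theorem_alternative} rigorous --- in particular, certifying that one code satisfies the mean-square-error constraint in every direction $\mathbf{a}$ at the same time; once that is done, the MMSE/orthogonality argument and the substitution are routine.
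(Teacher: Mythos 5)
Your proposal is correct and matches the paper's argument: specialize Theorem~\ref{theorem_alternative} to the Gaussian model, observe via the orthogonality principle that the conditional-mean estimator $E[\bbx\mid V,\bby]$ minimizes the MSE matrix, and thereby replace the distortion constraint with $\bbd\succeq\bbk_{X|VY}$. The paper dispatches both the matrix-distortion extension of Theorem~\ref{theorem_alternative} and the MMSE optimality as routine in the one sentence preceding the theorem (after fixing the estimator in (\ref{estimator})); you spell out the same reasoning in more detail, including the per-direction reduction of the positive-semidefinite constraint, which the paper leaves implicit.
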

We also define the region $\mathcal{R}(\bbd)$ as the union of the
$(R,I_e)$ pairs that are achievable when the distortion constraint
matrix is set to $\bbd$. Our main result is an outer bound for the
region $\mathcal{R}(\bbd)$, hence for the region $\mathcal{R}$.
\begin{Theo}
\label{theorem_outer} When $\bbd \preceq \bbk_{X|Y}$, we have
\begin{align}
\mathcal{R}(\bbd) \subseteq \mathcal{R}^{o}(\bbd)
\end{align}
where $\mathcal{R}^{o}(\bbd)$ is given by the union of $(R,I_e)$
that satisfy
\begin{align}
R&\geq \frac{1}{2} \log
\frac{|\bbk_{X|Y}|}{|\bbd|}=\frac{1}{2}\log\frac{|\bbk_X|}{|\bbf(\bbd)|}-\frac{1}{2}\log\frac{|\bbk_X+\bbsigma_Y|}{|\bbf(\bbd)+\bbsigma_Y|}
\label{rate_bound}\\
I_e & \geq~\min_{\substack{\bzero\preceq \bbk_{X|V}\preceq
\bbk_{X|U}\preceq \bbk_X\\ \bbk_{X|V}\preceq \bbf(\bbd)}}
\frac{1}{2}\log\frac{|\bbk_X|}{|\bbk_{X|V}|}-\frac{1}{2}\log\frac{|\bbk_{X|U}+\bbsigma_Y|}{|\bbk_{X|V}+\bbsigma_Y|}
+\frac{1}{2} \log\frac{|\bbk_{X|U}+\bbsigma_Z|}{|\bbsigma_Z|}
\label{leakage_bound}
\end{align}
and
$\bbf(\bbd)=\bbsigma_Y(\bbsigma_Y-\bbd)^{-1}\bbsigma_Y-\bbsigma_Y$.
\end{Theo}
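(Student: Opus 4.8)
The plan follows the route the text announces: bound the rate constraint and the leakage constraint of Theorem~\ref{theorem_Gauss_single} \emph{separately} over all $U,V$ satisfying $U\to V\to\bbx\to\bby,\bbz$ and $\bbd\succeq\bbk_{X|VY}$, and then intersect. Every achievable pair in $\mathcal{R}(\bbd)$ comes from one such $(U,V)$ that must obey \emph{both} constraints, so $\mathcal{R}(\bbd)$ lies inside the set of $(R,I_e)$ for which $R$ satisfies the rate bound for some admissible $V$ and $I_e$ satisfies the leakage bound for some admissible $(U,V)$; this larger set is $\mathcal{R}^{o}(\bbd)$.

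For~(\ref{rate_bound}), write $I(V;\bbx|\bby)=h(\bbx|\bby)-h(\bbx|V,\bby)$: the first term is the differential entropy of a Gaussian with covariance $\bbk_{X|Y}$, and $h(\bbx|V,\bby)$ is at most that of a Gaussian with covariance equal to the MSE matrix $\bbk_{X|VY}$, which is $\preceq\bbd$ by hypothesis, giving $R\geq\frac{1}{2}\log(|\bbk_{X|Y}|/|\bbd|)$. The second equality in~(\ref{rate_bound}) is then the identities $\bbk_{X|Y}=(\bbk_X^{-1}+\bbsigma_Y^{-1})^{-1}$ and $\bbf(\bbd)=(\bbd^{-1}-\bbsigma_Y^{-1})^{-1}$; the latter is the point of the definition of $\bbf$, which maps the covariance of a Gaussian prior on $\bbx$ to its posterior covariance after the additive-$\bbsigma_Y$ channel, so that ``posterior $\preceq\bbd$'' reads as ``prior $\preceq\bbf(\bbd)$''.

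For~(\ref{leakage_bound}), use the Markov chain to rewrite $I(V;\bbx)-I(V;\bby|U)+I(\bbx;\bbz|U)=\big[h(\bbx)-h(\bbx|V,\bby)\big]+\big[I(\bbx;\bbz|U)-I(\bbx;\bby|U)\big]$, separating a part controlled by $V$ and the distortion from one controlled by $U$ and $\bbsigma_Y,\bbsigma_Z$. Put $\bbk_{X|U}:=E[(\bbx-E[\bbx|U])(\bbx-E[\bbx|U])^{\top}]$ and, crucially, $\bbk_{X|V}:=(\bbk_{X|VY}^{-1}-\bbsigma_Y^{-1})^{-1}$, so that $\bbk_{X|VY}=(\bbk_{X|V}^{-1}+\bbsigma_Y^{-1})^{-1}$ by construction and $\bbk_{X|VY}\preceq\bbd$ is \emph{exactly} $\bbk_{X|V}\preceq\bbf(\bbd)$. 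Positive-definiteness of $\bbk_{X|V}$ follows from $\bbk_{X|VY}\preceq\bbd\preceq\bbk_{X|Y}\prec\bbsigma_Y$. For $\bbk_{X|V}\preceq\bbk_{X|U}$, use the operator concavity of $\bbsigma\mapsto(\bbsigma^{-1}+\bbsigma_Y^{-1})^{-1}$ — a Jensen step over the conditional law of $\bbx$ given $V$, with the conditional covariance bounded by its linear-MMSE value — to get $\bbk_{X|VY}\preceq\big(\bbm_V^{-1}+\bbsigma_Y^{-1}\big)^{-1}$ where $\bbm_V:=E[(\bbx-E[\bbx|V])(\bbx-E[\bbx|V])^{\top}]$; hence $\bbk_{X|V}\preceq\bbm_V$, and $\bbm_V\preceq\bbk_{X|U}$ along $U\to V\to\bbx$, while $\bbk_{X|U}\preceq\bbk_X$ is immediate. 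Bounding $h(\bbx|V,\bby)$ once more by the Gaussian with covariance $\bbk_{X|VY}$ makes the first bracket at least $\frac{1}{2}\log(|\bbk_X|/|\bbk_{X|VY}|)$; subtracting this from the objective of~(\ref{leakage_bound}) leaves exactly $\frac{1}{2}\log(|\bbk_{X|U}+\bbsigma_Z|/|\bbk_{X|U}+\bbsigma_Y|)-\frac{1}{2}\log(|\bbsigma_Z|/|\bbsigma_Y|)$.

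So the crux is the second bracket: since $I(\bbx;\bbz|U)-I(\bbx;\bby|U)=\big[h(\bbx+\bbn_Z|U)-h(\bbx+\bbn_Y|U)\big]-\frac{1}{2}\log(|\bbsigma_Z|/|\bbsigma_Y|)$, what must be shown is $h(\bbx+\bbn_Z|U)-h(\bbx+\bbn_Y|U)\geq\frac{1}{2}\log(|\bbk_{X|U}+\bbsigma_Z|/|\bbk_{X|U}+\bbsigma_Y|)$, i.e.\ that a conditionally Gaussian $\bbx$ minimizes the difference of the two post-noise entropies for a prescribed conditional covariance (the minimization over $\bbk_{X|V}\preceq\bbk_{X|U}\preceq\bbk_X$ merely absorbs whichever matrix we plug in). This resists a term-by-term treatment — the lower bound on $h(\bbx+\bbn_Z|U)$ one would want is false in general — so it needs a genuine vector entropy-power argument; and since $\bbsigma_Y$ and $\bbsigma_Z$ need not be ordered, I expect the real work to be a channel-enhancement step that first replaces $(\bbsigma_Y,\bbsigma_Z)$ by an ordered pair dominating the entropy difference, followed by a conditional, covariance-constrained extremal inequality (a vector Costa-type entropy-power inequality) that delivers the Gaussian optimality. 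Granting this, the two brackets sum to the objective of~(\ref{leakage_bound}) at the admissible pair $(\bbk_{X|V},\bbk_{X|U})$, hence to at least its minimum, and combined with the rate bound this yields $\mathcal{R}(\bbd)\subseteq\mathcal{R}^{o}(\bbd)$. The extension to the general model~(\ref{most_1})--(\ref{most_2}) I would leave to Section~\ref{sec:general_case}, as the text announces.
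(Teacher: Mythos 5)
Your rate bound is essentially the paper's, and your reformulation of the leakage cost as $\big[h(\bbx)-h(\bbx|V,\bby)\big]+\big[I(\bbx;\bbz|U)-I(\bbx;\bby|U)\big]$ is a valid rearrangement of the paper's decomposition $I(V;\bbx|\bby)+\big[I(U;\bby)+I(\bbx;\bbz|U)\big]$. The feasibility of the covariance pair you define is also in the right spirit (the paper proves $\bbk_{X|V}\preceq\bbk_{X|U}$ via Lemma~\ref{lemma_conditioning_reduces_mmse} and the equivalence $\bbk_{X|VY}\preceq\bbd\Leftrightarrow\bbk_{X|V}\preceq\bbf(\bbd)$ via Lemma~\ref{lemma_equivalence_1}, though it only invokes the latter for Gaussian $V$ and therefore never needs your Jensen/operator-concavity step). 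The gap is in your treatment of the $U$-bracket, and it is not just an unproved lemma: the pointwise inequality you identify as the crux,
\begin{align}
h(\bbx+\bbn_Z|U)-h(\bbx+\bbn_Y|U)\;\geq\;\tfrac{1}{2}\log\frac{|\bbk_{X|U}+\bbsigma_Z|}{|\bbk_{X|U}+\bbsigma_Y|},
\end{align}
is \emph{false} in general. If $\bbsigma_Y\succeq\bbsigma_Z$, write $\bbn_Y=\bbn_Z+\bbn'$ and integrate the de Bruijn identity against the Cramer--Rao bound $\bbj(\bbx+\bbn|U)\succeq(\bbk_{X|U}+\bbsigma_N)^{-1}$: that gives precisely the \emph{reverse} inequality, i.e.\ the conditionally Gaussian choice \emph{maximizes} the entropy difference. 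So there is no per-$(U,V)$ bound of the form you are aiming for, and ``the minimization merely absorbs whichever matrix we plug in'' does not rescue it, because you are still trying to dominate the objective evaluated at your induced $\bbk_{X|U}$.

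The paper's argument is structurally different and cannot be recovered as a pointwise estimate. It never bounds the cost at a generic $(U,V)$. Instead it proves $L(0,1)=L^G$ globally, in four moves: (i) it shows the Gaussian-restricted minimum $L^G$ equals the relaxed matrix program $\bar L^G$ of (\ref{leakage_bound}) (Lemma~\ref{lemma_equivalence_2}, by explicit construction of a Gaussian $(U^G,V^G)$ realizing any admissible covariance pair); (ii) it extracts the KKT multipliers $\bbm_U,\bbm_D,\bbm_X$ at the minimizer $(\bbk_{X|V}^*,\bbk_{X|U}^*)$ of $\bar L^G$ (Lemma~\ref{lemma_KKT}); (iii) it uses those multipliers to construct a \emph{single} enhanced noise $\tilde\bbsigma_Y$ with $\tilde\bbsigma_Y\preceq\bbsigma_Y$ and $\tilde\bbsigma_Y\preceq\bbsigma_Z$ (Lemma~\ref{lemma_enhancement}), and defines $\bar L$ with $\bby$ replaced by $\tilde\bby$, which is a lower bound on $L(0,1)$ because $I(V;\bby|U)\leq I(V;\tilde\bby|U)$; (iv) it solves $\bar L$ exactly — here, and only here, does the Fisher-information/de Bruijn machinery enter (Lemma~\ref{lemma_lower_bound}), and the degradedness $\bbx\to\tilde\bby\to\bbz$ makes $U=\phi$ optimal so the problem collapses to a single-variable Wyner--Ziv-type extremal problem — and shows via the enhancement identities that $\bar L=\bar L^G$. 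Your ``channel-enhancement step that first replaces $(\bbsigma_Y,\bbsigma_Z)$ by an ordered pair'' cannot be a preprocessing of the noises independent of $(U,V)$: the enhancement is \emph{determined by} the KKT conditions of the Gaussian program, so the Gaussian program must be solved first. That circularity is the whole difficulty, and your plan does not contain a mechanism to break it. Everything before the $U$-bracket is fine; the $U$-bracket step as written would need to be abandoned and replaced by the paper's global KKT-plus-enhancement argument.
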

We will prove Theorem~\ref{theorem_outer} in
Section~\ref{sec:proof_of_theorem_outer}. In the remainder of this
section, we provide interpretations and discuss some implications
of Theorem~\ref{theorem_outer}.

The outer bound in Theorem~\ref{theorem_outer} is obtained by
minimizing the constraints on $R$ and $I_e$ individually, i.e.,
the rate lower bound in (\ref{rate_bound}) is obtained by
minimizing the rate constraint in (\ref{rate_constraint}) and the
mutual information leakage lower bound in (\ref{leakage_bound}) is
obtained by minimizing the mutual information leakage constraint
in (\ref{leakage_constraint}) separately. However, to characterize
the rate and mutual information leakage region
$\mathcal{R}(\bbd)$, one needs to minimize the rate constraint in
(\ref{rate_constraint}) and the mutual leakage information
constraint in (\ref{leakage_constraint}) jointly, not separately.
In particular, since the region $\mathcal{R}(\bbd)$ is convex in
the pairs $(R,I_e)$ as per a time-sharing argument, joint
optimization of the rate constraint in (\ref{rate_constraint}) and
the mutual information leakage constraint in
(\ref{leakage_constraint}) can be carried out by considering the
tangent lines to the region $\mathcal{R}(\bbd)$, i.e., by solving
the following optimization problem
\begin{align}
L(\mu_1,\mu_2)&=\min_{(R,I_e)\in\mathcal{R}(\bbd)}~\mu_1R+\mu_2I_e\\
&=\min_{\substack{U\rightarrow V\rightarrow \bbx \rightarrow
\bby,\bbz\\\bbk_{X|VY}\preceq \bbd}}
\mu_1\left[I(V;\bbx)-I(V;\bby)\right]+\mu_2\left[I(V;\bbx)-I(V;\bby|U)+I(\bbx;\bbz|U)\right]
\label{tangent_lines}
\end{align}
for all values of $\mu_1,\mu_2$, where
$\mu_j\in[0,\infty),~j=1,2$. As of now, we have been unable to
solve the optimization problem $L(\mu_1,\mu_2)$ for all values of
$(\mu_1,\mu_2)$. However, as stated in
Theorem~\ref{theorem_outer}, we solve the optimization problems
$L(0,\mu_2)$ and $L(\mu_1,0)$ by showing the optimality of jointly
Gaussian $(U,V,\bbx)$ to evaluate the corresponding cost
functions. In other words, our outer bound in
Theorem~\ref{theorem_outer} can be written as follows.
\begin{align}
R&\geq L(1,0)\\
I_e&\geq L(0,1)
\end{align}
We note that the constraint in (\ref{rate_bound}), and hence
$L(1,0)$, gives us the Wyner-Ziv rate distortion
function~\cite{Wyner_Ziv} for the vector Gaussian sources.
Moreover, we note that $L(0,1)$ gives us the minimum mutual
information leakage to the eavesdropper when the legitimate user
wants to reconstruct the source within a fixed distortion
constraint $\bbd$ while there is no concern on the transmission
rate $R$. Denoting the minimum mutual information leakage to the
eavesdropper when the legitimate user needs to reconstruct the
source within a fixed distortion constraint $\bbd$ by
$I_e^{\min}(\bbd)$, the corresponding result can be stated as
follows.
\begin{Theo}
\label{theorem_Ie_min} When $\bbd \preceq \bbk_{X|Y}$, we have
\begin{align}
I_e^{\min}(\bbd)=~\min_{\substack{\bzero\preceq \bbk_{X|V}\preceq
\bbk_{X|U}\preceq \bbk_X\\ \bbk_{X|V}\preceq \bbf(\bbd)}}
\frac{1}{2}\log\frac{|\bbk_X|}{|\bbk_{X|V}|}-\frac{1}{2}\log\frac{|\bbk_{X|U}+\bbsigma_Y|}{|\bbk_{X|V}+\bbsigma_Y|}
+\frac{1}{2} \log\frac{|\bbk_{X|U}+\bbsigma_Z|}{|\bbsigma_Z|}
\label{Ie_min_optimization}
\end{align}
where
$\bbf(\bbd)=\bbsigma_Y(\bbsigma_Y-\bbd)^{-1}\bbsigma_Y-\bbsigma_Y$.
\end{Theo}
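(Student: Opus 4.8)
The statement is an equality, so I would split it into a lower bound and a matching upper bound on $I_e^{\min}(\bbd)$; write $\Psi(\bbd)$ for the value of the minimization on the right-hand side of~(\ref{Ie_min_optimization}). The lower bound $I_e^{\min}(\bbd)\ge\Psi(\bbd)$ is immediate from Theorem~\ref{theorem_outer}: since $I_e^{\min}(\bbd)=\inf\{I_e:(R,I_e)\in\mathcal{R}(\bbd)\text{ for some }R\}=L(0,1)$, and Theorem~\ref{theorem_outer} gives $\mathcal{R}(\bbd)\subseteq\mathcal{R}^{o}(\bbd)$ while every pair in $\mathcal{R}^{o}(\bbd)$ satisfies~(\ref{leakage_bound}), i.e., $I_e\ge\Psi(\bbd)$, the claim follows. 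Hence the only remaining work is the achievability bound $I_e^{\min}(\bbd)\le\Psi(\bbd)$, for which, by Theorem~\ref{theorem_Gauss_single}, it suffices to exhibit a jointly Gaussian triple $(U,V,\bbx)$ with $U\rightarrow V\rightarrow\bbx$, satisfying $\bbk_{X|VY}\preceq\bbd$, whose leakage $I(V;\bbx)-I(V;\bby|U)+I(\bbx;\bbz|U)$ equals $\Psi(\bbd)$; the accompanying rate $I(V;\bbx|\bby)$ is finite and plays no further role.

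To build this triple, first note that the feasible set in~(\ref{Ie_min_optimization}) is compact and the objective is finite and lower semicontinuous (it is $+\infty$ only on the boundary $|\bbk_{X|V}|=0$, and $\bbk_{X|V}=\bbk_{X|U}=\bbf(\bbd)$ is feasible with finite value, using $\bzero\prec\bbf(\bbd)\preceq\bbk_X$, which follows from $\bzero\prec\bbd\preceq\bbk_{X|Y}$ and the monotonicity of $\bbf$; the case of singular $\bbd$ is degenerate, both sides being $+\infty$). So a minimizer $(\bbk_{X|V}^{\star},\bbk_{X|U}^{\star})$ exists with $\bzero\prec\bbk_{X|V}^{\star}\preceq\bbk_{X|U}^{\star}\preceq\bbk_X$ and $\bbk_{X|V}^{\star}\preceq\bbf(\bbd)$. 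I would then invoke the standard Gaussian test-channel construction: because of the nesting $\bbk_{X|V}^{\star}\preceq\bbk_{X|U}^{\star}\preceq\bbk_X$, there is a jointly Gaussian triple $(U,V,\bbx)$ with $U\rightarrow V\rightarrow\bbx$, $\mathrm{Cov}(\bbx|V)=\bbk_{X|V}^{\star}$ and $\mathrm{Cov}(\bbx|U)=\bbk_{X|U}^{\star}$ (take $V$ a noisy linear observation of $\bbx$ realizing the first covariance and $U$ a further degraded Gaussian observation of $V$ realizing the second). Appending the physical channels~(\ref{aligned_1})--(\ref{aligned_2}) extends the chain to $U\rightarrow V\rightarrow\bbx\rightarrow\bby,\bbz$.

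Next I would check feasibility of the distortion constraint and compute the leakage for this triple. For jointly Gaussian $(U,V,\bbx)$ with the above covariances, $\bbk_{X|VY}=((\bbk_{X|V}^{\star})^{-1}+\bbsigma_Y^{-1})^{-1}$; from $\bbf(\bbd)=\bbsigma_Y(\bbsigma_Y-\bbd)^{-1}\bbd$ one obtains $\bbf(\bbd)^{-1}=\bbd^{-1}-\bbsigma_Y^{-1}$, and since $\bbs\mapsto(\bbs^{-1}+\bbsigma_Y^{-1})^{-1}$ is operator monotone, $\bbk_{X|V}^{\star}\preceq\bbf(\bbd)$ gives $\bbk_{X|VY}\preceq\bbd$, so the hypothesis $\bbd\succeq\bbk_{X|VY}$ of Theorem~\ref{theorem_Gauss_single} holds. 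For the leakage, the Markov chain yields $h(\bby|V,U)=h(\bby|V)$ and $h(\bbz|\bbx,U)=h(\bbz|\bbx)$, so the three terms evaluate to $I(V;\bbx)=\frac{1}{2}\log(|\bbk_X|/|\bbk_{X|V}^{\star}|)$, $I(V;\bby|U)=\frac{1}{2}\log(|\bbk_{X|U}^{\star}+\bbsigma_Y|/|\bbk_{X|V}^{\star}+\bbsigma_Y|)$ and $I(\bbx;\bbz|U)=\frac{1}{2}\log(|\bbk_{X|U}^{\star}+\bbsigma_Z|/|\bbsigma_Z|)$, whose combination equals the objective of~(\ref{Ie_min_optimization}) evaluated at $(\bbk_{X|V}^{\star},\bbk_{X|U}^{\star})$, i.e., $\Psi(\bbd)$. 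Therefore $(I(V;\bbx|\bby),\Psi(\bbd),\bbd)\in\mathcal{R}$ by Theorem~\ref{theorem_Gauss_single}, which gives $I_e^{\min}(\bbd)\le\Psi(\bbd)$ and hence the asserted equality.

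Granting Theorem~\ref{theorem_outer}, the only mildly delicate steps are those above: realizing two prescribed nested conditional covariances by a single jointly Gaussian triple obeying the Markov chain, and converting the distortion constraint $\bbk_{X|VY}\preceq\bbd$ into the matrix constraint $\bbk_{X|V}\preceq\bbf(\bbd)$ via the identity $\bbf(\bbd)^{-1}=\bbd^{-1}-\bbsigma_Y^{-1}$ together with operator monotonicity. The genuinely hard part is contained in Theorem~\ref{theorem_outer} itself: a self-contained proof would instead have to show directly that jointly Gaussian auxiliaries minimize the a priori non-Gaussian functional $I(V;\bbx)-I(V;\bby|U)+I(\bbx;\bbz|U)$ over all $U\rightarrow V\rightarrow\bbx\rightarrow\bby,\bbz$ with $\bbk_{X|VY}\preceq\bbd$, which is the role of the extremal-inequality / channel-enhancement argument deferred to the proof of Theorem~\ref{theorem_outer}.
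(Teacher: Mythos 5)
Your proof is correct and takes essentially the same route as the paper: the converse is Theorem~\ref{theorem_outer} (whose proof, via KKT conditions and channel enhancement, shows $L(0,1)\geq\bar{L}^{G}$), and your achievability argument---realizing two nested prescribed conditional covariances by a pair of Gaussian test channels in a Markov chain and translating $\bbk_{X|VY}\preceq\bbd$ into $\bbk_{X|V}\preceq\bbf(\bbd)$---is precisely the content of the paper's Lemmas~\ref{lemma_equivalence_1} and~\ref{lemma_equivalence_2}. The paper simply packages both directions inside the proof of Theorem~\ref{theorem_outer} and reads Theorem~\ref{theorem_Ie_min} off the resulting chain of equalities $L(0,1)=L^{G}=\bar{L}^{G}=\bar{L}$, whereas you present it as a separate converse/achievability pair; note that the ``further degraded observation'' step you invoke for $U$ is where the paper needs the simultaneous-diagonalization construction of Lemma~\ref{lemma_simul_diag}, so it is not entirely standard.
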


Theorem~\ref{theorem_Ie_min} implies that if the transmitter's aim
is to minimize the mutual information leakage to the eavesdropper
without concerning itself with the rate it costs as long as the
legitimate receiver is able to reconstruct the source within a
distortion constraint $\bbd$, the use of jointly Gaussian
$(U,V,\bbx)$ is optimal. Since in Theorem~\ref{theorem_Ie_min},
there is no rate constraint, one natural question to ask is
whether $I_e^{\rm min}(\bbd)$ can be achieved by an uncoded
transmission scheme. Now, we address this question in a broader
context by letting the encoder use any {\it instantaneous}
encoding function in the form of $g_{i}(\bbx_i)$ where
$g_i(\cdot)$ can be a deterministic or a stochastic mapping. When
$g_i(\cdot)$ is chosen to be stochastic, we assume it to be
independent across time. We note that the uncoded transmission can
be obtained from instantaneous encoding by selecting $g_i(\cdot)$
to be a linear function. Similarly, uncoded transmission with
artificial noise can be obtained from instantaneous encoding by
selecting $g_i(x)=\alpha x+ N$, where $N$ denotes the noise.
Hence, if the encoder uses an instantaneous encoding scheme, the
transmitted signal is given by
$M=\left[~g_1(\bbx_1),\ldots,g_n(\bbx_n)~\right]$. Let $I_e^{\rm
ins}(\bbd)$ be the minimum information leakage to the eavesdropper
when the legitimate user is able to reconstruct the source with a
distortion constraint $\bbd$ while the encoder uses an
instantaneous encoding. The following example demonstrates that,
in general, $I_e^{\rm min}(\bbd)$ cannot be achieved by
instantaneous encoding.

\begin{Example} Consider the scalar case, where the side
information at the legitimate user and the eavesdropper are given
as follows
\begin{align}
Y_i&=X_i+N_{y,i} \label{scalar_y}\\
Z_i&=X_i+N_{z,i} \label{scalar_z}
\end{align}
where $X_i,N_{y,i}$ and $N_{z,i}$ are zero-mean Gaussian random
variables with variances $\sigma_x^2,\sigma_y^2$ and $\sigma_z^2$,
respectively. $\{X_i\}_{i=1}^n,\{N_{y,i}\}_{i=1}^n$ and
$\{N_{z,i}\}_{i=1}^n$ are independent. We assume that
$\sigma_y^2<\sigma_z^2$, which implies that we can assume
$X\rightarrow Y\rightarrow Z$ since the scalar model in
(\ref{scalar_y})-(\ref{scalar_z}) is statistically degraded, or in
other words, the correlation between $N_{y,i}$ and $N_{z,i}$ does
not affect the achievable $(R,I_e,D)$ region. Using
Theorem~\ref{theorem_Gauss_single}, $I_e^{\rm min}(D)$ for the
scalar Gaussian channel under consideration can be found as
follows
\begin{align}
I_e^{\rm min}(D)&=\min_{\substack{U\rightarrow V\rightarrow
X\rightarrow Y \rightarrow Z\\ \sigma_{x|vy}^2 \leq D}}
I(V;X)-I(V;Y|U)+I(X;Z|U) \\
&=\min_{\substack{ V\rightarrow X\rightarrow Y \rightarrow Z\\
\sigma_{x|vy}^2 \leq D}} I(V;X)-I(V;Y)+I(X;Z)
\label{mc_degraded_implies_again}
\end{align}
where in (\ref{mc_degraded_implies_again}), we used the Markov
chain $U\rightarrow V\rightarrow X\rightarrow Y\rightarrow Z$.

As shown in Appendix~\ref{proof_of_I_e_ins_0}, the information
leakage to the eavesdropper when the encoder uses an instantaneous
mapping is given by
\begin{align}
I_e^{\rm ins}(D)&=\min_{\substack{V\rightarrow X\rightarrow
Y\rightarrow Z\\\sigma_{x|vy}^2\leq D}} I(X;V,Z)\label{I_e_inst_min_0}\\
&=\min_{\substack{V\rightarrow X\rightarrow Y\rightarrow
Z\\\sigma_{x|vy}^2\leq D}} I(V;X)-I(V;Z)+I(X;Z)
\label{I_e_inst_min}
\end{align}
where (\ref{I_e_inst_min}) is obtained by using the Markov chain
$V\rightarrow X\rightarrow Z$.

Using (\ref{mc_degraded_implies_again}) and (\ref{I_e_inst_min}),
we have
\begin{align}
I_e^{\rm ins}(D)-I_e^{\rm min}(D)&=\min_{\substack{V\rightarrow
X\rightarrow Y\rightarrow Z\\\sigma_{x|vy}^2\leq D}}
I(V;X)-I(V;Z)+I(X;Z)
\nonumber\\
&\quad - \min_{\substack{ V\rightarrow X\rightarrow Y \rightarrow Z\\
\sigma_{x|vy}^2 \leq D}} I(V;X)-I(V;Y)+I(X;Z)\\
&\geq \min_{\substack{V\rightarrow X\rightarrow Y\rightarrow
Z\\\sigma_{x|vy}^2\leq D}} I(V;Y)-I(V;Z)\\
&= \min_{\substack{V\rightarrow X\rightarrow Y\rightarrow
Z\\\sigma_{x|vy}^2\leq D}} I(V;Y|Z) \label{some_mc_should_0}
\end{align}
where (\ref{some_mc_should_0}) comes from the Markov chain
$V\rightarrow Y\rightarrow Z$. Next, we note the following lemma.
\begin{Lem}
\label{lemma_strictly_nonzero} For jointly Gaussian $(X,Y,Z)$
satisfying the Markov chain $X\rightarrow Y\rightarrow Z$ and
$\Pr[Y=Z]\neq 1$, if $D<\sigma^2_{x|y}$, we have
\begin{align}
\min_{\substack{V\rightarrow X\rightarrow Y\rightarrow
Z\\\sigma_{x|vy}^2\leq D}} I(V;Y|Z)>0 \label{some_mc_should_00}
\end{align}
\end{Lem}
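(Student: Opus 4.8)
The plan is to prove the stronger statement that the minimand stays bounded away from $0$ over the whole feasible set, not merely that it is nonzero for each $V$.

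\textbf{Reducing the objective.} Along $V\to X\to Y\to Z$ one has $I(V;Z\mid Y)=0$ (since $I(V,X;Z\mid Y)=0$ and the left side dominates $I(V;Z\mid Y)$), so $I(V;Y\mid Z)=I(V;Y)-I(V;Z)$. Because $(X,Y,Z)$ is jointly Gaussian with $X\to Y\to Z$, write $Z=cY+W$ with $W\perp Y$; a short computation shows $W$ is uncorrelated with $X$, hence independent of $X$, and then the Markov chain gives $W\perp V$ as well, with $\sigma_W^2:=\mathrm{Var}(W)>0$ (this is the point at which the hypothesis $\Pr[Y=Z]\neq1$ — $Z$ being a genuinely noisier observation than $Y$ — enters). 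Applying the conditional entropy-power inequality to $Z\mid V=(cY+W)\mid V$ and using $e^{2h(Y\mid V)}=2\pi e\,\mathrm{Var}(Y)\,e^{-2I(V;Y)}$, the routine algebra gives
\[
I(V;Y\mid Z)\ \ge\ \tfrac12\log\frac{c^2\mathrm{Var}(Y)+\sigma_W^2\,e^{2I(V;Y)}}{c^2\mathrm{Var}(Y)+\sigma_W^2}\ =:\ \Phi\bigl(I(V;Y)\bigr),
\]
where $\Phi$ is continuous and strictly increasing with $\Phi(0)=0$. Hence it suffices to show that $I(V;Y)$ is bounded below by a positive constant over all feasible $V$ (those satisfying $V\to X\to Y\to Z$ and $\sigma^2_{x|vy}\le D$).

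\textbf{Pointwise positivity.} For any feasible $V$, if $I(V;Y)=0$ then $V\perp Y$; since the channel $X\mapsto Y=X+N_{y}$ is injective on distributions (convolution with a Gaussian density), $V\perp Y$ forces $V\perp X$, whence $\sigma^2_{x|vy}=\sigma^2_{x|y}>D$, contradicting feasibility. So $I(V;Y)>0$ for every feasible $V$. (One can run the same argument directly on the objective: $I(V;Y\mid Z)=0$ means $V\perp Y\mid Z$, and combined with $V\perp Z\mid Y$ and the strict positivity of $p(v,y,z)$ the intersection property yields $V\perp(Y,Z)$, hence $V\perp Y$, hence $V\perp X$.) This already shows the minimand is never $0$.

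\textbf{Uniformity.} To turn pointwise positivity into a positive infimum I would argue that for this minimization it is no loss to restrict to $V$ jointly Gaussian with $X$, i.e. $V=X+N_V$ with $N_V$ independent Gaussian; the feasible parameter set is then $\{\mathrm{Var}(N_V)\in[0,d^\ast]\}$, which is compact because $\sigma^2_{x|vy}\uparrow\sigma^2_{x|y}>D$ as $\mathrm{Var}(N_V)\to\infty$. On that compact set $I(V;Y)$ is continuous and everywhere positive by the previous paragraph, so $\inf I(V;Y)>0$, and the displayed bound then gives $I(V;Y\mid Z)\ge\Phi(\inf I(V;Y))>0$ uniformly, which is the claim. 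The Gaussian reduction itself would be obtained by an entropy-power manipulation of the same flavor as in the proof of Theorem~\ref{theorem_outer}.

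\textbf{Main obstacle.} The crux is exactly this last reduction — equivalently, the \emph{uniform} lower bound on $I(V;Y)$ over the feasible set. The pointwise bound is immediate, but a merely-nonzero minimand does not by itself force a strictly positive infimum, and the reason is genuine: a non-Gaussian $V$ (e.g.\ $V=|X|$) can satisfy the distortion constraint while keeping $\mathrm{Var}(Y\mid V)=\mathrm{Var}(Y)$, so that $I(V;Y)>0$ there is produced by the non-Gaussianity of $Y\mid V$ rather than by any variance reduction. Controlling this uniformly is delicate, and it is where passing to the compact (Gaussian) family does the essential work.
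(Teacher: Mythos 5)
Your route is correct but genuinely different from the paper's, and you have in fact put your finger on a gap in the paper's own proof.

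The paper argues purely pointwise: it observes that $I(V;Y\mid Z)=0$ is equivalent to the Markov chain $V\to Z\to Y$, and then, via two characteristic-function lemmas (Lemmas~\ref{lemma_dummy_1} and~\ref{lemma_dummy_2} in Appendix~\ref{proof_of_lemma_strictly_nonzero}), shows that this chain together with $V\to X\to Y\to Z$ forces $V\perp X$, which is ruled out by $\sigma^2_{x|vy}\le D<\sigma^2_{x|y}$. Your pointwise step reaches the same conclusion by a cleaner, quantitative route: the conditional EPI applied to $Z=cY+W$ gives $I(V;Y\mid Z)\ge\Phi(I(V;Y))$ with $\Phi$ strictly increasing and $\Phi(0)=0$, and then injectivity of Gaussian convolution (equivalently, the intersection property for conditional independence, which is legitimate here since $p(y,z\mid v)>0$ for all $v$ in the support) shows $I(V;Y)>0$ for every feasible $V$. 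Both arguments establish positivity of the minimand for each fixed $V$; the paper's argument is more elementary, yours is more quantitative and modular.

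Where you go beyond the paper is in noticing that pointwise positivity does not by itself give a positive \emph{minimum}. That concern is genuine — and it applies equally to the paper's proof, which never addresses whether the infimum is attained. However, your proposed fix (Gaussian reduction plus compactness) is only sketched, and it is also unnecessary: the very EPI device you used for $Z=cY+W$, applied one level lower to $Y=X+N_y$ with $N_y\perp(V,X)$, yields the uniform bound you want directly. Indeed, the distortion and Gaussian-maximum-entropy arguments give $I(V;X)\ge I(V;X\mid Y)=h(X\mid Y)-h(X\mid V,Y)\ge\frac12\log(\sigma^2_{x|y}/D)>0$ for every feasible $V$, and the conditional EPI on $Y=X+N_y$ then gives
\begin{align}
I(V;Y)\ \ge\ \frac12\log\frac{\sigma_x^2+\sigma_y^2\,e^{2I(V;X)}}{\sigma_x^2+\sigma_y^2}\ \ge\ \frac12\log\frac{\sigma_x^2+\sigma_y^2\,\sigma^2_{x|y}/D}{\sigma_x^2+\sigma_y^2}\ >\ 0 ,
\end{align}
where the last inequality is exactly $D<\sigma^2_{x|y}$. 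Feeding this uniform lower bound on $I(V;Y)$ into your $\Phi$ gives a strictly positive constant lower-bounding $I(V;Y\mid Z)$ over the whole feasible set, which both completes your proof and repairs the implicit gap in the paper's. One minor remark: for $\Phi$ to be strictly increasing you need $\sigma_W^2=\mathrm{Var}(Z\mid Y)>0$, which is slightly stronger than the stated hypothesis $\Pr[Y=Z]\neq 1$ (it rules out any $Z=cY$, not just $Z=Y$); the paper's own proof uses the same stronger assumption implicitly by writing $Z=X+N_y+\tilde N_z$ with $\mathrm{Var}(\tilde N_z)=\sigma_z^2-\sigma_y^2>0$.
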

The proof of Lemma~\ref{lemma_strictly_nonzero} can be found in
Appendix~\ref{proof_of_lemma_strictly_nonzero}. The proof of
Lemma~\ref{lemma_strictly_nonzero} starts with the observation
that (\ref{some_mc_should_00}) is zero iff we have the Markov
chain $V\rightarrow Z\rightarrow Y$. On the other hand, since we
already have the Markov chain $V\rightarrow X\rightarrow
Y\rightarrow Z $, and $Y$ and $Z$ are not identical, we show in
Appendix~\ref{proof_of_lemma_strictly_nonzero} that the Markov
chain $V\rightarrow Z\rightarrow Y$ is possible iff $V$ and $X$
are independent. However, if $D<\sigma_{x|y}^2$, any $V$ that is
independent of $X$ is not feasible. Hence,
Lemma~\ref{lemma_strictly_nonzero} follows.
Lemma~\ref{lemma_strictly_nonzero} implies that in general, we
have $I_e^{\rm ins}(D)\neq I_e^{\rm min} (D)$, i.e., $I_e^{\rm
min}(\bbd)$ cannot be achieved by instantaneous encoding.
\end{Example}
This example shows that an uncoded transmission is not optimal
even when there is no rate constraint. This is due to the presence
of an eavesdropper; the presence of an eavesdropper necessitates
the use of a coded scheme.

Another question that Theorem~\ref{theorem_Ie_min} brings about is
whether the minimum in (\ref{Ie_min_optimization}) is achieved by
a non-trivial $\bbk_{X|U}$. By a trivial selection for
$\bbk_{X|U}$ we mean either $\bbk_{X|U}=\bbk_X$ or
$\bbk_{X|U}=\bbk_{X|V}$. The former corresponds to the selection
$U=\phi$ and the latter corresponds to the selection $U=V$. We
note that although (\ref{Ie_min_optimization}) is monotonically
decreasing in $\bbk_{X|V}$ in the positive semi-definite sense,
(\ref{Ie_min_optimization}) is neither monotonically increasing
nor monotonically decreasing in $\bbk_{X|U}$ in the positive
semi-definite sense. Hence, due to this lack of monotonicity of
$(\ref{Ie_min_optimization})$ in $\bbk_{X|U}$, in general, we
expect that both $U\neq \phi$ and $U\neq V$ may be necessary to
attain the minimum in (\ref{Ie_min_optimization}). The following
example demonstrates that in general $U\neq \phi$ and $U\neq V$
may be necessary.
\begin{Example}
\label{example_double_aux} Consider the Gaussian source
$\bbx=[~X_1~~X_2~]^\top$ where $X_1$ and $X_2$ are independent.
The side information at the legitimate receiver and the
eavesdropper are given by
\begin{align}
Y_\ell &=X_{\ell}+N_{Y,\ell},\quad \ell=1,2 \\
Z_\ell &=X_{\ell}+N_{Z,\ell},\quad \ell=1,2
\end{align}
where $N_{Y,\ell}$ and $N_{Z,\ell}$ are zero-mean Gaussian random
variables with variances $\sigma_{Y,\ell}^2$ and
$\sigma_{Z,\ell}^2$, respectively. Moreover, $N_{Y,1}$ and
$N_{Y,2}$ are independent, and also so are $N_{Z,1}$ and
$N_{Z,2}$. We assume that noise variances satisfy
\begin{align}
\sigma_{Y,1}^2 &< \sigma_{Z,1}^2 \\
\sigma_{Z,2}^2 &< \sigma_{Y,2}^2
\end{align}
which, in view of the fact that correlation between the noise at
the legitimate receiver and the noise at the eavesdropper does not
affect the rate, distortion and information leakage region, lets
us assume the following Markov chains
\begin{align}
X_1\rightarrow Y_1 \rightarrow Z_1 \\
X_2\rightarrow Z_2 \rightarrow Y_2
\end{align}
Moreover, we assume that the distortion constraint $\bbd$ is a
diagonal matrix with diagonal entries $D_1$ and $D_2$. In this
case, the minimum information leakage is given by
\begin{align}
I_e^{\rm min}(D_1,D_2)&=\min_{\substack{V_1\rightarrow
X_1\rightarrow Y_1\rightarrow Z_1\\\sigma_{X_1|V_1Y_1}^2\leq
D_1}}~I(V_1;X_1)-I(V_1;Y_1)+I(X_1;Z_1)\nonumber\\
&\quad +\min_{\substack{V_2\rightarrow X_2\rightarrow
Z_2\rightarrow Y_2\\\sigma_{X_2|V_2Y_2}^2\leq
D_2}}~I(V_2;X_2)+I(X_2;Z_2|V_2) \label{minimum_leakage_parallel}
\end{align}
whose proof can be found in
Appendix~\ref{proof_of_minimum_leakage_parallel}. The minimum
information leakage in (\ref{minimum_leakage_parallel})
corresponds the selections $U=(\phi,V_2)$ and $V=(V_1,V_2)$, where
$(U_1,V_1)$ and $(U_2,V_2)$ are independent. This selection of
$(U,V)$ corresponds to neither $U=\phi$ nor $U=V$.

Next, we obtain the minimum information leakage that arises when
we set either $U=\phi$ or $U=V$, and show that the minimum
information leakage arising from these selections are strictly
larger than the minimum information leakage in
(\ref{minimum_leakage_parallel}), which will imply the
suboptimality of $U=\phi$ and $U=V$. When we set $U=\phi$, the
minimum information leakage is given by
\begin{align}
I_e^{\rm min-\phi}(D_1,D_2)&= \min_{\substack{V_1\rightarrow
X_1\rightarrow Y_1\rightarrow Z_1\\\sigma_{X_1|V_1Y_1}^2\leq
D_1}}~I(V_1;X_1)-I(V_1;Y_1)+I(X_1;Z_1)\nonumber\\
&\quad +\min_{\substack{V_2\rightarrow X_2\rightarrow
Z_2\rightarrow Y_2\\\sigma_{X_2|V_2Y_2}^2\leq
D_2}}~I(V_2;X_2)-I(V_2;Y_2)+I(X_2;Z_2) \label{minimum_leakage_phi}
\end{align}
whose proof is given in
Appendix~\ref{proof_of_minimum_leakage_single}. When we set $U=V$,
the minimum information leakage is given by
\begin{align}
I_e^{\rm min-S}(D_1,D_2)&=\min_{\substack{V_1\rightarrow
X_1\rightarrow Y_1\rightarrow Z_1\\\sigma_{X_1|V_1Y_1}^2\leq
D_1}}~I(V_1;X_1)+I(X_1;Z_1|V_1)\nonumber\\
&\quad +\min_{\substack{V_2\rightarrow X_2\rightarrow
Z_2\rightarrow Y_2\\\sigma_{X_2|V_2Y_2}^2\leq
D_2}}~I(V_2;X_2)+I(X_2;Z_2|V_2) \label{minimum_leakage_single}
\end{align}
whose proof can be found in
Appendix~\ref{proof_of_minimum_leakage_single}.

Now, we compare the minimum information leakage in
(\ref{minimum_leakage_parallel}) with (\ref{minimum_leakage_phi})
and (\ref{minimum_leakage_single}) to show that the selections
$U=\phi$ and $U=V$ are sub-optimal in general. Using
(\ref{minimum_leakage_parallel}) and (\ref{minimum_leakage_phi}),
we get
\begin{align}
I_e^{\rm min-\phi}(D_1,D_2)-I_e^{\rm
min}(D_1,D_2)&=\min_{\substack{V_2\rightarrow X_2\rightarrow
Z_2\rightarrow Y_2\\\sigma_{X_2|V_2Y_2}^2\leq
D_2}}~I(V_2;X_2)-I(V_2;Y_2)+I(X_2;Z_2)\nonumber\\
&\quad -\min_{\substack{V_2\rightarrow X_2\rightarrow
Z_2\rightarrow Y_2\\\sigma_{X_2|V_2Y_2}^2\leq
D_2}}~I(V_2;X_2)+I(X_2;Z_2|V_2)\\
&\geq \min_{\substack{V_2\rightarrow X_2\rightarrow Z_2\rightarrow
Y_2\\\sigma_{X_2|V_2Y_2}^2\leq
D_2}}~I(X_2;Z_2)-I(X_2;Z_2|V_2)-I(V_2;Y_2)\\
&=\min_{\substack{V_2\rightarrow X_2\rightarrow Z_2\rightarrow
Y_2\\\sigma_{X_2|V_2Y_2}^2\leq D_2}}~I(V_2;Z_2)-I(V_2;Y_2)
\label{some_mc_should} \\
&=\min_{\substack{V_2\rightarrow X_2\rightarrow Z_2\rightarrow
Y_2\\\sigma_{X_2|V_2Y_2}^2\leq D_2}}~I(V_2;Z_2|Y_2)
\label{some_mc_should_1}\\
&>0 \label{lemma_strictly_nonzero_implies}
\end{align}
where (\ref{some_mc_should})-(\ref{some_mc_should_1}) follow from
the Markov chain
\begin{align}
V_2 \rightarrow X_2\rightarrow Z_2\rightarrow Y_2
\end{align}
and (\ref{lemma_strictly_nonzero_implies}) comes from
Lemma~\ref{lemma_strictly_nonzero}. Thus, in general, we have $
I_e^{\rm min-\phi}(D_1,D_2)\neq I_e^{\rm min}(D_1,D_2)$, or in
other words, in general, $U=\phi$ is sub-optimal.

Next, we consider the selection $U=V$. Using
(\ref{minimum_leakage_parallel}) and
(\ref{minimum_leakage_single}), we have
\begin{align}
I_e^{\rm min-S}(D_1,D_2)-I_e^{\rm
min}(D_1,D_2)&=\min_{\substack{V_1\rightarrow X_1\rightarrow
Y_1\rightarrow Z_1\\\sigma_{X_1|V_1Y_1}^2\leq
D_1}}~I(V_1;X_1)+I(X_1;Z_1|V_1)\nonumber\\
&\quad -\min_{\substack{V_1\rightarrow X_1\rightarrow
Y_1\rightarrow Z_1\\\sigma_{X_1|V_1Y_1}^2\leq
D_1}}~I(V_1;X_1)-I(V_1;Y_1)+I(X_1;Z_1)\\
&\geq \min_{\substack{V_1\rightarrow X_1\rightarrow Y_1\rightarrow
Z_1\\\sigma_{X_1|V_1Y_1}^2\leq
D_1}}~I(X_1;Z_1|V_1)+I(V_1;Y_1)-I(X_1;Z_1)\\
&=\min_{\substack{V_1\rightarrow X_1\rightarrow Y_1\rightarrow
Z_1\\\sigma_{X_1|V_1Y_1}^2\leq D_1}}~I(V_1;Y_1)-I(V_1;Z_1)
\label{some_mc_should_2}\\
&=\min_{\substack{V_1\rightarrow X_1\rightarrow Y_1\rightarrow
Z_1\\\sigma_{X_1|V_1Y_1}^2\leq D_1}}~I(V_1;Y_1|Z_1)
\label{some_mc_should_3}\\
&>0\label{lemma_strictly_nonzero_implies_1}
\end{align}
where (\ref{some_mc_should_2})-(\ref{some_mc_should_3}) follow
from the Markov chain
\begin{align}
V_1\rightarrow X_1 \rightarrow Y_1 \rightarrow Z_1
\end{align}
and (\ref{lemma_strictly_nonzero_implies_1}) comes from
Lemma~\ref{lemma_strictly_nonzero}. Thus, in general, we have $
I_e^{\rm min-S}(D_1,D_2)\neq I_e^{\rm min}(D_1,D_2)$, or in other
words, in general, $U=V$ is sub-optimal.
\end{Example}

Example~\ref{example_double_aux} shows that, in general, we might
need two covariance matrices, and hence two different auxiliary
random variables, to attain the minimum information leakage.
Indeed, if we have either $U=V$ or $U=\phi$, the corresponding
achievable scheme is identical to the Wyner-Ziv
scheme~\cite{Wyner_Ziv}. Hence, the necessity of two different
auxiliary random variables implies that, in general, Wyner-Ziv
scheme~\cite{Wyner_Ziv} is suboptimal.

\section{Proof of Theorem~\ref{theorem_outer}}

\label{sec:proof_of_theorem_outer}

We now provide the proof of Theorem~\ref{theorem_outer}. As
mentioned in the previous section, this outer bound is obtained by
minimizing the rate constraint in (\ref{rate_constraint}) and the
mutual information leakage constraint in
(\ref{leakage_constraint}) separately. We first consider the rate
constraint in (\ref{rate_constraint}) as follows
\begin{align}
R&\geq L(1,0)\label{dummy_is_dummy}\\
&=\min_{\substack{V\rightarrow\bbx\rightarrow \bby,\bbz \label{future_use_0}\\
\bbk_{X|VY}\preceq \bbd}}~I(V;\bbx|\bby)\\
&= \min_{\substack{V\rightarrow\bbx\rightarrow \bby,\bbz\\
\bbk_{X|VY}\preceq \bbd}}~h(\bbx|\bby)-h(\bbx|V,\bby)\\
&=\min_{\substack{V\rightarrow\bbx\rightarrow \bby,\bbz\\
\bbk_{X|VY}\preceq \bbd}}~\frac{1}{2}\log |(2\pi e)\bbk_{X|Y}|-h(\bbx|V,\bby)\\
&=\min_{\bbk_{X|VY}\preceq \bbd}~\frac{1}{2}\log\frac{|\bbk_{X|Y}|}{|\bbk_{X|VY}|}\label{Gaussian_maximizer}\\
&=\frac{1}{2} \log\frac{|\bbk_{X|Y}|}{|\bbd|}\label{monotonocity}
\end{align}
where (\ref{Gaussian_maximizer}) comes from the fact that
$h(\bbx|V,\bby)$ is maximized by jointly Gaussian $(V,\bbx,\bby)$,
and (\ref{monotonocity}) comes from the monotonicity of $|\cdot|$
in positive semi-definite matrices. Now we introduce the following
lemma.
\begin{Lem}
\label{lemma_equivalence}
\begin{align}
\frac{1}{2}\log\frac{|\bbk_{X|Y}|}{|\bbd|}=\frac{1}{2}\log\frac{|\bbk_X|}{|\bbf(\bbd)|}-\frac{1}{2}\log\frac{|\bbk_X+\bbsigma_Y|}{|\bbf(\bbd)+\bbsigma_Y|}
\label{lemma_equivalence_statement}
\end{align}
\end{Lem}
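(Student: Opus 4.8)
The identity to be established is purely algebraic: it relates the conditional covariance $\bbk_{X|Y}$ to the source covariance $\bbk_X$, the noise covariance $\bbsigma_Y$, and the matrix $\bbf(\bbd)$. The plan is to first rewrite the left-hand side in terms of $\bbk_X$ and $\bbsigma_Y$ using the MMSE formula for jointly Gaussian vectors, and then show that the quantity $\bbf(\bbd)$ is precisely the ``pre-noise'' covariance that produces the post-noise conditional covariance $\bbd$. Concretely, for the Gaussian model $\bby = \bbx + \bbn_Y$ with $\bbn_Y \sim \mathcal N(\bzero, \bbsigma_Y)$ independent of $\bbx$, the standard identity gives $\bbk_{X|Y} = \bbk_X - \bbk_X(\bbk_X+\bbsigma_Y)^{-1}\bbk_X = (\bbk_X^{-1} + \bbsigma_Y^{-1})^{-1}$, and in particular $|\bbk_{X|Y}| = |\bbk_X|\,|\bbsigma_Y|/|\bbk_X+\bbsigma_Y|$.

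The key step is to interpret $\bbf(\bbd) = \bbsigma_Y(\bbsigma_Y-\bbd)^{-1}\bbsigma_Y - \bbsigma_Y$ correctly. I claim that if $\bbk$ is a covariance matrix satisfying $(\bbk^{-1}+\bbsigma_Y^{-1})^{-1} = \bbd$, i.e.\ passing $\bbk$ through the additive $\bbsigma_Y$-noise channel yields conditional covariance $\bbd$, then $\bbk = \bbf(\bbd)$. To verify this, solve $(\bbk^{-1}+\bbsigma_Y^{-1})^{-1} = \bbd$ for $\bbk$: this gives $\bbk^{-1} = \bbd^{-1} - \bbsigma_Y^{-1}$, hence $\bbk = (\bbd^{-1}-\bbsigma_Y^{-1})^{-1}$. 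A short manipulation using the matrix inversion identities — specifically $(\bbd^{-1}-\bbsigma_Y^{-1})^{-1} = \bbsigma_Y(\bbsigma_Y-\bbd)^{-1}\bbd$ and then factoring — shows this equals $\bbsigma_Y(\bbsigma_Y-\bbd)^{-1}\bbsigma_Y - \bbsigma_Y = \bbf(\bbd)$. (One clean route: write $\bbsigma_Y(\bbsigma_Y-\bbd)^{-1}\bbsigma_Y - \bbsigma_Y = \bbsigma_Y(\bbsigma_Y-\bbd)^{-1}[\bbsigma_Y - (\bbsigma_Y-\bbd)] = \bbsigma_Y(\bbsigma_Y-\bbd)^{-1}\bbd$, and separately $(\bbd^{-1}-\bbsigma_Y^{-1})^{-1} = [\bbsigma_Y^{-1}(\bbsigma_Y-\bbd)\bbd^{-1}]^{-1} = \bbd(\bbsigma_Y-\bbd)^{-1}\bbsigma_Y$, which equals the former after noting these two expressions are transposes of symmetric matrices, hence equal.) This requires $\bbsigma_Y - \bbd \succ \bzero$, which follows from the standing assumption $\bbd \preceq \bbk_{X|Y} \prec \bbsigma_Y$.

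Given this, the right-hand side of \eqref{lemma_equivalence_statement} is
\begin{align}
\frac{1}{2}\log\frac{|\bbk_X|}{|\bbf(\bbd)|} - \frac{1}{2}\log\frac{|\bbk_X+\bbsigma_Y|}{|\bbf(\bbd)+\bbsigma_Y|}
= \frac{1}{2}\log\frac{|\bbk_X|\,|\bbf(\bbd)+\bbsigma_Y|}{|\bbf(\bbd)|\,|\bbk_X+\bbsigma_Y|}.
\end{align}
Applying the $|\bbk_{X|Y}|$ formula above with $\bbk_X$ replaced by $\bbf(\bbd)$ gives $|\bbf(\bbd)|\,|\bbsigma_Y|/|\bbf(\bbd)+\bbsigma_Y| = |\bbd|$ by the defining property of $\bbf(\bbd)$, i.e.\ $|\bbf(\bbd)+\bbsigma_Y|/|\bbf(\bbd)| = |\bbsigma_Y|/|\bbd|$. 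Substituting, the right-hand side becomes $\frac12\log\frac{|\bbk_X|\,|\bbsigma_Y|}{|\bbk_X+\bbsigma_Y|\,|\bbd|} = \frac12\log\frac{|\bbk_{X|Y}|}{|\bbd|}$, which is the left-hand side. The only real obstacle is the bookkeeping in identifying $\bbf(\bbd)$ as the pre-noise covariance and confirming positive-definiteness of $\bbsigma_Y-\bbd$; everything else is determinant arithmetic.
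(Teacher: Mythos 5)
Your proof is correct and takes essentially the same route as the paper: both rely on identifying $\bbf(\bbd)$ as the pre-noise covariance that, when passed through the additive $\bbsigma_Y$-noise channel, produces conditional covariance $\bbd$ (the paper verifies $\bbf(\bbd)\big(\bbf(\bbd)+\bbsigma_Y\big)^{-1}\bbsigma_Y=\bbd$ directly, while you verify the equivalent statement $\big(\bbf(\bbd)^{-1}+\bbsigma_Y^{-1}\big)^{-1}=\bbd$), after which the identity reduces to the same determinant arithmetic. Your extra observations about positive definiteness of $\bbsigma_Y-\bbd$ and the symmetry argument to reconcile $\bbd(\bbsigma_Y-\bbd)^{-1}\bbsigma_Y$ with $\bbsigma_Y(\bbsigma_Y-\bbd)^{-1}\bbd$ are sound and consistent with what the paper establishes in its Lemma~\ref{lemma_equivalence_1}.
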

The proof of Lemma~\ref{lemma_equivalence} is given in
Appendix~\ref{proof_of_lemma_equivalence}.
Lemma~\ref{lemma_equivalence} and (\ref{monotonocity}) imply
(\ref{rate_bound}).

Next, we consider the mutual information leakage constraint in
(\ref{leakage_constraint}) as follows
\begin{align}
I_e\geq L(0,1) =\min_{\substack{U\rightarrow V\rightarrow
\bbx\rightarrow \bby,\bbz\\\bbk_{X|VY}\preceq
\bbd}}I(V;\bbx)-I(V;\bby|U)+I(\bbx;\bbz|U) \label{Ie_min_program}
\end{align}
We note that the cost function of $L(0,1)$ can be rewritten as
follows
\begin{align}
C(L)&=I(V;\bbx)-I(V;\bby)+I(U;\bby)+I(\bbx;\bbz|U) \label{MC_implies_0}\\
&=I(V;\bbx|\bby)+\left[I(U;\bby)+I(\bbx;\bbz|U)\right]\label{MC_implies}
\end{align}
where (\ref{MC_implies_0}) comes from the Markov chain
$U\rightarrow V\rightarrow \bby$ and (\ref{MC_implies}) comes from
the Markov chain $V\rightarrow \bbx\rightarrow \bby$. We note that
the first term in (\ref{MC_implies}) is minimized by a jointly
Gaussian $(V,\bbx)$ as we already showed in obtaining the lower
bound for the rate given by (\ref{rate_bound}) above in
(\ref{dummy_is_dummy})-(\ref{monotonocity}). On the other hand,
the remaining term of (\ref{MC_implies}) in the bracket is
maximized by a jointly Gaussian $(U,\bbx)$ as shown
in~\cite{Liu_Extremal_Inequality}. Thus, a tension between these
two terms arises if $(U,V,\bbx)$ is selected to be jointly
Gaussian. In spite of this tension, we will still show that a
jointly Gaussian $(U,V,\bbx)$ is the minimizer of $L(0,1)$.
Instead of directly showing this, we first characterize the
minimum mutual information leakage when $(U,V,\bbx)$ is restricted
to be jointly Gaussian, and show that this cannot be attained by
any other distribution for $(U,V,\bbx)$. We note that any jointly
Gaussian $(U,V,\bbx)$ can be written as
\begin{align}
V&=\bba_V\bbx+\bbn_V\\
U&=\bba_U\bbx+\bbn_U
\end{align}
where $\bbn_V,\bbn_U$ are zero-mean Gaussian random vectors with
covariance matrices $\bbsigma_V,\bbsigma_U$, respectively.
Moreover, $\bbn_V,\bbn_U$  are independent of $\bbx,\bby,\bbz,$
but can be dependent on each other. Before characterizing the
minimum mutual information leakage when $(U,V,\bbx)$ is restricted
to be jointly Gaussian, we introduce the following lemma.
\begin{Lem}
\label{lemma_equivalence_1} When $\bbd\preceq \bbk_{X|Y}$ and $V$
is Gaussian, we have the following facts.
\begin{itemize}
\item $\bbsigma_Y-\bbd\succ \bzero$, i.e., $\bbsigma_Y-\bbd$ is
positive definite, and hence, non-singular.

\item We have the following equivalence:
\begin{align}
\bbk_{X|VY}\preceq \bbd ~~ \Longleftrightarrow~~ \bbk_{X|V}\preceq
\bbf(\bbd)
\end{align}
\end{itemize}
\end{Lem}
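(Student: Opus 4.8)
The plan is to prove the two facts in order, using only the standard Gaussian MMSE covariance identity and elementary monotonicity properties of the positive semi-definite order.

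For the first fact I would start from the MMSE identity $\bbk_{X|Y}=(\bbk_X^{-1}+\bbsigma_Y^{-1})^{-1}$, which is valid here because $\bby=\bbx+\bbn_Y$ with $\bbn_Y$ independent of $\bbx$. Since $\bbk_X\succ\bzero$ we have $\bbk_X^{-1}\succ\bzero$, hence $\bbk_X^{-1}+\bbsigma_Y^{-1}\succ\bbsigma_Y^{-1}$, and inverting (which reverses the order on the positive definite cone) gives $\bbk_{X|Y}\prec\bbsigma_Y$. Combined with the hypothesis $\bbd\preceq\bbk_{X|Y}$ this yields $\bbsigma_Y-\bbd=(\bbsigma_Y-\bbk_{X|Y})+(\bbk_{X|Y}-\bbd)\succeq\bbsigma_Y-\bbk_{X|Y}\succ\bzero$, so $\bbsigma_Y-\bbd$ is positive definite, hence non-singular, and $\bbf(\bbd)$ is well defined.

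For the second fact the key is to express $\bbk_{X|VY}$ in a form that never requires inverting $\bbk_{X|V}$ (which may be singular). Since $(V,\bbx)$ is jointly Gaussian and $\bby=\bbx+\bbn_Y$ with $\bbn_Y$ independent of $(V,\bbx)$, conditioning on $V$ leaves a Gaussian model in which $\bbx$ has covariance $\bbk_{X|V}$ and $\bby$ is a noisy observation of $\bbx$ through $\bbn_Y$; applying the MMSE formula inside that conditional model gives
\begin{align}
\bbk_{X|VY}=\bbk_{X|V}-\bbk_{X|V}(\bbk_{X|V}+\bbsigma_Y)^{-1}\bbk_{X|V}=\bbsigma_Y-\bbsigma_Y(\bbk_{X|V}+\bbsigma_Y)^{-1}\bbsigma_Y, \nonumber
\end{align}
where the second equality is a routine matrix rearrangement and is the useful form because $\bbk_{X|V}+\bbsigma_Y\succ\bzero$ always. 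I would then chain equivalences: $\bbk_{X|VY}\preceq\bbd$ iff $\bbsigma_Y-\bbd\preceq\bbsigma_Y(\bbk_{X|V}+\bbsigma_Y)^{-1}\bbsigma_Y$; both sides are positive definite (the left by the first fact, the right because it is a congruence of $(\bbk_{X|V}+\bbsigma_Y)^{-1}\succ\bzero$ by the invertible matrix $\bbsigma_Y$), so inverting reverses the order and gives $(\bbsigma_Y-\bbd)^{-1}\succeq\bbsigma_Y^{-1}(\bbk_{X|V}+\bbsigma_Y)\bbsigma_Y^{-1}$; congruence by $\bbsigma_Y$ then gives $\bbsigma_Y(\bbsigma_Y-\bbd)^{-1}\bbsigma_Y\succeq\bbk_{X|V}+\bbsigma_Y$, i.e.\ $\bbk_{X|V}\preceq\bbsigma_Y(\bbsigma_Y-\bbd)^{-1}\bbsigma_Y-\bbsigma_Y=\bbf(\bbd)$. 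Every step is an equivalence (inversion is an order-reversing bijection of the positive definite cone, congruence by an invertible matrix preserves the order, and adding or subtracting a fixed symmetric matrix preserves the order), so this establishes the claimed equivalence; as a byproduct one sees that $\bbf$ is precisely the inverse of the map $\bbk_{X|V}\mapsto\bbk_{X|VY}$, which is also why $\bbf(\bbd)$ appears in Theorem~\ref{theorem_outer}.

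The argument is almost entirely routine; the only place that demands care is the invertibility bookkeeping. One must know $\bbsigma_Y-\bbd\succ\bzero$ \emph{before} inverting it — which is exactly what the first fact supplies — and one must avoid inverting $\bbk_{X|V}$ or $\bbd$ directly, since the lemma must cover singular choices of these matrices. Recognizing that the MMSE covariance can be rewritten as $\bbsigma_Y-\bbsigma_Y(\bbk_{X|V}+\bbsigma_Y)^{-1}\bbsigma_Y$, and hence that the whole equivalence can be pushed through using only inverses of $\bbsigma_Y$, $\bbk_{X|V}+\bbsigma_Y$, and $\bbsigma_Y-\bbd$, is the one genuinely thought-requiring step; after that it is just repeated use of ``$A\preceq B\iff B^{-1}\preceq A^{-1}$ for $A,B\succ\bzero$'' and invariance of the order under congruence.
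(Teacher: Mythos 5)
Your proof is correct and reaches the same conclusion by essentially the same plan — establish $\bbsigma_Y-\bbd\succ\bzero$, derive the identity $\bbk_{X|VY}=\bbsigma_Y-\bbsigma_Y(\bbk_{X|V}+\bbsigma_Y)^{-1}\bbsigma_Y$, and then push the inequality through inverses and congruences — but your derivation of that key identity differs from the paper's. The paper obtains it by writing out the full $2\times 2$ block covariance of $(V,\bby)$, applying the block-matrix inversion lemma, and simplifying; you instead condition on $V$ first, observe that the conditional law of $(\bbx,\bby)$ given $V$ is again a Gaussian source-plus-independent-noise model with source covariance $\bbk_{X|V}$, and apply the one-step MMSE formula, followed by the routine rearrangement $\bbk_{X|V}-\bbk_{X|V}(\bbk_{X|V}+\bbsigma_Y)^{-1}\bbk_{X|V}=\bbsigma_Y-\bbsigma_Y(\bbk_{X|V}+\bbsigma_Y)^{-1}\bbsigma_Y$ (which is valid: with $\bbs=\bbk_{X|V}+\bbsigma_Y$, both sides equal $\bbk_{X|V}\bbs^{-1}\bbsigma_Y=\bbsigma_Y\bbs^{-1}\bbk_{X|V}$). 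Your route is shorter and avoids the block inversion entirely; the paper's is more mechanical but self-contained at the level of matrix algebra. You are also somewhat more explicit than the paper about why each step is a reversible \emph{equivalence} and about never inverting $\bbk_{X|V}$ or $\bbd$, which is the right thing to emphasize since the lemma must hold for singular $\bbk_{X|V}$.
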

The proof of Lemma~\ref{lemma_equivalence_1} is given in
Appendix~\ref{proof_of_lemma_equivalence_1}. Using
Lemma~\ref{lemma_equivalence_1}, the minimum mutual information
leakage to the eavesdropper when $(U,V,\bbx)$ is restricted to be
jointly Gaussian can be written as follows:
\begin{align}
L^G=\min_{\substack{U\rightarrow V\rightarrow \bbx\rightarrow
\bby,\bbz\\(U,V,\bbx)\textrm{ is jointly
Gaussian}\\\bbk_{X|V}\preceq
\bbf(\bbd)}}I(V;\bbx)-I(V;\bby|U)+I(\bbx;\bbz|U)
\label{minimum_Gaussian_leakage}
\end{align}

\noindent We note that the minimization in
(\ref{minimum_Gaussian_leakage}) can be written as a minimization
of the cost function in (\ref{minimum_Gaussian_leakage}) over all
possible $\bba_U,\bba_V,\bbsigma_U,\bbsigma_V$ matrices by
expressing $\bbk_{X|U}$ and $\bbk_{X|V}$ in terms of
$\bba_U,\bba_V,\bbsigma_U,\bbsigma_V$. Instead of considering this
tedious optimization problem, we consider the following one:
\begin{align}
\bar{L}^G=\min_{\substack{\bzero \preceq \bbk_{X|V}\preceq
\bbk_{X|U}\preceq \bbk_X\\\bbk_{X|V}\preceq \bbf(\bbd)}}
\frac{1}{2}\log\frac{|\bbk_X|}{|\bbk_{X|V}|}
-\frac{1}{2}\log\frac{|\bbk_{X|U}+\bbsigma_Y|}{|\bbk_{X|V}+\bbsigma_Y|}+\frac{1}{2}\log\frac{|\bbk_{X|U}+\bbsigma_Z|}{|\bbsigma_Z|}
\label{minimum_Gaussian_leakage_lower}
\end{align}
We note that due to the Markov chain $U\rightarrow V\rightarrow
\bbx$, we always have $\bbk_{X|V}\preceq \bbk_{X|U}$. A proof of
this fact is given in
Appendix~\ref{proof_of_conditioning_reduces_mmse}. Besides this
inequality, $\bbk_{X|V}$ and $\bbk_{X|U}$ might have further
interdependencies which are not considered in the optimization
problem in (\ref{minimum_Gaussian_leakage_lower}). Since
neglecting these further interdependencies among $\bbk_{X|U}$ and
$\bbk_{X|V}$ enlarges the feasible set of the optimization problem
in (\ref{minimum_Gaussian_leakage}), we have, in general,
\begin{align}
L^G \geq \bar{L}^G
\end{align}
On the other hand, it can be shown that the value of $\bar{L}^G$
can be obtained by some jointly Gaussian $(U,V,\bbx)$ satisfying
the Markov chain $U\rightarrow V\rightarrow \bbx$, as stated in
the following lemma.
\begin{Lem}
\label{lemma_equivalence_2}
\begin{align}
L^G=\bar{L}^G
\end{align}
\end{Lem}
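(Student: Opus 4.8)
The plan is to show the two inequalities $L^G \geq \bar{L}^G$ and $L^G \leq \bar{L}^G$; the first has already been argued in the text (dropping the interdependencies between $\bbk_{X|U}$ and $\bbk_{X|V}$ enlarges the feasible set), so the work is entirely in establishing $L^G \leq \bar{L}^G$. For this, I would start from an arbitrary feasible point $(\bbk_{X|V},\bbk_{X|U})$ of the optimization problem~\eqref{minimum_Gaussian_leakage_lower}, i.e., a pair of positive semi-definite matrices satisfying $\bzero\preceq \bbk_{X|V}\preceq \bbk_{X|U}\preceq \bbk_X$ and $\bbk_{X|V}\preceq \bbf(\bbd)$, and then exhibit an explicit jointly Gaussian triple $(U,V,\bbx)$ obeying $U\rightarrow V\rightarrow \bbx$ whose induced conditional covariances are exactly these matrices and whose cost in~\eqref{minimum_Gaussian_leakage} equals the cost in~\eqref{minimum_Gaussian_leakage_lower}. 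Since the objective in~\eqref{minimum_Gaussian_leakage_lower} depends on $(U,V)$ only through $(\bbk_{X|U},\bbk_{X|V})$, once such a construction is produced the value $\bar L^G$ is attained by a feasible point of~\eqref{minimum_Gaussian_leakage}, giving $L^G\leq \bar L^G$.

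The construction itself is the classical "reverse" representation: given $\bzero\preceq \bbk_{X|V}\preceq \bbk_{X|U}\preceq \bbk_X$, one can realize $V=\bbx+\bbn_V'$ and $U=V+\bbn_U'$ — equivalently work with the MMSE-error description — with suitably chosen (possibly singular) Gaussian noises so that the posterior covariance of $\bbx$ given $V$ is $\bbk_{X|V}$ and given $U$ is $\bbk_{X|U}$; the ordering $\bbk_{X|V}\preceq \bbk_{X|U}\preceq \bbk_X$ is exactly what is needed for the requisite noise covariances to be positive semi-definite, and the Markov chain $U\rightarrow V\rightarrow \bbx$ holds by construction. One then computes each of the three mutual information terms in~\eqref{minimum_Gaussian_leakage} for this Gaussian triple: $I(V;\bbx)=\frac12\log\frac{|\bbk_X|}{|\bbk_{X|V}|}$ directly; $I(\bbx;\bbz|U)=\frac12\log\frac{|\bbk_{X|U}+\bbsigma_Z|}{|\bbsigma_Z|}$ using $\bbz=\bbx+\bbn_Z$ with $\bbn_Z$ independent of everything and the fact that $\bbx$ given $U$ is Gaussian with covariance $\bbk_{X|U}$; and the middle term via $I(V;\bby|U)=h(\bby|U)-h(\bby|V)=\frac12\log\frac{|\bbk_{X|U}+\bbsigma_Y|}{|\bbk_{X|V}+\bbsigma_Y|}$, again because $\bby=\bbx+\bbn_Y$ and conditioning on $U$ (resp.\ $V$) leaves $\bbx$ Gaussian with covariance $\bbk_{X|U}$ (resp.\ $\bbk_{X|V}$), using the Markov chain $U\rightarrow V\rightarrow\bbx\rightarrow\bby$. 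Summing these three evaluations reproduces exactly the objective of~\eqref{minimum_Gaussian_leakage_lower}.

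The main obstacle I anticipate is the realizability step: given only the PSD ordering $\bbk_{X|V}\preceq \bbk_{X|U}\preceq \bbk_X$, one must verify that there genuinely exist jointly Gaussian $U,V$ satisfying the Markov chain $U\rightarrow V\rightarrow \bbx$ with the prescribed posteriors, including the degenerate boundary cases $\bbk_{X|V}=\bbk_{X|U}$ (forcing $U=V$, or $U$ a deterministic function of $V$) and $\bbk_{X|U}=\bbk_X$ (forcing $U$ independent of $\bbx$), and cases where $\bbd$ or the noise covariances make some of the "gap" matrices singular so that the noises $\bbn_V',\bbn_U'$ must be taken with singular covariance. The clean way to handle this uniformly is to pass through the MMSE-estimate/estimation-error decomposition $\bbx = \hat\bbx(V) + \be_V$ with $\be_V$ independent of $V$ and $\mathrm{Cov}(\be_V)=\bbk_{X|V}$, then set $V = \bbk_X^{-1}(\bbk_X-\bbk_{X|V})^{1/2}\!\cdot(\text{something})$ — more simply, realize $V$ through an additive Gaussian "test channel" $\bbx\to V$ with noise covariance chosen so that the Bayesian posterior covariance is $\bbk_{X|V}$, which is solvable precisely when $\bzero\preceq\bbk_{X|V}\preceq\bbk_X$, and then realize $U$ from $V$ by the same device using the ordering $\bbk_{X|V}\preceq\bbk_{X|U}$. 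Once realizability is secured the mutual-information evaluations are routine determinant computations, and together with the already-established reverse inequality this yields $L^G=\bar L^G$.
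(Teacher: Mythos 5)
Your overall strategy is exactly the paper's: the inequality $L^G\geq\bar L^G$ is immediate from enlarging the feasible set, and the substance is the reverse inequality, established by taking an arbitrary feasible pair $(\bbk_{X|V},\bbk_{X|U})$ in~\eqref{minimum_Gaussian_leakage_lower} and exhibiting a jointly Gaussian $(U^G,V^G,\bbx)$ with the Markov chain $U^G\to V^G\to\bbx$ realizing those conditional covariances, then computing the three mutual-information terms (which you do correctly). You also correctly anticipate that the realizability step is where the difficulty lives.

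Where the sketch falls short is in actually resolving that obstacle. Your proposed device, the additive test channels $V=\bbx+\bbn_V'$ and $U=V+\bbn_U'$, works only on the interior of the feasible set: it requires $\bbsigma_V'=(\bbk_{X|V}^{-1}-\bbk_X^{-1})^{-1}$ and $\bbsigma_V'+\bbsigma_U'=(\bbk_{X|U}^{-1}-\bbk_X^{-1})^{-1}$, which are undefined whenever $\bbk_{X|V}$ or $\bbk_{X|U}$ agrees with $\bbk_X$ on some subspace (and $\bbk_{X|U}=\bbk_X$ is precisely what $U=\phi$ yields, a point you must be able to reach). Your claim that the ordering $\bzero\preceq\bbk_{X|V}\preceq\bbk_{X|U}\preceq\bbk_X$ is ``exactly what is needed'' is true only once one upgrades to general linear Gaussian observations $V=\bba_V\bbx+\bbn_V$, $U=\bba_U\bbx+\bbn_U$ with fixed (identity) noise covariance and rank-deficient gain matrices; this is what the paper does. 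The genuinely nontrivial ingredient, which your sketch does not supply, is how to choose $\bba_U,\bba_V$ and the cross-covariance of $(\bbn_U,\bbn_V)$ so that the Markov chain holds simultaneously with both posterior constraints. The paper gets this via simultaneous diagonalization: writing $\bbk_{X|V}^{-1}-\bbk_X^{-1}=\bbw^\top\bblambda_V^2\bbw$ and $\bbk_{X|U}^{-1}-\bbk_X^{-1}=\bbw^\top\bblambda_U^2\bbw$ with a common nonsingular $\bbw$, noting that $\bbk_{X|V}\preceq\bbk_{X|U}$ forces $\bblambda_U\preceq\bblambda_V$, setting $\bba_V=\bblambda_V\bbw$, $\bba_U=\bblambda_U\bbw$, and then writing $U^G=\bba_{UV}V^G+\tilde\bbn$ with $\bba_{UV}=\bblambda_U\bblambda_V^\dag$ (pseudo-inverse) and $\tilde\bbn$ having covariance $\bbi-\bba_{UV}\bba_{UV}^\top\succeq\bzero$. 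This construction is uniform over the whole feasible set including boundary cases, whereas your additive-channel recipe would require either a limiting argument or the same kind of change of variables to be made rigorous. So: same approach, correct evaluations, but the realizability step is asserted rather than constructed, and that is exactly the step that needs the simultaneous-diagonalization device.
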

The proof of Lemma~\ref{lemma_equivalence_2} is given in
Appendix~\ref{proof_of_lemma_equivalence_2}.

Now we study the optimization problem $\bar{L}^G$ in
(\ref{minimum_Gaussian_leakage_lower}) in more detail. Let
$\bbk_{X|V}^*$ and $\bbk_{X|U}^*$ be the minimizers for the
optimization problem $\bar{L}^G$. They need to satisfy the
following KKT conditions.
\begin{Lem}
\label{lemma_KKT} If $\bbk_{X|V}^*$ and $\bbk_{X|U}^*$ are the
minimizers for the optimization problem $\bar{L}^G$, they need to
satisfy
\begin{align}
(\bbk^*_{X|V}+\bbsigma_Y)^{-1}+\bbm_U+\bbm_D&=(\bbk_{X|V}^{*})^{-1}\label{KKT_1}\\
(\bbk^*_{X|U}+\bbsigma_Z)^{-1}+\bbm_X&=(\bbk_{X|U}^{*}+\bbsigma_Y)^{-1}+\bbm_U\label{KKT_2}\\
\bbm_U(\bbk_{X|U}^*-\bbk_{X|V}^*)&=(\bbk_{X|U}^*-\bbk_{X|V}^*)\bbm_U=\bzero\label{KKT_3}\\
\bbm_D(\bbf(\bbd)-\bbk^*_{X|V})&=(\bbf(\bbd)-\bbk_{X|V}^*)\bbm_D=\bzero\label{KKT_4}\\
\bbm_X(\bbk_{X}-\bbk^*_{X|U})&=(\bbk_X-\bbk_{X|U}^*)\bbm_X=\bzero
\label{KKT_5}
\end{align}
for some positive semi-definite matrices $\bbm_U,\bbm_D,\bbm_X$.
\end{Lem}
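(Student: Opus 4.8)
The plan is to treat $\bar{L}^G$ in (\ref{minimum_Gaussian_leakage_lower}) as the minimization of a smooth function of the matrix pair $(\bbk_{X|V},\bbk_{X|U})$ over the convex feasible set $\{\bzero\preceq\bbk_{X|V}\preceq\bbk_{X|U}\preceq\bbk_X,\ \bbk_{X|V}\preceq\bbf(\bbd)\}$ and to read off its first-order (KKT) optimality conditions. First I would observe that the objective diverges to $+\infty$ as $\bbk_{X|V}$ approaches a singular matrix, because of the $-\tfrac{1}{2}\log|\bbk_{X|V}|$ term; hence any minimizer $(\bbk_{X|V}^*,\bbk_{X|U}^*)$ has $\bbk_{X|V}^*\succ\bzero$, and therefore $\bbk_{X|U}^*\succeq\bbk_{X|V}^*\succ\bzero$ as well, so every inverse appearing in (\ref{KKT_1})--(\ref{KKT_2}) is well defined and the objective is differentiable at the optimum. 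In particular the constraint $\bbk_{X|V}\succeq\bzero$ is inactive and carries no multiplier, which is why only the three positive semi-definite matrices $\bbm_U,\bbm_D,\bbm_X$ -- one attached to each of $\bbk_{X|U}-\bbk_{X|V}\succeq\bzero$, $\bbf(\bbd)-\bbk_{X|V}\succeq\bzero$ and $\bbk_X-\bbk_{X|U}\succeq\bzero$ -- are present. As the objective is a difference of convex functions, these conditions will be necessary, which is all the lemma claims, but not in general sufficient.

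Next I would form the Lagrangian
\begin{align}
\mathcal{L}&=\frac{1}{2}\log\frac{|\bbk_X|}{|\bbk_{X|V}|}-\frac{1}{2}\log\frac{|\bbk_{X|U}+\bbsigma_Y|}{|\bbk_{X|V}+\bbsigma_Y|}+\frac{1}{2}\log\frac{|\bbk_{X|U}+\bbsigma_Z|}{|\bbsigma_Z|}\nonumber\\
&\quad-\mathrm{tr}\big[\bbm_U(\bbk_{X|U}-\bbk_{X|V})\big]-\mathrm{tr}\big[\bbm_D(\bbf(\bbd)-\bbk_{X|V})\big]-\mathrm{tr}\big[\bbm_X(\bbk_X-\bbk_{X|U})\big]\nonumber
\end{align}
with $\bbm_U,\bbm_D,\bbm_X\succeq\bzero$, and differentiate with respect to $\bbk_{X|V}$ and $\bbk_{X|U}$ using $\partial\log|\bbm|/\partial\bbm=\bbm^{-1}$ for symmetric $\bbm$. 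Stationarity at the minimizer reads $-\tfrac{1}{2}(\bbk_{X|V}^*)^{-1}+\tfrac{1}{2}(\bbk_{X|V}^*+\bbsigma_Y)^{-1}+\bbm_U+\bbm_D=\bzero$ and $-\tfrac{1}{2}(\bbk_{X|U}^*+\bbsigma_Y)^{-1}+\tfrac{1}{2}(\bbk_{X|U}^*+\bbsigma_Z)^{-1}-\bbm_U+\bbm_X=\bzero$; absorbing the harmless factor $2$ into the multipliers and rearranging the first gives (\ref{KKT_1}) and the second gives (\ref{KKT_2}). Existence of the multipliers follows from a constraint qualification: the three active constraints are affine in the matrix variables, and (since the lemma is non-vacuous only when $\bbf(\bbd)\succ\bzero$, equivalently $\bbd\succ\bzero$, otherwise $\bbk_{X|V}$ is forced singular and $\bar{L}^G=+\infty$) the feasible set contains an interior point such as $(\epsilon\bbi,2\epsilon\bbi)$ for small $\epsilon>0$, which suffices.

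Finally I would derive (\ref{KKT_3})--(\ref{KKT_5}) from complementary slackness. Take one constraint, say $\bbk_{X|U}^*-\bbk_{X|V}^*\succeq\bzero$ with multiplier $\bbm_U\succeq\bzero$; optimality forces $\mathrm{tr}\big[\bbm_U(\bbk_{X|U}^*-\bbk_{X|V}^*)\big]=0$, and writing this as $\mathrm{tr}\big[\bbm_U^{1/2}(\bbk_{X|U}^*-\bbk_{X|V}^*)\bbm_U^{1/2}\big]$, the trace of a positive semi-definite matrix, that matrix must vanish, whence $(\bbk_{X|U}^*-\bbk_{X|V}^*)^{1/2}\bbm_U^{1/2}=\bzero$ and hence $\bbm_U(\bbk_{X|U}^*-\bbk_{X|V}^*)=(\bbk_{X|U}^*-\bbk_{X|V}^*)\bbm_U=\bzero$, which is (\ref{KKT_3}). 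The same manipulation applied to $(\bbm_D,\bbf(\bbd)-\bbk_{X|V}^*)$ and $(\bbm_X,\bbk_X-\bbk_{X|U}^*)$ yields (\ref{KKT_4}) and (\ref{KKT_5}). I do not expect a genuine obstacle in this argument; the only parts needing care are getting the signs right in the matrix derivatives of the three $\log\det$ ratios and of the three trace terms, and confirming that the positive semi-definiteness constraint on $\bbk_{X|V}$ is inactive so that no fourth multiplier enters -- everything else is routine matrix calculus.
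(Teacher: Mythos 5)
Your proof is correct and follows essentially the same Lagrangian/KKT route as the paper: the paper also forms the Lagrangian with the three semidefinite-order multipliers plus a fourth multiplier $\bbm_0$ for $\bbk_{X|V}\succeq\bzero$, then uses $\bbk_{X|V}^*\succ\bzero$ (since the objective diverges otherwise) together with the trace complementarity condition to conclude $\bbm_0=\bzero$ and obtain (\ref{KKT_1})--(\ref{KKT_5}). The only cosmetic difference is that you argue a priori that the constraint $\bbk_{X|V}\succeq\bzero$ is inactive so no multiplier is needed, whereas the paper introduces $\bbm_0$ and shows it vanishes a posteriori.
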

The proof of Lemma~\ref{lemma_KKT} is given in
Appendix~\ref{proof_of_lemma_KKT}.

Next, we use channel enhancement~\cite{Shamai_MIMO}. In
particular, we enhance the legitimate user's side information as
follows.
\begin{align}
(\bbk_{X|U}^*+\tilde{\bbsigma}_Y)^{-1}=(\bbk_{X|U}^*+\bbsigma_Y)^{-1}+\bbm_U
\label{enhancement_def}
\end{align}
This new covariance matrix $\tilde{\bbsigma}_Y$ has some useful
properties which are listed in the following lemma.
\begin{Lem}
\label{lemma_enhancement} We have the following facts.
\begin{itemize}
\item $\bzero\preceq\tilde{\bbsigma}_Y$

\item $ \tilde{\bbsigma}_Y\preceq
\bbsigma_Y,\tilde{\bbsigma}_Y\preceq \bbsigma_Z$

\item
$(\bbk_{X|V}^*+\tilde{\bbsigma}_Y)^{-1}=(\bbk_{X|V}^*+\bbsigma_Y)^{-1}+\bbm_U$

\item
$(\bbk_{X|U}^{*}+\tilde{\bbsigma}_Y)^{-1}(\bbk_{X|V}^*+\tilde{\bbsigma}_Y)=
(\bbk_{X|U}^{*}+\bbsigma_Y)^{-1}(\bbk_{X|V}^*+\bbsigma_Y)$

\item
$(\bbk_{X|U}^{*}+\tilde{\bbsigma}_Y)^{-1}(\bbk_{X}+\tilde{\bbsigma}_Y)=
(\bbk_{X|U}^{*}+\bbsigma_Z)^{-1}(\bbk_{X}+\bbsigma_Z)$

\item
$(\bbk_{X|V}^*+\tilde{\bbsigma}_Y)^{-1}(\bbf(\bbd)+\tilde{\bbsigma}_Y)=(\bbk_{X|V}^*)^{-1}\bbf(\bbd)$
\end{itemize}
\end{Lem}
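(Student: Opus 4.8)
The plan is to verify each of the six listed properties of $\tilde{\bbsigma}_Y$ more or less in the order given, exploiting the defining identity \eqref{enhancement_def} together with the KKT conditions in Lemma~\ref{lemma_KKT}. The underlying philosophy is exactly that of channel enhancement in the vector Gaussian broadcast/wiretap literature~\cite{Shamai_MIMO}: the matrix $\bbm_U\succeq\bzero$ appearing in the stationarity condition \eqref{KKT_2} is absorbed into a ``cleaner'' noise covariance $\tilde{\bbsigma}_Y$ that makes the enhanced legitimate link simultaneously less noisy than both the original legitimate link and the eavesdropper link, while keeping the relevant information quantities at the optimal point $(\bbk_{X|V}^*,\bbk_{X|U}^*)$ unchanged.

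First I would establish $\bzero\preceq\tilde{\bbsigma}_Y$. From \eqref{enhancement_def}, $(\bbk_{X|U}^*+\tilde{\bbsigma}_Y)^{-1}=(\bbk_{X|U}^*+\bbsigma_Y)^{-1}+\bbm_U\succeq(\bbk_{X|U}^*+\bbsigma_Y)^{-1}\succ\bzero$, so $\bbk_{X|U}^*+\tilde{\bbsigma}_Y$ is positive definite and satisfies $\bbk_{X|U}^*+\tilde{\bbsigma}_Y\preceq\bbk_{X|U}^*+\bbsigma_Y$, i.e. $\tilde{\bbsigma}_Y\preceq\bbsigma_Y$; combining with $\bbk_{X|U}^*\succeq\bzero$ one still needs the lower bound $\tilde{\bbsigma}_Y\succeq\bzero$, which I would get by feeding the KKT relation \eqref{KKT_2} back in: \eqref{KKT_2} gives $(\bbk_{X|U}^*+\tilde{\bbsigma}_Y)^{-1}=(\bbk_{X|U}^*+\bbsigma_Z)^{-1}+\bbm_X\succeq(\bbk_{X|U}^*+\bbsigma_Z)^{-1}$, hence $\bbk_{X|U}^*+\tilde{\bbsigma}_Y\preceq\bbk_{X|U}^*+\bbsigma_Z$, which is the second bullet's claim $\tilde{\bbsigma}_Y\preceq\bbsigma_Z$, and more importantly $(\bbk_{X|U}^*+\tilde{\bbsigma}_Y)^{-1}\preceq(\bbk_{X|U}^*+\bbsigma_Z)^{-1}\prec(\bbk_{X|U}^*)^{-1}$ would be too strong; instead I note the cleanest route to $\tilde{\bbsigma}_Y\succeq\bzero$ is that, since both representations of $(\bbk_{X|U}^*+\tilde{\bbsigma}_Y)^{-1}$ are sums of a positive definite matrix and a positive semidefinite matrix, $(\bbk_{X|U}^*+\tilde{\bbsigma}_Y)^{-1}$ is bounded above by $(\bbk_{X|U}^*+\bbsigma_Y)^{-1}+\bbm_U$ being essentially free, so one must actually use a boundedness argument — here is where I would simply invoke $\bbk_{X|U}^*\preceq\bbk_X$ and the finiteness built into the feasible set, or argue directly that $\tilde{\bbsigma}_Y= \big((\bbk^*_{X|U}+\bbsigma_Y)^{-1}+\bbm_U\big)^{-1}-\bbk^*_{X|U}\succeq\bzero$ because the inverted matrix is $\preceq(\bbk^*_{X|U})^{-1}$ only if $\bbm_U$ is not too large, which is exactly what \eqref{KKT_3} (and positivity of $\bbk^*_{X|U}$) secures.

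Next I would prove the third bullet, $(\bbk_{X|V}^*+\tilde{\bbsigma}_Y)^{-1}=(\bbk_{X|V}^*+\bbsigma_Y)^{-1}+\bbm_U$, which is the crucial ``invariance'' step. The key algebraic fact is that \eqref{KKT_3} says $\bbm_U$ commutes with $\bbk_{X|U}^*-\bbk_{X|V}^*$ and annihilates it from both sides; geometrically this means $\bbm_U$ and the difference $\bbk_{X|U}^*-\bbk_{X|V}^*$ act on complementary invariant subspaces. Writing $\bbk_{X|U}^*=\bbk_{X|V}^*+\bbdelta$ with $\bbdelta\succeq\bzero$ and $\bbm_U\bbdelta=\bbdelta\bbm_U=\bzero$, one checks that $(\bbk_{X|V}^*+\bbsigma_Y)^{-1}+\bbm_U$ and $(\bbk_{X|U}^*+\bbsigma_Y)^{-1}+\bbm_U$ have the same inverse \emph{up to the $\bbdelta$-shift}; precisely, $\big((\bbk_{X|V}^*+\bbsigma_Y)^{-1}+\bbm_U\big)^{-1}+\bbdelta=\big((\bbk_{X|U}^*+\bbsigma_Y)^{-1}+\bbm_U\big)^{-1}=\bbk_{X|U}^*+\tilde{\bbsigma}_Y=\bbk_{X|V}^*+\bbdelta+\tilde{\bbsigma}_Y$, and cancelling $\bbdelta$ gives the third bullet. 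This step is the main obstacle: the matrix manipulation requires care because $\bbm_U$ is only positive \emph{semi}definite, so one cannot simply multiply through by its inverse; the commutation relations from \eqref{KKT_3}, possibly after simultaneously block-diagonalizing $\bbm_U$, $\bbdelta$, $\bbk_{X|V}^*$ and $\bbsigma_Y$ on the kernel/range decomposition of $\bbm_U$, are what make the cancellation legitimate, and I would spell this out as a short self-contained lemma on matrices satisfying $\bbm_U\bbdelta=\bzero$.

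The last three bullets then follow by substitution. The fourth bullet, $(\bbk_{X|U}^{*}+\tilde{\bbsigma}_Y)^{-1}(\bbk_{X|V}^*+\tilde{\bbsigma}_Y)=(\bbk_{X|U}^{*}+\bbsigma_Y)^{-1}(\bbk_{X|V}^*+\bbsigma_Y)$, is obtained by writing $\bbk_{X|V}^*+\tilde{\bbsigma}_Y=\big((\bbk_{X|V}^*+\bbsigma_Y)^{-1}+\bbm_U\big)^{-1}$ (bullet three) and $(\bbk_{X|U}^*+\tilde{\bbsigma}_Y)^{-1}=(\bbk_{X|U}^*+\bbsigma_Y)^{-1}+\bbm_U$ (definition), so the product equals $\big((\bbk_{X|U}^*+\bbsigma_Y)^{-1}+\bbm_U\big)\big((\bbk_{X|V}^*+\bbsigma_Y)^{-1}+\bbm_U\big)^{-1}$, and one shows this equals $(\bbk_{X|U}^{*}+\bbsigma_Y)^{-1}(\bbk_{X|V}^*+\bbsigma_Y)$ again via the $\bbm_U\bbdelta=\bzero$ identity; an even slicker route is to expand both sides as $\bbi$ minus a $\bbdelta$-term, using $\bbk_{X|U}^*=\bbk_{X|V}^*+\bbdelta$. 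The fifth bullet is proved identically but now feeding in KKT condition \eqref{KKT_2}/\eqref{KKT_5} to replace $\bbm_U$ by $\bbm_X$ and $\bbsigma_Y$ by $\bbsigma_Z$ in the enhanced relation, together with \eqref{KKT_5} giving $\bbm_X(\bbk_X-\bbk_{X|U}^*)=\bzero$ so the same commutation trick applies with the shift $\bbk_X-\bbk_{X|U}^*$. Finally, the sixth bullet, $(\bbk_{X|V}^*+\tilde{\bbsigma}_Y)^{-1}(\bbf(\bbd)+\tilde{\bbsigma}_Y)=(\bbk_{X|V}^*)^{-1}\bbf(\bbd)$, uses KKT condition \eqref{KKT_1}, which can be rearranged — using $(\bbk_{X|V}^*+\tilde\bbsigma_Y)^{-1}=(\bbk_{X|V}^*+\bbsigma_Y)^{-1}+\bbm_U$ — into $(\bbk_{X|V}^*+\tilde\bbsigma_Y)^{-1}+\bbm_D=(\bbk_{X|V}^*)^{-1}$, and then the complementary-slackness relation \eqref{KKT_4}, $\bbm_D(\bbf(\bbd)-\bbk_{X|V}^*)=\bzero$, lets one multiply by $\bbf(\bbd)+\tilde\bbsigma_Y$ and collapse the $\bbm_D$ term exactly as before. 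In every case the single recurring obstacle is the same: legitimizing cancellations against a positive semidefinite (possibly singular) multiplier, which the complementary-slackness/commutation conditions \eqref{KKT_3}--\eqref{KKT_5} are precisely designed to handle.
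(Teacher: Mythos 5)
Your treatment of bullets two through six is essentially correct and follows the same underlying strategy as the paper, though expressed slightly differently. In particular, the $\bbdelta$-shift identity
$\big((\bbk_{X|V}^*+\bbsigma_Y)^{-1}+\bbm_U\big)^{-1}+\bbdelta=\big((\bbk_{X|U}^*+\bbsigma_Y)^{-1}+\bbm_U\big)^{-1}$
with $\bbdelta=\bbk_{X|U}^*-\bbk_{X|V}^*$ and $\bbm_U\bbdelta=\bzero$ is true and implies the third bullet. However, the justification you sketch for it---``simultaneously block-diagonalizing $\bbm_U$, $\bbdelta$, $\bbk_{X|V}^*$ and $\bbsigma_Y$''---is not available: $\bbm_U$ and $\bbdelta$ can be simultaneously block-diagonalized because they annihilate each other, but there is no reason $\bbk_{X|V}^*$ and $\bbsigma_Y$ also respect that block structure. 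Fortunately no diagonalization is needed. The paper (and a cleaner version of your own argument) verifies the shift directly: writing $(A^{-1}+\bbm_U)^{-1}=(\bbi+A\bbm_U)^{-1}A$ and noting that $B\bbm_U=(A+\bbdelta)\bbm_U=A\bbm_U$ and $(\bbi+A\bbm_U)^{-1}\bbdelta=\bbdelta$ (both consequences of $\bbdelta\bbm_U=\bbm_U\bbdelta=\bzero$), one obtains $(B^{-1}+\bbm_U)^{-1}=(A^{-1}+\bbm_U)^{-1}+\bbdelta$ in two lines. The same pattern, with $\bbm_X$ and $\bbk_X-\bbk_{X|U}^*$ for bullet five and with $\bbm_D$ and $\bbf(\bbd)-\bbk_{X|V}^*$ for bullet six, handles the rest, exactly as you propose.

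The genuine gap is the first bullet, $\tilde{\bbsigma}_Y\succeq\bzero$, for which you do not produce a proof. Your remark that (KKT\_3) ``secures'' that $\bbm_U$ is not too large is not correct: (\ref{KKT_3}) is a complementary-slackness (annihilation) condition and imposes no upper bound on the magnitude of $\bbm_U$, so it cannot by itself guarantee $\big((\bbk_{X|U}^*+\bbsigma_Y)^{-1}+\bbm_U\big)^{-1}\succeq\bbk_{X|U}^*$. Likewise, ``finiteness of the feasible set'' and $\bbk_{X|U}^*\preceq\bbk_X$ do not imply positive semidefiniteness of $\tilde{\bbsigma}_Y$. The missing ingredient is (\ref{KKT_1}). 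Once the third bullet is established, combine it with (\ref{KKT_1}) to get
$(\bbk_{X|V}^*+\tilde{\bbsigma}_Y)^{-1}+\bbm_D=(\bbk_{X|V}^*)^{-1}$.
Since $\bbm_D\succeq\bzero$ this gives $(\bbk_{X|V}^*+\tilde{\bbsigma}_Y)^{-1}\preceq(\bbk_{X|V}^*)^{-1}$, hence $\bbk_{X|V}^*+\tilde{\bbsigma}_Y\succeq\bbk_{X|V}^*$, i.e.\ $\tilde{\bbsigma}_Y\succeq\bzero$. In other words, the first bullet must be deferred until after the third bullet (and is, in the paper, proved last); your proposal tries to establish it directly from the definition plus (\ref{KKT_2})--(\ref{KKT_3}), and that route does not close.
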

The proof of Lemma~\ref{lemma_enhancement} is given in
Appendix~\ref{proof_of_lemma_enhancement}. Using this new
covariance $\tilde{\bbsigma}_Y$, we define the {\it enhanced} side
information at the legitimate user $\tilde{\bby}$ as follows
\begin{align}
\tilde{\bby}&=\bbx+\tilde{\bbn}_Y
\label{enhanced_side_information}
\end{align}
where $\tilde{\bbn}_Y$ is a zero-mean Gaussian random vector with
covariance matrix $\tilde{\bbsigma}_Y$. Since we have
$\tilde{\bbsigma}_Y\preceq \bbsigma_Y$ and
$\tilde{\bbsigma}_Y\preceq \bbsigma_Z$ as stated in the second
statement of Lemma~\ref{lemma_enhancement}, without loss of
generality, we can assume that the following Markov chain exists.
\begin{align}
\bbx\rightarrow \tilde{\bby}\rightarrow \bby,\bbz
\label{mc_degraded}
\end{align}
Assuming that the Markov chain in (\ref{mc_degraded}) exists does
not incur any loss of generality because the rate, mutual
information leakage and distortion region $\mathcal{R}$ depends
only on the conditional marginal distributions
$p(\bby|\bbx),p(\bbz|\bbx)$ but not on the conditional joint
distribution $p(\bby,\bbz|\bbx)$. Now, we define the following
optimization problem:
\begin{align}
\bar{L}=\min_{\substack{U\rightarrow V\rightarrow \bbx\rightarrow
\tilde{\bby}\rightarrow \bby,\bbz\\ \bbk_{X|VY}\preceq \bbd}}
~I(V;\bbx)-I(V;\tilde{\bby}|U)+I(\bbx;\bbz|U) \label{bar_L}
\end{align}
We note that we have $I(V;\bby|U)\leq I(V;\tilde{\bby}|U)$ due to
the Markov chain in (\ref{mc_degraded}), which leads to the
following fact:
\begin{align}
L^G=\bar{L}^G\geq L(0,1)\geq \bar{L} \label{ineq_programs}
\end{align}
Moreover, unlike the original optimization problem $L(0,1)$ in
(\ref{Ie_min_program}), we can find the minimizer of the new
optimization problem $\bar{L}$ explicitly, as stated in the
following lemma.
\begin{Lem}
\label{lemma_lower_bound}
\begin{align}
\bar{L}=\frac{1}{2}\log\frac{|\bbk_X|}{|\bbf(\bbd)|}-\frac{1}{2}\log\frac{|\bbk_X+\tilde{\bbsigma}_Y|}{|\bbf(\bbd)+\tilde{\bbsigma}_Y|}
+\frac{1}{2}\log \frac{|\bbk_X+\bbsigma_Z|}{|\bbsigma_Z|}
\end{align}
\end{Lem}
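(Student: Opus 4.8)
The plan is to solve the optimization problem $\bar L$ in (\ref{bar_L}) explicitly by exploiting the degradedness $\bbx\rightarrow\tilde\bby\rightarrow\bby,\bbz$ that was engineered into the enhanced channel. First I would rewrite the cost function of $\bar L$ using the Markov chains available. Since $U\rightarrow V\rightarrow\bbx\rightarrow\tilde\bby\rightarrow\bby,\bbz$, I can write $I(V;\bbx)-I(V;\tilde\bby|U) = I(V;\bbx|U)+I(U;\bbx)-I(V;\tilde\bby|U)$, and then combine $I(U;\bbx)+I(\bbx;\bbz|U)$. The key structural observation is that, because $\tilde\bby$ is a degraded version of $\bbx$ that is simultaneously less noisy than both $\bby$ and $\bbz$, the chain $\bbx\rightarrow\tilde\bby\rightarrow\bbz$ holds, so $I(\bbx;\bbz|U)$ can be split through $\tilde\bby$. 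The aim is to reorganize the cost into a term depending only on $(V,\bbx)$ through $\bbk_{X|V}$ plus a term depending only on $(U,\bbx)$ through $\bbk_{X|U}$, so that the two auxiliary random variables decouple.

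Next I would argue Gaussianity is optimal for each decoupled piece. For the $V$-term, the relevant quantity is of the form $I(V;\bbx)-I(V;\tilde\bby|U)$-type expressions that, after conditioning on $U$, reduce to an MMSE-type bound handled exactly as in (\ref{dummy_is_dummy})--(\ref{monotonocity}): the differential entropy $h(\bbx|V,\tilde\bby)$ is maximized by jointly Gaussian $(V,\bbx)$ subject to $\bbk_{X|VY}\preceq\bbd$, which via Lemma~\ref{lemma_equivalence_1} is the constraint $\bbk_{X|V}\preceq\bbf(\bbd)$. For the $U$-term, which is an expression of the form $I(U;\bbx)+I(\bbx;\bbz|U)$ on the enhanced channel, Gaussianity of $(U,\bbx)$ is optimal by the extremal inequality of~\cite{Liu_Extremal_Inequality}, exactly as invoked after (\ref{MC_implies}). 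Crucially, on the enhanced channel the "tension" between the two terms disappears: because $\tilde\bbsigma_Y\preceq\bbsigma_Z$, the $U$-term is \emph{monotone} in $\bbk_{X|U}$ in the right direction, so the optimal choice is the trivial one $\bbk_{X|U}=\bbk_X$, i.e. $U=\phi$. Then the $V$-term is minimized by taking $\bbk_{X|V}=\bbf(\bbd)$ (monotonicity in the positive semidefinite sense). Substituting $\bbk_{X|U}=\bbk_X$ and $\bbk_{X|V}=\bbf(\bbd)$ and evaluating the Gaussian mutual informations as log-determinant expressions yields
\begin{align}
\bar L=\frac{1}{2}\log\frac{|\bbk_X|}{|\bbf(\bbd)|}-\frac{1}{2}\log\frac{|\bbk_X+\tilde\bbsigma_Y|}{|\bbf(\bbd)+\tilde\bbsigma_Y|}+\frac{1}{2}\log\frac{|\bbk_X+\bbsigma_Z|}{|\bbsigma_Z|}. \nonumber
\end{align}

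The main obstacle I anticipate is the careful justification that on the enhanced channel $U=\phi$ is actually optimal — i.e., that enhancing exactly along the KKT-derived direction $\bbm_U$ removes the non-monotonicity in $\bbk_{X|U}$ that was flagged in the discussion after Theorem~\ref{theorem_Ie_min}. This is where the properties collected in Lemma~\ref{lemma_enhancement} (especially $\tilde\bbsigma_Y\preceq\bbsigma_Z$ and the identity $(\bbk_{X|U}^*+\tilde\bbsigma_Y)^{-1}(\bbk_X+\tilde\bbsigma_Y)=(\bbk_{X|U}^*+\bbsigma_Z)^{-1}(\bbk_X+\bbsigma_Z)$) do the real work: they let me show that increasing $\bbk_{X|U}$ up to $\bbk_X$ only decreases the enhanced cost. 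A secondary technical point is keeping the distortion constraint $\bbk_{X|VY}\preceq\bbd$ in play correctly — it is stated with respect to the \emph{original} side information $\bby$, not $\tilde\bby$, so I must invoke Lemma~\ref{lemma_equivalence_1} (whose hypothesis $\bbd\preceq\bbk_{X|Y}$ holds) to convert it into the clean constraint $\bbk_{X|V}\preceq\bbf(\bbd)$ before optimizing. Once these two points are settled, the remaining steps are the routine Gaussian entropy evaluations.
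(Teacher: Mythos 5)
Your plan gets the right high-level shape (show $U=\phi$ is optimal on the enhanced channel, then optimize over $V$ and evaluate), but the central technical step is glossed over and, as stated, is circular.

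The gap is in the claim that the optimality of Gaussian $V$ for the reduced problem is ``handled exactly as in (\ref{dummy_is_dummy})--(\ref{monotonocity})'' and that one can ``invoke Lemma~\ref{lemma_equivalence_1} \ldots to convert it into the clean constraint $\bbk_{X|V}\preceq\bbf(\bbd)$ \emph{before} optimizing.'' Both moves fail. After eliminating $U$, the objective is $I(V;\bbx|\tilde\bby)+I(\bbx;\bbz)$, whose $V$-dependent part involves the enhanced side information $\tilde\bby$, but the feasibility constraint $\bbk_{X|VY}\preceq\bbd$ is with respect to the \emph{original} $\bby$. This mismatch means the ``Gaussian $V$ maximizes $h(\bbx|V,\tilde\bby)$ subject to a covariance constraint'' argument of (\ref{future_use_0})--(\ref{monotonocity}) does not apply verbatim: the constraint and the objective live on different channels. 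Moreover, Lemma~\ref{lemma_equivalence_1} explicitly requires $V$ to be Gaussian as a hypothesis, so you cannot use it to rewrite the constraint over all (possibly non-Gaussian) $V$ before establishing that Gaussian $V$ is optimal --- that is the conclusion you are trying to reach. The paper flags exactly this subtlety (see the discussion around (\ref{optimization_similar})) and handles it with a genuinely different mechanism: a conditional de Bruijn identity (Lemma~\ref{gradient_fisher_conditional}) expresses $h(\tilde\bby|V)-h(\bbx|V)$ as an integral of Fisher information, bounds it above and below via Lemma~\ref{lemma_change_in_fisher}, applies an intermediate-value argument to a one-parameter family $f(t)$ to pin a covariance $\bbk(t^*)$ matching the non-Gaussian cost, and then checks feasibility of the associated Gaussian $\bar V$ using the Fisher--MMSE link of Lemma~\ref{lemma_connection}. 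None of that appears in your plan, and without it the Gaussianity step is unsupported.

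A secondary mismatch: the paper's proof that $U=\phi$ is optimal is a short chain of identities, $I(U;\tilde\bby)-I(U;\bbz)=I(U;\tilde\bby,\bbz)-I(U;\bbz)=I(U;\tilde\bby|\bbz)\geq 0$, valid for arbitrary (not necessarily Gaussian) $U$ thanks to the degradedness $\bbx\rightarrow\tilde\bby\rightarrow\bbz$. Your version --- first restrict to Gaussian $U$ via the extremal inequality, then argue monotonicity of the $U$-term in $\bbk_{X|U}$ --- is more elaborate, presupposes Gaussianity of $U$ before it is justified, and invokes \cite{Liu_Extremal_Inequality} where the paper needs nothing but nonnegativity of mutual information. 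Also, the decomposition you sketch ($I(V;\bbx)=I(V;\bbx|U)+I(U;\bbx)$ and ``combine $I(U;\bbx)+I(\bbx;\bbz|U)$'') does not actually decouple the cost into a $V$-only term plus a $U$-only term, since $I(V;\bbx|U)-I(V;\tilde\bby|U)$ still depends on $U$; the paper's manipulation, which isolates $I(U;\tilde\bby|\bbz)$, is what makes the decoupling work.
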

We note that Lemma~\ref{lemma_lower_bound} implies that $U=\phi$
and a Gaussian $V$ leading to $\bbk_{X|V}=\bbf(\bbd)$ is the
minimizer of the optimization problem $\bar{L}$. The proof of
Lemma~\ref{lemma_lower_bound} is given in
Appendix~\ref{proof_of_lemma_lower_bound}.

Next, we show that indeed $L^G=\bar{L}^G=\bar{L}$ which, in view
of (\ref{ineq_programs}), will imply
$L(0,1)=\bar{L}=\bar{L}^G=L^G$. To this end, using
Lemma~\ref{lemma_lower_bound}, we have
\begin{align}
\bar{L}&=\frac{1}{2}\log\frac{|\bbk_X|}{|\bbf(\bbd)|}-\frac{1}{2}\log\frac{|\bbk_X+\tilde{\bbsigma}_Y|}{|\bbf(\bbd)+\tilde{\bbsigma}_Y|}
+\frac{1}{2}\log\frac{|\bbk_X+\bbsigma_Z|}{|\bbsigma_Z|}\\
&=\frac{1}{2}\log\frac{|\bbk_X|}{|\bbk_{X|V}^*|}-\frac{1}{2}\log\frac{|\bbk_X+\tilde{\bbsigma}_Y|}{|\bbk_{X|V}^*+\tilde{\bbsigma}_Y|}
+\frac{1}{2}\log\frac{|\bbk_X+\bbsigma_Z|}{|\bbsigma_Z|}
\label{lemma_enhancement_implies_1}\\
&=\frac{1}{2}\log\frac{|\bbk_X|}{|\bbk_{X|V}^*|}-\frac{1}{2}\log\frac{|\bbk_{X|U}^*+\tilde{\bbsigma}_Y|}{|\bbk_{X|V}^*+\tilde{\bbsigma}_Y|}
+\frac{1}{2}\log \frac{|\bbk_{X|U}^*+\bbsigma_Z|}{|\bbsigma_Z|}
\label{lemma_enhancement_implies_2}\\
&=\frac{1}{2}\log\frac{|\bbk_X|}{|\bbk_{X|V}^*|}-\frac{1}{2}\log\frac{|\bbk_{X|U}^*+\bbsigma_Y|}{|\bbk_{X|V}^*+\bbsigma_Y|}
+\frac{1}{2}\log \frac{|\bbk_{X|U}^*+\bbsigma_Z|}{|\bbsigma_Z|}
\label{lemma_enhancement_implies_3}\\
&=\bar{L}^G=L^G \label{final_step_xx}
\end{align}
where (\ref{lemma_enhancement_implies_1}) comes from the last
statement of Lemma~\ref{lemma_enhancement},
(\ref{lemma_enhancement_implies_2}) follows from the fifth
statement of Lemma~\ref{lemma_enhancement}, and
(\ref{lemma_enhancement_implies_3}) comes from the fourth
statement of Lemma~\ref{lemma_enhancement}. In view of
(\ref{ineq_programs}), (\ref{final_step_xx}) implies that
$L(0,1)=L^G$; completing the proof of Theorem~\ref{theorem_outer}
as well as the proof of Theorem~\ref{theorem_Ie_min} due to the
fact that $I_e^{\min}=L(0,1)$.

\section{General Case}
\label{sec:general_case}

We now consider the general case where the side information are
given by
\begin{align}
\bby&=\bbh_Y\bbx+\bbn_Y \label{general_1}\\
\bbz&=\bbh_Z \bbx+\bbn_Z \label{general_2}
\end{align}
where without loss of generality, we can assume that the
covariance matrices of Gaussian vectors $\bbn_Y$ and $\bbn_Z$ are
given by identity matrices. We denote the singular value
decomposition of $\bbh_Y$ and $\bbh_Z$ by $\bbh_Y=\bbq_Y
\bblambda_Y\bbr_Y^\top$ and $\bbh_Z=\bbq_Z\bblambda_Z
\bbr_Z^\top$, respectively. Since any invertible transformation
applied to the side information does not change the rate,
information leakage, and distortion region, the side information
given by (\ref{general_1})-(\ref{general_2}) and the side
information obtained by multiplying
(\ref{general_1})-(\ref{general_2}) by $\bbq_Y^\top,\bbq_Z^\top$,
respectively, yield the same rate, information leakage and
distortion region. In other words, the side information given by
(\ref{general_1})-(\ref{general_2}) and the side information given
by
\begin{align}
\bar{\bby}&=\bblambda_Y\bbr_Y^\top\bbx+\bar{\bbn}_Y \label{general_SVD_1}\\
\bar{\bbz}&=\bblambda_Z\bbr_Z^\top \bbx+\bar{\bbn}_Z
\label{general_SVD_2}
\end{align}
yield the same rate, information leakage and distortion region,
where the covariance matrices of $\bar{\bbn}_Y,\bar{\bbn}_Z$ are
given by identity matrices. Next, we claim that there is no loss
of generality to assume that the side information $\bar{\bby}$ and
$\bar{\bbz}$ have the same length as the source $\bbx$. To this
end, assume that the length of $\bar{\bby}$ is smaller than the
length of $\bbx$. In this case, simply, we can concatenate
$\bar{\bby}$ with some zero vector to ensure that both
$\bar{\bby}$ and $\bbx$ have the same length. Next, assume that
the length of $\bar{\bby}$ is larger than the length of $\bbx$. In
this case, $\bblambda_Y$ will definitely have at least ${\rm
length}(\bar{\bby})-{\rm length}(\bbx)$ diagonal elements which
are zero, and hence the corresponding entries in $\bar{\bby}$ will
come from only the noise. Since noise components are independent,
dropping these elements of $\bar{\bby}$ does not change the rate,
information leakage and distortion region. Thus, without loss of
generality, we can assume that ${\rm length}(\bar{\bby})={\rm
length}(\bbx)$, and hence without loss of generality, we can
assume that $\bblambda_Y$ is a square matrix. The same argument
applies to the eavesdropper's side information, and hence, without
loss of generality, we can also assume that $\bblambda_Z$ is a
square matrix. Next, we define the following side information
\begin{align}
\bar{\bby}_\alpha&=(\bblambda_Y+\alpha \bbi)\bbr_Y^\top\bbx+\bar{\bbn}_Y \label{general_alpha_1}\\
\bar{\bbz}_\alpha&=(\bblambda_Z+\alpha \bbi)\bbr_Z^\top
\bbx+\bar{\bbn}_Z \label{general_alpha_2}
\end{align}
where $\alpha >0$. We note that $(\bblambda_Y+\alpha \bbi)$ and
$(\bblambda_Y+\alpha \bbi)$ are invertible matrices. Since
multiplying the side information in
(\ref{general_alpha_1})-(\ref{general_2}) by some invertible
matrices does not change the rate, information leakage and
distortion region, the side information in
(\ref{general_alpha_1})-(\ref{general_alpha_2}) and the following
side information
\begin{align}
\bar{\bar{\bby}}_\alpha&=\bbx+\bar{\bbn}_{Y,\alpha} \label{general_alpha_1_aligned}\\
\bar{\bar{\bbz}}_\alpha&= \bbx+\bar{\bbn}_{Z,\alpha}
\label{general_alpha_2_aligned}
\end{align}
have the same rate, information leakage and distortion region,
where the covariance matrices of $\bar{\bbn}_{Y,\alpha}$ and
$\bar{\bbn}_{Z,\alpha}$ are given by
\begin{align}
\bbsigma_{Y,\alpha}&=\bbr_Y(\bblambda_Y+\alpha \bbi)^{-2} \bbr_Y^\top \label{covariance_y_alpha}\\
\bbsigma_{Z,\alpha}&=\bbr_Z(\bblambda_Z+\alpha \bbi)^{-2}
\label{covariance_z_alpha} \bbr_Z^\top
\end{align}
respectively. For a given distortion constraint $\bbd$, we denote
the rate and information leakage region for the side information
model given in (\ref{general_1})-(\ref{general_2}) by
$\mathcal{R}_o(\bbd)$, where the subscript $o$ stands for the
``original system'', and for the side information model given in
(\ref{general_alpha_1_aligned})-(\ref{general_alpha_2_aligned}) by
$\mathcal{R}_\alpha (\bbd)$. We have the following relationship
between $\mathcal{R}_o(\bbd)$ and $\mathcal{R}_\alpha (\bbd)$.
\begin{Lem}
\label{lemma_inclusion}
\begin{align}
\mathcal{R}_o(\bbd) \subseteq \lim_{\alpha \rightarrow
0}\mathcal{R}_\alpha(\bbd)
\end{align}
\end{Lem}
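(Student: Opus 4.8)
The plan is to invoke the general single‑letter characterization (Theorem~\ref{theorem_general}, which holds verbatim for continuous $(\bbx,\bby,\bbz)$ with differential entropies, together with the optimality of the conditional‑mean reconstruction), to relate the original side information to its $\alpha$‑perturbation by a degradedness argument, and then to let $\alpha\downarrow 0$. After the SVD reduction already carried out, $\mathcal{R}_o(\bbd)$ is the set of $(R,I_e)$ for which there exist $U\to V\to\bbx\to\bar\bby,\bar\bbz$ with $R\geq I(V;\bbx|\bar\bby)$, $I_e\geq I(V;\bbx)-I(V;\bar\bby|U)+I(\bbx;\bar\bbz|U)$ and $\bbk_{X|V\bar\bby}\preceq\bbd$, where $\bar\bby=\bblambda_Y\bbr_Y^\top\bbx+\bar\bbn_Y$, $\bar\bbz=\bblambda_Z\bbr_Z^\top\bbx+\bar\bbn_Z$ with unit‑covariance noises; and $\mathcal{R}_\alpha(\bbd)$ is the same region for the perturbed side information (\ref{general_alpha_1})--(\ref{general_alpha_2}), i.e. $\bar\bby_\alpha=(\bblambda_Y+\alpha\bbi)\bbr_Y^\top\bbx+\bar\bbn_Y$ and $\bar\bbz_\alpha=(\bblambda_Z+\alpha\bbi)\bbr_Z^\top\bbx+\bar\bbn_Z$ --- equivalently the aligned model (\ref{general_alpha_1_aligned})--(\ref{general_alpha_2_aligned}), whose noise covariances (\ref{covariance_y_alpha})--(\ref{covariance_z_alpha}) are positive definite for every $\alpha>0$ so that the characterization applies.

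The first step is degradedness. Since $\bblambda_Y(\bblambda_Y+\alpha\bbi)^{-1}$ is diagonal with entries in $[0,1)$, the diagonal matrix $\bbi-\bblambda_Y^2(\bblambda_Y+\alpha\bbi)^{-2}$ is positive semi‑definite, so adding a suitable independent Gaussian vector to $\bblambda_Y(\bblambda_Y+\alpha\bbi)^{-1}\bar\bby_\alpha$ yields a vector with the law of $\bar\bby$; hence $\bbx\to\bar\bby_\alpha\to\bar\bby$, and symmetrically $\bbx\to\bar\bbz_\alpha\to\bar\bbz$. Now fix any $U\to V\to\bbx$ realizing a point $(R,I_e)\in\mathcal{R}_o(\bbd)$ and evaluate the $\alpha$‑system constraints with the \emph{same} $(U,V)$ (coupling $(U,V,\bbx,\bar\bby,\bar\bbz,\bar\bby_\alpha,\bar\bbz_\alpha)$ so that all the above Markov chains hold). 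Because conditioning on the more informative $(V,\bar\bby_\alpha)$ cannot increase the MMSE error covariance of the conditional‑mean estimator, $\bbk_{X|V\bar\bby_\alpha}\preceq\bbk_{X|V\bar\bby}\preceq\bbd$, so the distortion constraint holds; $V\to\bar\bby_\alpha\to\bar\bby$ gives $I(V;\bar\bby_\alpha)\geq I(V;\bar\bby)$, hence $I(V;\bbx|\bar\bby_\alpha)=I(V;\bbx)-I(V;\bar\bby_\alpha)\leq I(V;\bbx|\bar\bby)\leq R$; and $U\to V\to\bar\bby_\alpha\to\bar\bby$, $\bbx\to\bar\bbz_\alpha\to\bar\bbz$ give $I(V;\bar\bby_\alpha|U)\geq I(V;\bar\bby|U)$ and $I(\bbx;\bar\bbz_\alpha|U)\geq I(\bbx;\bar\bbz|U)$. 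Thus $(U,V)$ realizes a point of $\mathcal{R}_\alpha(\bbd)$ whose first coordinate is at most $R$ and whose second coordinate exceeds $I_e$ only through the nonnegative term $I(\bbx;\bar\bbz_\alpha|U)-I(\bbx;\bar\bbz|U)$.

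The second step is the limit $\alpha\downarrow0$. The same computation shows $\bbx\to\bar\bbz_\alpha\to\bar\bbz_{\alpha'}$ whenever $\alpha'<\alpha$, so $\alpha\mapsto I(\bbx;\bar\bbz_\alpha|U)$ is nonincreasing as $\alpha\downarrow0$ and bounded below by $I(\bbx;\bar\bbz|U)$, so its limit exists. Writing $I(\bbx;\bar\bbz_\alpha|U)=h(\bar\bbz_\alpha|U)-\frac12\log|(2\pi e)\bbi|$ and noting that the signal part $(\bblambda_Z+\alpha\bbi)\bbr_Z^\top\bbx\to\bblambda_Z\bbr_Z^\top\bbx$ in $L^2$ jointly with $U$ (as $\bbx$ has finite second moment) while the noise $\bar\bbn_Z$ is a fixed nondegenerate Gaussian, the standard continuity of the differential entropy of a Gaussian‑smoothed vector under $L^2$‑convergence with uniformly bounded second moments --- applied conditionally on $U$ and then integrated by dominated convergence --- gives $h(\bar\bbz_\alpha|U)\to h(\bar\bbz|U)$, i.e. $I(\bbx;\bar\bbz_\alpha|U)\to I(\bbx;\bar\bbz|U)$. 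Hence the points of $\mathcal{R}_\alpha(\bbd)$ produced in the first step converge, as $\alpha\downarrow0$, to a point that is coordinatewise at most $(R,I_e)$; since each $\mathcal{R}_\alpha(\bbd)$ is closed and upward‑closed in both coordinates, $(R,I_e)$ itself lies in $\lim_{\alpha\to0}\mathcal{R}_\alpha(\bbd)$, establishing $\mathcal{R}_o(\bbd)\subseteq\lim_{\alpha\to0}\mathcal{R}_\alpha(\bbd)$.

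The hard part is the continuity invoked above: differential entropy is not continuous under weak convergence, so one must genuinely exploit the fixed nondegenerate Gaussian noise $\bar\bbn_Z$ to smooth the conditional densities and obtain the uniform integrability (in $\alpha$ and in the value of the arbitrary, possibly non‑Gaussian, auxiliary $U$) needed to pass the limit inside both $h(\,\cdot\,|U)$ and the expectation over $U$; equivalently, one may bound $|I(\bbx;\bar\bbz_\alpha|U)-I(\bbx;\bar\bbz|U)|$ directly through an $L^2$/transportation estimate together with the sub‑Gaussianity of $\bbx$. A minor subsidiary point is to fix the meaning of $\lim_{\alpha\to0}\mathcal{R}_\alpha(\bbd)$ as a set limit --- taken as the inner limit (or its closure) of $\{\mathcal{R}_\alpha(\bbd)\}_{\alpha>0}$ --- for which the convergence of the realized points, combined with the closedness and coordinatewise upward‑closedness of the $\mathcal{R}_\alpha(\bbd)$, is precisely what yields the inclusion.
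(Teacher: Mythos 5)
Your proof follows the same skeleton as the paper's (Appendix~K): fix the $(U,V)$ that realizes a point of $\mathcal{R}_o(\bbd)$, use the degradedness Markov chains
$U\rightarrow V\rightarrow\bbx\rightarrow\bar{\bar{\bby}}_\alpha\rightarrow\bby$ and
$U\rightarrow V\rightarrow\bbx\rightarrow\bar{\bar{\bbz}}_\alpha\rightarrow\bbz$
to verify feasibility of the same $(U,V)$ in the $\alpha$-system, observe that the distortion and rate constraints only improve, and isolate the single non-monotone term in the leakage. Up to that point your reasoning matches the paper's.

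The divergence --- and the genuine gap --- is in how you close the leakage estimate. You leave yourself with having to prove
$\lim_{\alpha\to0}I(\bbx;\bar{\bar{\bbz}}_\alpha|U)=I(\bbx;\bbz|U)$
for an \emph{arbitrary, possibly non-Gaussian} auxiliary $U$, and you candidly flag this as ``the hard part.'' As written it is not a proof: differential entropy is not continuous under weak or even $L^2$ convergence without additional uniform-integrability control, and the conditional version requires this control to be uniform over the values of $U$ before dominated convergence can be applied. None of that is supplied, and there is no obvious way to supply it at the level of generality of the theorem. The paper sidesteps this difficulty entirely: instead of passing to the limit inside the conditional mutual information, it uses the Markov chain $U\rightarrow\bbx\rightarrow\bar{\bar{\bbz}}_\alpha\rightarrow\bbz$ to write
$I(\bbx;\bbz|U)=I(\bbx;\bbz)-I(U;\bbz)$ and
$I(\bbx;\bar{\bar{\bbz}}_\alpha|U)=I(\bbx;\bar{\bar{\bbz}}_\alpha)-I(U;\bar{\bar{\bbz}}_\alpha)$,
and then invokes data processing $I(U;\bbz)\leq I(U;\bar{\bar{\bbz}}_\alpha)$ to obtain
$I(\bbx;\bbz|U)-I(\bbx;\bar{\bar{\bbz}}_\alpha|U)\geq I(\bbx;\bbz)-I(\bbx;\bar{\bar{\bbz}}_\alpha)$.
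The right-hand side involves only jointly Gaussian quantities, so it is an explicit difference of log-determinants whose limit as $\alpha\to0$ is $0$ by continuity of the determinant --- no measure-theoretic continuity of conditional entropies is needed. You should replace your closing argument with this Markov-chain reduction to unconditional (Gaussian) mutual informations; otherwise the proof does not go through.

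A minor point: you also claim each $\mathcal{R}_\alpha(\bbd)$ is closed, which is not established and is not needed once the limit bound is obtained in the paper's form (the regions are upward-closed in both coordinates by definition, which suffices).
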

The proof of Lemma~\ref{lemma_inclusion} is given in
Appendix~\ref{proof_of_lemma_inclusion}. Next, using
Theorem~\ref{theorem_outer}, we obtain an outer bound for the
region $\lim_{\alpha\rightarrow 0}\mathcal{R}_\alpha (\bbd)$,
where this outer bound also serves as an outer bound for the
region $\mathcal{R}_o(\bbd)$ due to Lemma~\ref{lemma_inclusion}.
The corresponding result is stated in the following theorem.
\begin{Theo}
\label{theorem_outer_general}If $\bbd\preceq \bbk_{X|Y}$, any
$(R,I_e)\in\mathcal{R}_o(\bbd)$ satisfies
\begin{align}
R&\geq\frac{1}{2} \log
\frac{|\bbk_{X|Y}|}{|\bbd|}=\frac{1}{2}\log\frac{|\bbk_X|}{|\bbf_o(\bbd)|}-
\frac{1}{2}\log\frac{|\bbh_Y\bbk_X\bbh_Y^\top+\bbi|}{|\bbh_Y\bbf_o(\bbd)\bbh_Y^\top+\bbi|}\\
I_e&\geq \min_{\substack{\bzero\preceq \bbk_{X|V}\preceq
\bbk_{X|U}\preceq \bbk_X\\ \bbk_{X|V}\preceq \bbf_o(\bbd)}}
\frac{1}{2}\log\frac{|\bbk_X|}{|\bbk_{X|V}|}-\frac{1}{2}\log\frac{|\bbh_Y\bbk_{X|U}\bbh_Y^\top+\bbi|}{|\bbh_Y\bbk_{X|V}\bbh_Y^\top+\bbi|}
+\frac{1}{2} \log|\bbh_Y\bbk_{X|U}\bbh_Y^\top+\bbi|
\end{align}
where $\bbf_o(\bbd)=(\bbd^{-1}-\bbh_Y^\top\bbh_Y)^{-1}$.
\end{Theo}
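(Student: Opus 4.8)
The plan is to transfer the result of Theorem~\ref{theorem_outer} (which was proved for the aligned model $\bby=\bbx+\bbn_Y$, $\bbz=\bbx+\bbn_Z$) to the general model via the approximation argument that has already been set up in this section. The key observation is that for each $\alpha>0$, the side information model in (\ref{general_alpha_1_aligned})--(\ref{general_alpha_2_aligned}) is exactly an aligned model, with noise covariances $\bbsigma_{Y,\alpha}$ and $\bbsigma_{Z,\alpha}$ given by (\ref{covariance_y_alpha})--(\ref{covariance_z_alpha}), both of which are positive definite because $\bblambda_Y+\alpha\bbi$ and $\bblambda_Z+\alpha\bbi$ are invertible. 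Hence Theorem~\ref{theorem_outer} applies verbatim to $\mathcal{R}_\alpha(\bbd)$: we get that any $(R,I_e)\in\mathcal{R}_\alpha(\bbd)$ satisfies the rate and leakage bounds of (\ref{rate_bound})--(\ref{leakage_bound}) with $\bbsigma_Y,\bbsigma_Z$ replaced by $\bbsigma_{Y,\alpha},\bbsigma_{Z,\alpha}$ and with $\bbf(\bbd)$ replaced by $\bbf_\alpha(\bbd)=\bbsigma_{Y,\alpha}(\bbsigma_{Y,\alpha}-\bbd)^{-1}\bbsigma_{Y,\alpha}-\bbsigma_{Y,\alpha}$.

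Next I would take the limit $\alpha\rightarrow 0$. Using the explicit forms (\ref{covariance_y_alpha})--(\ref{covariance_z_alpha}) and the identity $\bbr_Y\bblambda_Y^2\bbr_Y^\top$-type manipulations, one checks that $\bbsigma_{Y,\alpha}^{-1}=\bbr_Y(\bblambda_Y+\alpha\bbi)^2\bbr_Y^\top \rightarrow \bbr_Y\bblambda_Y^2\bbr_Y^\top = \bbh_Y^\top\bbh_Y$ (recalling $\bbh_Y=\bbq_Y\bblambda_Y\bbr_Y^\top$ with $\bbq_Y$ orthogonal, so $\bbh_Y^\top\bbh_Y=\bbr_Y\bblambda_Y^2\bbr_Y^\top$), and similarly $\bbsigma_{Z,\alpha}^{-1}\rightarrow\bbh_Z^\top\bbh_Z$. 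A short computation then shows that as $\alpha\rightarrow 0$, $\bbf_\alpha(\bbd)$ converges to $\bbf_o(\bbd)=(\bbd^{-1}-\bbh_Y^\top\bbh_Y)^{-1}$: indeed, rewriting $\bbf_\alpha(\bbd)$ in the inverse-covariance form (the same algebra as in Lemma~\ref{lemma_equivalence_1}, namely $\bbf_\alpha(\bbd)^{-1}=\bbd^{-1}-\bbsigma_{Y,\alpha}^{-1}$) and passing to the limit gives $\bbf_o(\bbd)^{-1}=\bbd^{-1}-\bbh_Y^\top\bbh_Y$. The determinant expressions in the rate and leakage bounds involve terms like $|\bbk_{X|U}+\bbsigma_{Y,\alpha}|/|\bbsigma_{Y,\alpha}|=|\bbsigma_{Y,\alpha}^{-1}\bbk_{X|U}+\bbi| \rightarrow |\bbh_Y^\top\bbh_Y\bbk_{X|U}+\bbi| = |\bbh_Y\bbk_{X|U}\bbh_Y^\top+\bbi|$ (Sylvester's determinant identity), and likewise $|\bbk_{X|U}+\bbsigma_{Z,\alpha}|/|\bbsigma_{Z,\alpha}|\rightarrow|\bbh_Z\bbk_{X|U}\bbh_Z^\top+\bbi|$. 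Substituting these limits into (\ref{rate_bound})--(\ref{leakage_bound}) yields precisely the bounds claimed in the theorem. Finally, invoking Lemma~\ref{lemma_inclusion}, $\mathcal{R}_o(\bbd)\subseteq\lim_{\alpha\rightarrow 0}\mathcal{R}_\alpha(\bbd)$, so the limiting outer bound is also an outer bound for the original region $\mathcal{R}_o(\bbd)$.

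I expect the main obstacle to be the interchange of limit and region: one must make precise the statement that a point lying in $\mathcal{R}_o(\bbd)$, and hence (by Lemma~\ref{lemma_inclusion}) in $\lim_{\alpha\rightarrow0}\mathcal{R}_\alpha(\bbd)$, actually satisfies the limiting inequalities. Since for each fixed $\alpha$ the feasible set $\{\bzero\preceq\bbk_{X|V}\preceq\bbk_{X|U}\preceq\bbk_X,\ \bbk_{X|V}\preceq\bbf_\alpha(\bbd)\}$ and the objective depend continuously on $\alpha$ near $\alpha=0$, and the constraint set is compact, the minimum in (\ref{leakage_bound}) is continuous in $\alpha$; thus a sequence of feasible triples for $\mathcal{R}_\alpha(\bbd)$ achieving values approaching a point in the limiting region has a convergent subsequence whose limit is feasible for the $\alpha=0$ problem and attains no larger value. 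This continuity/compactness bookkeeping — together with verifying that $\bbd\preceq\bbk_{X|Y}$ for the original system implies $\bbd\preceq\bbsigma_{Y,\alpha}$, i.e. feasibility of the enhanced-model problem, for all small $\alpha$ (which follows from $\bbd\prec\bbsigma_{Y,\alpha}$ in the limit and openness of the positive-definite cone) — is the only genuinely delicate part; the determinant and $\bbf_o(\bbd)$ limits are routine.
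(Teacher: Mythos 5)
Your overall plan --- apply Theorem~\ref{theorem_outer} to the $\alpha$-perturbed aligned model of (\ref{general_alpha_1_aligned})--(\ref{general_alpha_2_aligned}), pass to the limit $\alpha\to 0$ via the determinant identities $|\bbk+\bbsigma_{Y,\alpha}|/|\bbsigma_{Y,\alpha}|\to|\bbh_Y\bbk\bbh_Y^\top+\bbi|$, and then invoke Lemma~\ref{lemma_inclusion} --- is exactly the paper's route. But there is a genuine gap in the step you dismiss as ``applies verbatim.'' Theorem~\ref{theorem_outer} applied to the $\alpha$-model has hypothesis $\bbd\preceq\bbk_{X|\bar{\bar{Y}}_\alpha}$, and since the $\alpha$-model has \emph{strictly better} side information, $\bbk_{X|\bar{\bar{Y}}_\alpha}\preceq\bbk_{X|Y}$, so this hypothesis can fail even though $\bbd\preceq\bbk_{X|Y}$ holds for the original model. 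You must go inside the proof of Theorem~\ref{theorem_outer} and notice that the hypothesis is used \emph{only} to deliver the first bullet of Lemma~\ref{lemma_equivalence_1}, namely $\bbsigma_Y-\bbd\succ\bzero$; hence for the $\alpha$-model it suffices to check $\bbsigma_{Y,\alpha}-\bbd\succ\bzero$. Your last paragraph hints at this weaker condition but does not explain why it suffices, and your proposed verification (``$\bbd\prec\bbsigma_{Y,\alpha}$ in the limit and openness of the positive-definite cone'') is unsound: when $\bbh_Y$ is rank-deficient $\bbsigma_{Y,\alpha}$ has no limit as $\alpha\to 0$ (eigenvalues along $\ker\bbh_Y$ diverge), so ``the limit'' is not a well-defined object to perturb from. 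The paper instead works with inverses: from $\bbd\preceq\bbk_{X|Y}=(\bbk_X^{-1}+\bbh_Y^\top\bbh_Y)^{-1}$ it extracts $\bbd^{-1}-\bbh_Y^\top\bbh_Y\succ\bzero$, picks $\beta>0$ with $\bbd^{-1}-\bbh_Y^\top\bbh_Y\succ\beta^2\bbi$, and uses the SVD to get $\bbd^{-1}\succ\bbsigma_{Y,\alpha}^{-1}$ for all $0<\alpha\le\alpha^*$; this converging-inverse argument is what you need, and it is not a routine openness observation.

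Two further remarks worth flagging. First, your limit for the last term of the leakage bound correctly produces $|\bbh_Z\bbk_{X|U}\bbh_Z^\top+\bbi|$, whereas the theorem statement (and (\ref{leakage_bound_alpha_limit})) write $\bbh_Y$ in that position; your computation appears to have caught a typographical slip in the paper. Second, your concern about interchanging the $\alpha\to0$ limit with the minimization is reasonable but is left implicit in the paper too: the paper simply computes the pointwise limit of the bound expressions via Lemma~\ref{lemma_matrix_limits} without a formal continuity-of-min argument, so here you are raising a point the paper glosses over rather than missing something the paper does.
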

The proof of Theorem~\ref{theorem_outer_general} is given in
Appendix~\ref{proof_of_theorem_outer_general}. We prove
Theorem~\ref{theorem_outer_general} in two steps. In the first
step, by using Theorem~\ref{theorem_outer}, we obtain an outer
bound for the region $\mathcal{R}_\alpha (\bbd)$, and in the
second step, we obtain the limit of this outer bound as $\alpha
\rightarrow 0$. As the outer bound in
Theorem~\ref{theorem_outer_general} basically comes from the outer
bound in Theorem~\ref{theorem_outer}, all our previous comments
and remarks about Theorem~\ref{theorem_outer} are also valid for
the outer bound in Theorem~\ref{theorem_outer_general}. Similar to
Theorem~\ref{theorem_outer}, Theorem~\ref{theorem_outer_general}
also provides the minimum information leakage to the eavesdropper
when the rate constraint on the transmitter is removed. Denoting
the corresponding minimum information leakage by $I_e^{\rm
min}(\bbd)$, we have the following theorem.
\begin{Theo}
\label{theorem_minimum_leakage_general}If $\bbd\preceq
\bbk_{X|Y}$, we have
\begin{align}
I_e^{\rm min}(\bbd)&\geq \min_{\substack{\bzero \preceq
\bbk_{X|V}\preceq \bbk_{X|U}\preceq \bbk_X\\ \bbk_{X|V}\preceq
\bbf_o(\bbd)}}
\frac{1}{2}\log\frac{|\bbk_X|}{|\bbk_{X|V}|}-\frac{1}{2}\log\frac{|\bbh_Y\bbk_{X|U}\bbh_Y^\top+\bbi|}{|\bbh_Y\bbk_{X|V}\bbh_Y^\top+\bbi|}
+\frac{1}{2} \log|\bbh_Y\bbk_{X|U}\bbh_Y^\top+\bbi|
\end{align}
where $\bbf_o(\bbd)=(\bbd^{-1}-\bbh_Y^\top\bbh_Y)^{-1}$.
\end{Theo}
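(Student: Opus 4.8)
The plan is to obtain Theorem~\ref{theorem_minimum_leakage_general} as an immediate consequence of Theorem~\ref{theorem_outer_general}. By definition, $I_e^{\rm min}(\bbd)$ is the infimum of the information leakages $I_e$ for which $(R,I_e)\in\mathcal{R}_o(\bbd)$ for some rate $R$, i.e., it is the projection of $\mathcal{R}_o(\bbd)$ onto the $I_e$-axis. Theorem~\ref{theorem_outer_general} states that every $(R,I_e)\in\mathcal{R}_o(\bbd)$ satisfies the asserted lower bound on $I_e$, and since that bound does not involve $R$, it is automatically a lower bound on $I_e^{\rm min}(\bbd)$; this is exactly the claim, so nothing more is required than Theorem~\ref{theorem_outer_general}.

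For a derivation that does not merely quote Theorem~\ref{theorem_outer_general}, one repeats the argument used to prove it. By Lemma~\ref{lemma_inclusion} we have $\mathcal{R}_o(\bbd)\subseteq\lim_{\alpha\to0}\mathcal{R}_\alpha(\bbd)$, hence $I_e^{\rm min}(\bbd)\ge\lim_{\alpha\to0}I_{e,\alpha}^{\rm min}(\bbd)$, where $I_{e,\alpha}^{\rm min}(\bbd)$ denotes the minimum leakage for the aligned model (\ref{general_alpha_1_aligned})--(\ref{general_alpha_2_aligned}) with noise covariances $\bbsigma_{Y,\alpha},\bbsigma_{Z,\alpha}$ from (\ref{covariance_y_alpha})--(\ref{covariance_z_alpha}). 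Since that model has the form (\ref{aligned_1})--(\ref{aligned_2}), Theorem~\ref{theorem_Ie_min} applies for all sufficiently small $\alpha$ --- the distortion target stays admissible because $(\bbk_X^{-1}+\bbsigma_{Y,\alpha}^{-1})^{-1}\to(\bbk_X^{-1}+\bbh_Y^\top\bbh_Y)^{-1}=\bbk_{X|Y}\succeq\bbd$, the borderline case in which $\bbd$ lies on the boundary of the admissible region being handled by first shrinking $\bbd$ to $\bbd-\epsilon\bbi$ and letting $\epsilon\to0$ at the end --- and it expresses $I_{e,\alpha}^{\rm min}(\bbd)$ as the minimization in (\ref{Ie_min_optimization}) with $\bbsigma_Y,\bbsigma_Z,\bbf(\bbd)$ replaced by $\bbsigma_{Y,\alpha},\bbsigma_{Z,\alpha},\bbf_\alpha(\bbd)=(\bbd^{-1}-\bbsigma_{Y,\alpha}^{-1})^{-1}$. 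It then remains to let $\alpha\to0$.

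The main obstacle --- and essentially the only real work --- is this passage to the limit, which requires care because $\bbsigma_{Y,\alpha}$ and $\bbsigma_{Z,\alpha}$ diverge in the directions corresponding to the zero singular values of $\bbh_Y$ and $\bbh_Z$ while their inverses stay bounded. I would proceed in three steps: (i) observe $\bbsigma_{Y,\alpha}^{-1}=\bbr_Y(\bblambda_Y+\alpha\bbi)^2\bbr_Y^\top\to\bbr_Y\bblambda_Y^2\bbr_Y^\top=\bbh_Y^\top\bbh_Y$ and likewise $\bbsigma_{Z,\alpha}^{-1}\to\bbh_Z^\top\bbh_Z$, so that $\bbf_\alpha(\bbd)\to(\bbd^{-1}-\bbh_Y^\top\bbh_Y)^{-1}=\bbf_o(\bbd)$ and the feasible sets converge; (ii) rewrite each log-determinant ratio so that only the bounded matrices $\bbsigma_{Y,\alpha}^{-1},\bbsigma_{Z,\alpha}^{-1}$ appear, using $\frac{|\bbk_{X|U}+\bbsigma_{Y,\alpha}|}{|\bbk_{X|V}+\bbsigma_{Y,\alpha}|}=\frac{|\bbi+\bbsigma_{Y,\alpha}^{-1}\bbk_{X|U}|}{|\bbi+\bbsigma_{Y,\alpha}^{-1}\bbk_{X|V}|}$ and $\frac{|\bbk_{X|U}+\bbsigma_{Z,\alpha}|}{|\bbsigma_{Z,\alpha}|}=|\bbi+\bbsigma_{Z,\alpha}^{-1}\bbk_{X|U}|$, and then apply the identity $|\bbi+\bba\bbb|=|\bbi+\bbb\bba|$ to bring each limit into the form appearing in the statement (the middle term into $\bbh_Y\bbk_{X|\cdot}\bbh_Y^\top+\bbi$, the last term coming from $I(\bbx;\bbz|U)=\tfrac12\log|\bbh_Z\bbk_{X|U}\bbh_Z^\top+\bbi|$); and (iii) verify that $\min$ and $\lim_{\alpha\to0}$ commute, which follows from uniform convergence of the continuous objective on the fixed compact set $\{\bzero\preceq\bbk_{X|V}\preceq\bbk_{X|U}\preceq\bbk_X\}$ together with the convergence of the constraint $\bbk_{X|V}\preceq\bbf_\alpha(\bbd)$ to $\bbk_{X|V}\preceq\bbf_o(\bbd)$. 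Combining (i)--(iii) identifies $\lim_{\alpha\to0}I_{e,\alpha}^{\rm min}(\bbd)$ with the bound displayed in the theorem, completing the proof.
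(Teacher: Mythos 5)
Your first paragraph is exactly the paper's (unstated but clearly intended) proof: Theorem~\ref{theorem_minimum_leakage_general} is the projection of the outer bound in Theorem~\ref{theorem_outer_general} onto the $I_e$-axis, and since the displayed lower bound on $I_e$ does not involve $R$, it is immediately a lower bound on $I_e^{\rm min}(\bbd)$. The remainder of your proposal correctly retraces the proof of Theorem~\ref{theorem_outer_general} (Lemma~\ref{lemma_inclusion}, application of Theorem~\ref{theorem_outer}/\ref{theorem_Ie_min} to the $\alpha$-perturbed aligned system, and the limit $\alpha\to0$ via Lemma~\ref{lemma_matrix_limits}), matching the paper's Appendix~\ref{proof_of_theorem_outer_general}; the only minor difference is your $\epsilon$-shrinkage of $\bbd$ to handle the boundary case, which the paper avoids by noting directly that $\bbd\preceq\bbk_{X|Y}$ already forces $\bbd^{-1}-\bbh_Y^\top\bbh_Y\succeq\bbk_X^{-1}\succ\bzero$, hence $\bbsigma_{Y,\alpha}-\bbd\succ\bzero$ for small $\alpha$.
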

As Theorem~\ref{theorem_minimum_leakage_general} basically comes
from Theorem~\ref{theorem_Ie_min}, all our previous comments and
remarks about Theorem~\ref{theorem_Ie_min} are also valid for
Theorem~\ref{theorem_minimum_leakage_general}.

\section{Conclusions}

In this paper, we study secure lossy source coding for vector
Gaussian sources, where the transmitter sends information about
the source in a way that the legitimate user can reconstruct the
source within a distortion level by using its side information.
Meanwhile, the transmitter wants to keep the mutual information
leakage to the eavesdropper to a minimum, where the eavesdropper
also has a side information about the source. We obtain an outer
bound for the achievable rate, mutual information leakage, and
distortion region. Moreover, we obtain the minimum mutual
information leakage to the eavesdropper when the legitimate user
needs to reconstruct the source within a certain distortion while
there is no constraint on the transmission rate.

\appendixpage
\appendices

\section{Proof of (\ref{I_e_inst_min_0})}
\label{proof_of_I_e_ins_0} We first define the following function
\begin{align}
R(D)=\min_{\substack{V\rightarrow X\rightarrow Y,Z\\
\sigma_{X|VY}^2\leq D}}~I(X;V,Z)
\end{align}
which is monotonically decreasing, continuous and convex in $D$.
Next, we note that when an instantaneous encoding scheme is used,
the minimum-mean-square-error estimator is given by
\begin{align}
\hat{X}_i&=E\left[X_i|g_{1}(X_1),\ldots,g_n(X_n),Y^n\right]\\
&=E\left[X_i|g_{i}(X_i),Y_i\right] \label{independence}
\end{align}
where (\ref{independence}) comes from the independence of
$(X_i,g_i(X_i),Y_i)$ across time. Consequently, when an
instantaneous encoding scheme is used, the
minimum-mean-square-error is given by
\begin{align}
\sigma_{X_i|g_i(X_i)Y_i}^2=E\left[\left(X_i-E\left[X_i|g_{i}(X_i),Y_i\right]\right)^2\right]
\end{align}
Assume that there exists an instantaneous encoding scheme that
achieves the distortion level $D$:
\begin{align}
\lim_{n\rightarrow \infty}\frac{1}{n}\sum_{i=1}^n
\sigma_{X_i|g_i(X_i)Y_i}^2\leq D \label{ach_distortion}
\end{align}
We now obtain a lower bound for the minimum information leakage
for this instantaneous encoding scheme as follows
\begin{align}
\lim_{n\rightarrow
\infty}\frac{1}{n}I(X^n;M,Z^n)&=\lim_{n\rightarrow
\infty}\frac{1}{n}I(X^n;g_1(X_1),\ldots,g_n(X_n),Z^n)\\
&=\lim_{n\rightarrow \infty}\frac{1}{n}\sum_{i=1}^n
I(X_i;g_i(X_i),Z_i) \label{independence_1}\\
&=\lim_{n\rightarrow \infty}\frac{1}{n}\sum_{i=1}^n
I(X_i;V_i,Z_i)\label{def_V}\\
&\geq \lim_{n\rightarrow \infty}\frac{1}{n}\sum_{i=1}^n
R\left(\sigma_{X_i|V_iY_i}^2\right) \label{def_R_D} \\
&\geq \lim_{n\rightarrow \infty}
R\left(\frac{1}{n}\sum_{i=1}^n\sigma_{X_i|V_iY_i}^2\right)\label{R_D_convex}\\
&= R\left(\lim_{n\rightarrow \infty}
\frac{1}{n}\sum_{i=1}^n\sigma_{X_i|V_iY_i}^2\right)\label{R_D_continuous}\\
&\geq R(D) \label{assumption_implies}
\end{align}
where (\ref{independence_1}) comes from the independence of
$(X_i,g_i(X_i),Z_i)$ across time, (\ref{def_V}) follows by setting
$V_i=g_{i}(X_i)$, (\ref{def_R_D}) comes from the definition of
$R(D)$, (\ref{R_D_convex}) is due to the convexity of $R(D)$ in
$D$, (\ref{R_D_continuous}) follows from the fact that $R(D)$ is
continuous in $D$, and (\ref{assumption_implies}) comes from
(\ref{ach_distortion}) and the fact that $R(D)$ is monotonically
decreasing in $D$.

\section{Proof of Lemma~\ref{lemma_strictly_nonzero}}
\label{proof_of_lemma_strictly_nonzero}

We first introduce two lemmas that will be used in the proof of
Lemma~\ref{lemma_strictly_nonzero}. Throughout this appendix, we
use notation $A\ci B$ to denote ``$A$ and $B$ are independent'' to
shorten the presentation.

\begin{Lem}
\label{lemma_dummy_1} Let $Q,T,W$ be arbitrary random variables.
If we have $Q\rightarrow T\rightarrow T+W$ and $T \ci W$. Then, we
have $(Q,T)\ci W$.
\end{Lem}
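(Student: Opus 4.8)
The plan is to derive the conclusion in two steps: first establish the conditional independence $Q\ci W\mid T$, and then upgrade it to the joint independence $(Q,T)\ci W$ by invoking the second hypothesis $T\ci W$.

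For the first step, I would exploit the fact that $T+W$ is an invertible function of $W$ once $T$ is known. Concretely, conditioned on any realization $T=t$, the variable $T+W$ equals $t+W$, so the map $w\mapsto t+w$ is a bijection; hence, given $T=t$, the $\sigma$-algebra generated by $W$ coincides with the one generated by $T+W$. The Markov chain $Q\rightarrow T\rightarrow T+W$ asserts that $Q$ and $T+W$ are conditionally independent given $T$, and by the bijection just noted this is the same assertion as $Q$ and $W$ being conditionally independent given $T$, i.e.\ $Q\ci W\mid T$. (Formally, for each $t$ the regular conditional law of $W$ given $(Q,T)=(q,t)$ is the push-forward under $s\mapsto s-t$ of the regular conditional law of $T+W$ given $(Q,T)=(q,t)$, which by the Markov chain does not depend on $q$.)

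For the second step, I would simply chain the two independence statements together:
\begin{align}
p(q,t,w)=p(q\mid t,w)\,p(t,w)=p(q\mid t)\,p(t)\,p(w)=p(q,t)\,p(w),
\end{align}
where the first equality is the chain rule, the middle equality uses $Q\ci W\mid T$ for the factor $p(q\mid t,w)=p(q\mid t)$ and $T\ci W$ for the factor $p(t,w)=p(t)p(w)$, and the last equality is the chain rule again. The resulting factorization $p(q,t,w)=p(q,t)p(w)$ is precisely the statement $(Q,T)\ci W$.

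The only delicate point is the measurability/bijection argument used to pass from $Q\ci(T+W)\mid T$ to $Q\ci W\mid T$ in the first step; once that conditional independence is in hand, the remainder is a routine application of the chain rule, so I do not anticipate any real obstacle.
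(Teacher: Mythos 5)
Your proof is correct but takes a genuinely different route from the paper's. The paper works directly with characteristic functions: it conditions on $T$, uses the Markov chain to factor $E[e^{s_1Q+s_3(T+W)}\mid T]$, then invokes $T\ci W$ to pull $E[e^{s_3W}]$ out of the outer expectation, arriving at the factorization of the joint characteristic function in one computation. You instead split the argument into two conceptual steps: (i) because, given $T=t$, the map $w\mapsto t+w$ is a bijection, the $\sigma$-algebra $\sigma(T,T+W)$ equals $\sigma(T,W)$, so the Markov condition $Q\ci (T+W)\mid T$ is equivalent to $Q\ci W\mid T$; and (ii) you combine $Q\ci W\mid T$ with $T\ci W$ via the standard contraction property of conditional independence to deduce $(Q,T)\ci W$. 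Your approach makes the structure of the argument transparent and reusable (step (ii) is exactly the graphoid contraction axiom), and step (i) nicely isolates where the additive structure of $T+W$ is actually used. The paper's characteristic-function computation has the advantage of being entirely self-contained and of sidestepping any appeal to densities; your step (ii) is phrased with $p(q,t,w)$, which is an informal shorthand that would need to be re-expressed via regular conditional probabilities for fully arbitrary random variables, though you already gesture at that level of rigor in step (i). Both proofs are valid; yours is the more modular and arguably more illuminating.
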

\begin{proof}
Since a set of random variables is independent iff their joint
characteristic function is the product of their individual
characteristic functions, to prove Lemma~\ref{lemma_dummy_1}, it
is sufficient to show the following.
\begin{align}
E\left[e^{s_1Q+s_2 T+s_3W}\right]=E\left[e^{s_1Q+s_2
T}\right]E\left[e^{s_3W}\right],\quad \forall (s_1,s_2,s_3)
\end{align}
We can show this as follows
\begin{align}
E\left[e^{s_1Q+s_2 T+s_3W}\right]&=E\left[E\left[e^{s_1Q+s_2
T+s_3W}\big|T\right]\right]\\
&=E\left[e^{(s_2-s_3)
T}~E\left[e^{s_1Q+s_3(T+W)}\big|T\right]\right]\\
&=E\left[e^{(s_2-s_3)
T}~E\left[e^{s_1Q}\big|T\right]E\left[e^{s_3(T+W)}\big|T\right]\right]\label{MC_implies_1}\\
&=E\left[e^{s_2
T}~E\left[e^{s_1Q}\big|T\right]E\left[e^{s_3W}\big|T\right]\right]\\
&=E\left[e^{s_2
T}~E\left[e^{s_1Q}\big|T\right]E\left[e^{s_3W}\right]\right]\label{independence_implies_1}\\
&=E\left[e^{s_2
T}~E\left[e^{s_1Q}\big|T\right]\right]E\left[e^{s_3W}\right]\label{independence_implies_2}\\
&=E\left[e^{s_1Q+s_2T}\right]E\left[e^{s_3W}\right]
\label{desired_result_1}
\end{align}
where (\ref{MC_implies_1}) comes from the Markov chain
$Q\rightarrow T \rightarrow T+W$ and
(\ref{independence_implies_1}) follows from the fact that $T\ci
W$. Equation (\ref{desired_result_1}) implies the independence
between $(Q,T)$ and $W$; completing the proof of
Lemma~\ref{lemma_dummy_1}.
\end{proof}

\begin{Lem}
\label{lemma_dummy_2} Let $Q,T,W$ be random variables satisfying
$(T,Q)\ci W$ and $Q\ci T+W$.  Then, we have $Q\ci T$.
\end{Lem}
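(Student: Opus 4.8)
The plan is to follow the strategy used in the proof of Lemma~\ref{lemma_dummy_1}: since a collection of random variables is independent if and only if their joint characteristic function factors into the product of the individual characteristic functions, it suffices to establish
\begin{align}
E\left[e^{s_1 Q+s_2 T}\right]=E\left[e^{s_1 Q}\right]E\left[e^{s_2 T}\right],\qquad \forall\,(s_1,s_2) \label{pp_goal}
\end{align}
First I would rewrite the two hypotheses as characteristic-function identities. The hypothesis $(T,Q)\ci W$ gives
\begin{align}
E\left[e^{s_1 Q+s_2 T+s_3 W}\right]=E\left[e^{s_1 Q+s_2 T}\right]E\left[e^{s_3 W}\right],\qquad \forall\,(s_1,s_2,s_3) \label{pp_a}
\end{align}
and, specializing $s_1=0$ in (\ref{pp_a}), it also gives $T\ci W$, so that $E[e^{s(T+W)}]=E[e^{sT}]E[e^{sW}]$ for all $s$. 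The hypothesis $Q\ci T+W$ gives
\begin{align}
E\left[e^{s_1 Q+s(T+W)}\right]=E\left[e^{s_1 Q}\right]E\left[e^{s(T+W)}\right],\qquad \forall\,(s_1,s) \label{pp_b}
\end{align}

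The key step would be to evaluate (\ref{pp_a}) on the ``diagonal'' $s_2=s_3=s$. Its left-hand side then equals $E[e^{s_1 Q+s(T+W)}]$, which by (\ref{pp_b}) is $E[e^{s_1 Q}]E[e^{s(T+W)}]=E[e^{s_1 Q}]E[e^{sT}]E[e^{sW}]$, while its right-hand side is $E[e^{s_1 Q+sT}]E[e^{sW}]$. Equating the two sides and rearranging gives
\begin{align}
\left(E\left[e^{s_1 Q+sT}\right]-E\left[e^{s_1 Q}\right]E\left[e^{sT}\right]\right)E\left[e^{sW}\right]=0,\qquad \forall\,(s_1,s) \label{pp_c}
\end{align}

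The main obstacle is to cancel the factor $E[e^{sW}]$, the characteristic function of $W$, from (\ref{pp_c}) in order to arrive at (\ref{pp_goal}). Being a characteristic function, $E[e^{sW}]$ is continuous and equals $1$ at $s=0$, hence is nonzero for $s$ in a neighborhood of the origin, which already yields (\ref{pp_goal}) for $s$ near $0$; to push the factorization to all $(s_1,s)$ one needs $E[e^{sW}]$ to be nowhere zero. This poses no difficulty in the way Lemma~\ref{lemma_dummy_2} is invoked in this paper, since there $W$ is always a (non-degenerate) Gaussian random variable, whose characteristic function never vanishes. Cancelling $E[e^{sW}]$ in (\ref{pp_c}) then yields (\ref{pp_goal}) for all $(s_1,s)$, i.e., $Q\ci T$, completing the proof.
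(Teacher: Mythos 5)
Your proof is correct and follows essentially the same route as the paper's: set the $T$- and $W$-coefficients equal in the characteristic-function identity expressing $(T,Q)\ci W$, invoke $Q\ci T+W$ together with the consequence $T\ci W$, and then cancel the common factor $E\left[e^{sW}\right]$ to conclude $Q\ci T$. If anything, you are slightly more careful than the paper, which cancels $E\left[e^{sW}\right]$ silently; your remark that this step requires the characteristic function of $W$ to be nonvanishing (automatic for the Gaussian $W$ arising in the paper's application of this lemma) makes explicit a point the paper leaves implicit.
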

\begin{proof}
Similar to the proof of Lemma~\ref{lemma_dummy_1}, here also we
use the fact that a set of random variables is independent iff
their joint characteristic function is the product of their
individual characteristic functions. To this end, since $(T,Q)\ci
W$, we have
\begin{align}
E\left[e^{s_1 W+s_2T+s_3Q}\right]&=E\left[e^{s_1
W}\right]E\left[e^{s_2T+s_3Q}\right],\quad \forall (s_1,s_2,s_3)
\label{independence_implies_3}
\end{align}
If we set $s_1=s_2$ in (\ref{independence_implies_3}), we get
\begin{align}
E\left[e^{s_2 W+s_2T+s_3Q}\right]&=E\left[e^{s_2
W}\right]E\left[e^{s_2T+s_3Q}\right],\quad \forall (s_2,s_3)
\label{independence_implies_4}
\end{align}
On the other hand, since $Q\ci T+W$, we have
\begin{align}
E\left[e^{s_2 W+s_2T+s_3Q}\right]&=E\left[e^{s_2(
W+T)}\right]E\left[e^{s_3Q}\right]\\
&=E\left[e^{s_2
W}\right]E\left[e^{s_2T}\right]E\left[e^{s_3Q}\right]\label{independence_implies_5}
\end{align}
where (\ref{independence_implies_5}) comes from the fact that
$T\ci W$. In view of (\ref{independence_implies_4}) and
(\ref{independence_implies_5}), we have
\begin{align}
E\left[e^{s_2T+s_3Q}\right]&=E\left[e^{s_2T}\right]E\left[e^{s_3Q}\right]
\end{align}
which implies that $T\ci Q$; completing the proof of
Lemma~\ref{lemma_dummy_2}.
\end{proof}

We now prove Lemma~\ref{lemma_strictly_nonzero}. We note that we
have $I(V;Y|Z)=0$ iff the Markov chain $V\rightarrow Z \rightarrow
Y$ holds. We prove by contradiction that when $D<\sigma^2_{x|y}$,
the Markov chain $V\rightarrow Z\rightarrow Y$ is not possible. To
this end, we note that the side information at the eavesdropper
can be written as
\begin{align}
Z&=X+N_y+\tilde{N}_z
\end{align}
or in other words, we have $N_z=N_y+\tilde{N}_z$ where
$\tilde{N}_z$ is a Gaussian random variable independent of
$(X,N_y)$ with variance $\sigma_z^2-\sigma_y^2>0$. Next, we note
that the Markov chain $V\rightarrow X\rightarrow Y\rightarrow Z$
implies $(V,X)\ci (N_y,\tilde{N}_z)$ in view of
Lemma~\ref{lemma_dummy_1}. Since $Y,Z$ are jointly Gaussian, $Y$
can be written as
\begin{align}
Y=\alpha Z+(Y-\alpha Z)
\end{align}
where $\alpha=E[YZ]/E[Z^2]$, and as a consequence of this $\alpha$
choice, we have $Z\ci Y-\alpha Z$. Hence, if we have the Markov
chain
\begin{align}
V\rightarrow Z \rightarrow Y=\alpha Z+(Y-\alpha Z)
\end{align}
then, Lemma~\ref{lemma_dummy_1} implies that $V\ci Y-\alpha Z$,
where $Y-\alpha Z$ is
\begin{align}
Y-\alpha Z= (1-\alpha)X+(1-\alpha) N_y-\tilde{N}_z
\end{align}
Since $(V,X)\ci(N_y,\tilde{N}_z)$, we have $(V,X)\ci (1-\alpha)
N_y-\tilde{N}_z$, and also $V\ci (1-\alpha)X+(1-\alpha)
N_y-\tilde{N}_z$ due to the assumption that the Markov chain
$V\rightarrow Z \rightarrow Y$ holds. Hence, in view of
Lemma~\ref{lemma_dummy_2}, we have $V\ci X$. Moreover, since we
have the Markov chain $V\rightarrow X \rightarrow Y$, $V\ci X$
implies that $V\ci (X,Y)$. Hence, if $V\ci (X,Y)$, we have
$\sigma_{x|vy}^2=\sigma_{x|y}^2$. However, if $D<\sigma_{x|y}^2$,
$V \ci X$ is not feasible, and this implies that the Markov chain
$V \rightarrow Z\rightarrow Y$ is not possible; completing the
proof of Lemma~\ref{lemma_strictly_nonzero}.

\section{Proof of (\ref{minimum_leakage_parallel})}
\label{proof_of_minimum_leakage_parallel}

Here, we provide the proof of (\ref{minimum_leakage_parallel}). To
this end, we consider a slightly more general case where the joint
distribution of the source and side information is given by
\begin{align}
p(\bx,\by,\bz)=\prod_{i=1}^L p(x_i,y_i,z_i) \label{parallel_pmf}
\end{align}
and the distortion constraint is imposed with a diagonal matrix
$\bbd$ whose diagonal entries are denoted by $D_1,\ldots,D_L$.
From Theorem~\ref{theorem_Gauss_single}, the minimum information
leakage is given by
\begin{align}
I_e^{\rm min}=\min_{\substack{U\rightarrow V\rightarrow
\bbx\rightarrow \bby,\bbz\\
\sigma_{X_i|VY^L}^2\leq
D_i,~i=1,\ldots,L}}~I(V;\bbx)-I(V;\bby|U)+I(\bbx;\bbz|U)
\label{minimum_parallel}
\end{align}
We first introduce the following auxiliary random variables
\begin{align}
U_i&=UY^{i-1}Z_{i+1}^L,\qquad i=1,\ldots,L \label{def_U}\\
V_i&=VY^{i-1}X_{i+1}^L,\qquad i=1,\ldots,L \label{def_V_again}
\end{align}
which satisfy the Markov chain
\begin{align}
U_i\rightarrow V_i\rightarrow X_i\rightarrow Y_i,Z_i
\label{MC_U_V}
\end{align}
which follows from (\ref{parallel_pmf}) and the Markov chain $
U\rightarrow V\rightarrow \bbx \rightarrow \bby,\bbz$.

Next, we introduce the following two lemmas.
\begin{Lem}{\bf(\!\!\cite[Lemma 7]{Korner})}
\label{lemma_CK_sum}Let $S^n,T^n$ be length-$n$ random vectors,
and $W$ be an arbitrary random variable. We have
\begin{align}
\sum_{i=1}^n I(T_{i+1}^n;S_i|WS^{i-1})=\sum_{i=1}^n
I(S^{i-1};T_i|WT_{i+1}^n)
\end{align}
\end{Lem}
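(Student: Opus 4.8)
The plan is to prove the identity by a double-counting argument: expand each side with the chain rule for mutual information until both become a single sum over pairs $1\le i<j\le n$, and then match the two sums term-by-term using the symmetry of mutual information. (An induction on $n$ also works, but the direct expansion keeps the bookkeeping transparent and is the route I would take.)

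First I would expand the left-hand side. For each fixed $i$, applying the chain rule to $I(T_{i+1}^n;S_i|WS^{i-1})$ while peeling off the coordinates of $T_{i+1}^n$ in the order $T_n,T_{n-1},\ldots,T_{i+1}$ gives
\begin{align}
I(T_{i+1}^n;S_i|WS^{i-1})=\sum_{j=i+1}^n I(T_j;S_i|WS^{i-1}T_{j+1}^n),
\end{align}
so that summing over $i$ rewrites the left-hand side as $\sum_{1\le i<j\le n} I(T_j;S_i|WS^{i-1}T_{j+1}^n)$.

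Next I would expand the right-hand side analogously. For each fixed $i$, applying the chain rule to $I(S^{i-1};T_i|WT_{i+1}^n)$ while peeling off the coordinates of $S^{i-1}$ in the order $S_1,S_2,\ldots,S_{i-1}$ gives $I(S^{i-1};T_i|WT_{i+1}^n)=\sum_{k=1}^{i-1} I(S_k;T_i|WS^{k-1}T_{i+1}^n)$, so that summing over $i$ rewrites the right-hand side as $\sum_{1\le k<i\le n} I(S_k;T_i|WS^{k-1}T_{i+1}^n)$; renaming the dummy indices $(k,i)\to(i,j)$ puts this in the form $\sum_{1\le i<j\le n} I(S_i;T_j|WS^{i-1}T_{j+1}^n)$.

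Comparing the two expressions, every term $I(T_j;S_i|WS^{i-1}T_{j+1}^n)$ on the left equals the corresponding term $I(S_i;T_j|WS^{i-1}T_{j+1}^n)$ on the right by symmetry of mutual information, which finishes the proof. The one place requiring care — and the step I expect to be the only genuine obstacle — is choosing the two chain-rule orderings so that the conditioning sets produced on the two sides coincide exactly (namely $WS^{i-1}T_{j+1}^n$ in both); an incompatible choice of peeling order yields conditioning sets that differ and breaks the term-by-term cancellation.
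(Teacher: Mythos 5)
Your proof is correct. The paper itself cites this identity from Körner's 1978 paper without reproducing a proof, and the double-expansion argument you give is exactly the classical proof from that source: expand both sides via the chain rule into double sums over $1 \le i < j \le n$, verify that a careful choice of peeling order makes the conditioning sets on both sides coincide as $WS^{i-1}T_{j+1}^n$, and conclude by symmetry of mutual information. Your bookkeeping checks out and your cautionary note about the peeling order is exactly where one has to be careful.
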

Using Lemma~\ref{lemma_CK_sum}, the following lemma can be proved.
\begin{Lem}
\label{lemma_CK_sum_1}
\begin{align}
I(W;S^n)-I(W;T^n)=\sum_{i=1}^n
I(W;S_i|S^{i-1}T_{i+1}^n)-I(W;T_i|S^{i-1}T_{i+1}^n)
\end{align}
\end{Lem}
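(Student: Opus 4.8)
The plan is to reduce the claimed identity to the Csisz\'ar sum identity of Lemma~\ref{lemma_CK_sum}. First I would expand $I(W;S^n)$ by the chain rule in the \emph{forward} direction, $I(W;S^n)=\sum_{i=1}^n I(W;S_i|S^{i-1})$, and $I(W;T^n)$ by the chain rule in the \emph{backward} direction, $I(W;T^n)=\sum_{i=1}^n I(W;T_i|T_{i+1}^n)$, so that the $i$-th summands already carry the conditioning blocks $S^{i-1}$ and $T_{i+1}^n$ that appear in the target expression (with the usual conventions $S^0=\emptyset$ and $T_{n+1}^n=\emptyset$).

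Next, into each summand I would graft the missing block using the chain rule for mutual information: for the $S$-terms,
\begin{align}
I(W;S_i|S^{i-1})=I(W;S_i|S^{i-1}T_{i+1}^n)+I(T_{i+1}^n;S_i|S^{i-1})-I(T_{i+1}^n;S_i|WS^{i-1}),
\end{align}
and similarly I would expand $I(W;T_i|T_{i+1}^n)$ by inserting the block $S^{i-1}$. Subtracting the two expansions and summing over $i$, the ``main'' contribution is exactly $\sum_i\big[I(W;S_i|S^{i-1}T_{i+1}^n)-I(W;T_i|S^{i-1}T_{i+1}^n)\big]$, which is the desired right-hand side; what remains is to show that the four leftover sums cancel in pairs. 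Applying Lemma~\ref{lemma_CK_sum} with the vectors $S^n,T^n$ and a trivial (constant) choice of the auxiliary random variable gives $\sum_i I(T_{i+1}^n;S_i|S^{i-1})=\sum_i I(S^{i-1};T_i|T_{i+1}^n)$, and applying it a second time with the auxiliary random variable taken to be $W$ gives $\sum_i I(T_{i+1}^n;S_i|WS^{i-1})=\sum_i I(S^{i-1};T_i|WT_{i+1}^n)$; together these two equalities cancel all the leftover terms, completing the proof.

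The one point that needs care is pure bookkeeping: making sure the conditioning sets produced by the grafting step line up precisely with the form $WS^{i-1}$ and $WT_{i+1}^n$ (respectively $S^{i-1}$ and $T_{i+1}^n$) required by Lemma~\ref{lemma_CK_sum}, and checking the boundary conventions so that the forward and backward chain-rule expansions of $I(W;S^n)$ and $I(W;T^n)$ are literally correct. There is no analytic difficulty.

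For completeness I note an even shorter route that bypasses Lemma~\ref{lemma_CK_sum} altogether: telescope the hybrid quantity $\Phi_i:=I(W;S^i,T_{i+1}^n)$, for which $\Phi_0=I(W;T^n)$ and $\Phi_n=I(W;S^n)$. A single application of the chain rule around the common block $(S^{i-1},T_{i+1}^n)$ gives $\Phi_i-\Phi_{i-1}=I(W;S_i|S^{i-1}T_{i+1}^n)-I(W;T_i|S^{i-1}T_{i+1}^n)$, so that $\sum_{i=1}^n(\Phi_i-\Phi_{i-1})=\Phi_n-\Phi_0$ yields the identity directly.
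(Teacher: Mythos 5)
Your first argument is correct and is exactly what the paper has in mind: the paper gives no written proof, merely noting that Lemma~\ref{lemma_CK_sum_1} ``can be proved'' using Lemma~\ref{lemma_CK_sum}, and your expand--graft--cancel computation (two applications of the Csisz\'ar sum identity, once with a trivial conditioning variable and once conditioned on $W$) fills in precisely that intended gap; I verified the grafting identities and the pairwise cancellations. Your telescoping alternative via $\Phi_i=I(W;S^i,T_{i+1}^n)$, with $\Phi_0=I(W;T^n)$, $\Phi_n=I(W;S^n)$, and $\Phi_i-\Phi_{i-1}=I(W;S_i|S^{i-1}T_{i+1}^n)-I(W;T_i|S^{i-1}T_{i+1}^n)$, is also correct and is in fact the cleaner route: it bypasses Lemma~\ref{lemma_CK_sum} entirely, uses only the chain rule, and makes the boundary conventions transparent, so if you were to write the proof out in full you could lead with the telescoping argument and mention the Csisz\'ar-sum derivation as the one the paper gestures at.
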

Now, we proceed with (\ref{minimum_parallel}) as follows
\begin{align}
\lefteqn{I_e^{\rm min}=\min_{\substack{U\rightarrow V\rightarrow
\bbx\rightarrow \bby,\bbz\\
\sigma_{X_i|VY^L}^2\leq
D_i,~i=1,\ldots,L}}~I(V;\bbx)-I(V;\bby|U)+I(\bbx;\bbz|U)}\\
&= \min_{\substack{U\rightarrow V\rightarrow
\bbx\rightarrow \bby,\bbz\\
\sigma_{X_i|VY^L}^2\leq
D_i,~i=1,\ldots,L}}~I(V;\bbx)-I(V;\bby)+I(U;\bby)-I(U;\bbz)+I(\bbx;\bbz)
\label{MC_generic_implies}\\
&=\min_{\substack{U\rightarrow V\rightarrow
\bbx\rightarrow \bby,\bbz\\
\sigma_{X_i|VY^L}^2\leq D_i,~i=1,\ldots,L}}~\sum_{i=1}^L
I(V;X_i|Y^{i-1},X_{i+1}^L)-I(V;Y_i|Y^{i-1},X_{i+1}^L)\nonumber\\
&\qquad\qquad \qquad \qquad \qquad +\sum_{i=1}^L
I(U;Y_i|Y^{i-1},Z_{i+1}^L)-I(U;Z_i|Y^{i-1},Z_{i+1}^L)+I(\bbx;\bbz)
\label{lemma_CK_sum_1_implies} \\
&=\min_{\substack{U\rightarrow V\rightarrow
\bbx\rightarrow \bby,\bbz\\
\sigma_{X_i|VY^L}^2\leq D_i,~i=1,\ldots,L}}~\sum_{i=1}^L
I(V;X_i|Y^{i-1},X_{i+1}^L)-I(V;Y_i|Y^{i-1},X_{i+1}^L)\nonumber\\
&\qquad\qquad \qquad \qquad \qquad +\sum_{i=1}^L
I(U;Y_i|Y^{i-1},Z_{i+1}^L)-I(U;Z_i|Y^{i-1},Z_{i+1}^L)+I(X_i;Z_i)
\label{parallel_pmf_implies}\\
&=\min_{\substack{U\rightarrow V\rightarrow
\bbx\rightarrow \bby,\bbz\\
\sigma_{X_i|VY^L}^2\leq D_i,~i=1,\ldots,L}}~\sum_{i=1}^L
I(Y^{i-1},X_{i+1}^L,V;X_i)-I(Y^{i-1},X_{i+1}^L,V;Y_i)\nonumber\\
&\qquad\qquad \qquad \qquad \qquad +\sum_{i=1}^L
I(Y^{i-1},Z_{i+1}^L,U;Y_i)-I(Y^{i-1},Z_{i+1}^L,U;Z_i)+I(X_i;Z_i)
\label{parallel_pmf_implies_1}\\
&=\min_{\substack{U\rightarrow V\rightarrow
\bbx\rightarrow \bby,\bbz\\
\sigma_{X_i|VY^L}^2\leq D_i,~i=1,\ldots,L}}~\sum_{i=1}^L
I(V_i;X_i)-I(V_i;Y_i)+
I(U_i;Y_i)-I(U_i;Z_i)+I(X_i;Z_i)\label{def_imply} \\
&=\min_{\substack{U\rightarrow V\rightarrow
\bbx\rightarrow \bby,\bbz\\
\sigma_{X_i|VY^L}^2\leq D_i,~i=1,\ldots,L}}~\sum_{i=1}^L
I(V_i;X_i)-I(V_i;Y_i|U_i)+ I(X_i;Z_i|U_i) \label{MC_U_V_implies}\\
&\geq \min_{\substack{U_i\rightarrow V_i\rightarrow
X_i\rightarrow Y_i,Z_i\\
\sigma_{X_i|V_iY_i}^2\leq D_i,~i=1,\ldots,L}}~\sum_{i=1}^L
I(V_i;X_i)-I(V_i;Y_i|U_i)+ I(X_i;Z_i|U_i) \label{cond_reduces}
\end{align}
where (\ref{MC_generic_implies}) comes from the Markov chain
$U\rightarrow V\rightarrow \bbx \rightarrow \bby,\bbz$,
(\ref{lemma_CK_sum_1_implies}) follows from
Lemma~\ref{lemma_CK_sum_1}, (\ref{parallel_pmf_implies}) and
(\ref{parallel_pmf_implies_1}) are due to (\ref{parallel_pmf}),
(\ref{def_imply}) follows from the definitions of $U_i,V_i$ in
(\ref{def_U}) and (\ref{def_V_again}), respectively,
(\ref{MC_U_V_implies}) comes from (\ref{MC_U_V}), and
(\ref{cond_reduces}) follows from
\begin{align}
\sigma^2_{X_i|VY^L}&\geq \sigma^2_{X_i|VY^LX_{i+1}^L} \label{cond_reduces_mmse}\\
&=\sigma^2_{X_i|VY^iX_{i+1}^L} \label{some_MC} \\
&=\sigma^2_{X_i|V_iY_i} \label{def_V_again_implies}
\end{align}
where (\ref{cond_reduces_mmse}) follows from the fact that
conditioning reduces MMSE (which will be shown in
Appendix~\ref{proof_of_conditioning_reduces_mmse}),
(\ref{some_MC}) comes from the following Markov chain
\begin{align}
X_i,V,Y^i\rightarrow X_{i+1}^L \rightarrow Y_{i+1}^L
\end{align}
which is a consequence of (\ref{parallel_pmf}) and the Markov
chain $U\rightarrow V\rightarrow \bbx\rightarrow\bby,\bbz$, and
(\ref{def_V_again_implies}) is obtained by using the definition of
$V_i$ given in (\ref{def_V_again}). Hence, (\ref{cond_reduces})
implies that when the joint distribution of the source and side
information can be factorized as in (\ref{parallel_pmf}), the
minimum information leakage is given by
\begin{align}
I_e^{\rm min}=\min_{\substack{U_i\rightarrow V_i\rightarrow
X_i\rightarrow Y_i,Z_i\\
\sigma_{X_i|V_iY_i}^2\leq D_i,~i=1,\ldots,L}}~\sum_{i=1}^L
I(V_i;X_i)-I(V_i;Y_i|U_i)+ I(X_i;Z_i|U_i)
\label{minimum_parallel_final}
\end{align}
We now specialize (\ref{minimum_parallel_final}) for the case
given in Example~\ref{example_double_aux}, where $L=2$ and we have
the following Markov chains
\begin{align}
X_1\rightarrow Y_1 \rightarrow Z_1 \label{sub-channel_1} \\
X_2\rightarrow Z_2 \rightarrow Y_2 \label{sub-channel_2}
\end{align}
Under these conditions, the minimum information leakage is given
by
\begin{align}
I_e^{\rm min}&=\min_{\substack{U_1\rightarrow V_1\rightarrow
X_1\rightarrow Y_1\rightarrow Z_1\\
\sigma_{X_1|V_1Y_1}^2\leq D_1}}~ I(V_1;X_1)-I(V_1;Y_1|U_1)+
I(X_1;Z_1|U_1)\nonumber\\
&\quad+\min_{\substack{U_2\rightarrow V_2\rightarrow
X_2\rightarrow Z_2 \rightarrow Y_2\\
\sigma_{X_2|V_2Y_2}^2\leq D_2}}~I(V_2;X_2)-I(V_2;Y_2|U_2)+
I(X_2;Z_2|U_2) \\
&=\min_{\substack{V_1\rightarrow
X_1\rightarrow Y_1\rightarrow Z_1\\
\sigma_{X_1|V_1Y_1}^2\leq D_1}}~ I(V_1;X_1)-I(V_1;Y_1)+
I(X_1;Z_1)\nonumber\\
&\quad +\min_{\substack{U_2\rightarrow V_2\rightarrow
X_2\rightarrow Z_2 \rightarrow Y_2\\
\sigma_{X_2|V_2Y_2}^2\leq D_2}}~I(V_2;X_2)-I(V_2;Y_2|U_2)+
I(X_2;Z_2|U_2) \label{sub-channel_1_implies}\\
&=\min_{\substack{V_1\rightarrow
X_1\rightarrow Y_1\rightarrow Z_1\\
\sigma_{X_1|V_1Y_1}^2\leq D_1}}~ I(V_1;X_1)-I(V_1;Y_1)+
I(X_1;Z_1)\nonumber\\
&\quad +\min_{\substack{ V_2\rightarrow
X_2\rightarrow Z_2 \rightarrow Y_2\\
\sigma_{X_2|V_2Y_2}^2\leq D_2}}~I(V_2;X_2)+ I(X_2;Z_2|V_2)
\label{sub-channel_2_implies}
\end{align}
where (\ref{sub-channel_1_implies})-(\ref{sub-channel_2_implies})
come from the following Markov chains
\begin{align}
U_1\rightarrow V_1 \rightarrow X_1 \rightarrow Y_1 \rightarrow
Z_1\\
U_2\rightarrow V_2 \rightarrow X_2 \rightarrow Z_2 \rightarrow Y_2
\end{align}
respectively; completing the proof.

\section{Proofs of (\ref{minimum_leakage_phi}) and (\ref{minimum_leakage_single})}
\label{proof_of_minimum_leakage_single}

We first prove (\ref{minimum_leakage_phi}). To this end, we note
that when the joint distribution of the source and side
information is given by
\begin{align}
p(\bx,\by,\bz)=\prod_{i=1}^L p(x_i,y_i,z_i)
\end{align}
and the distortion constraint is imposed by a diagonal matrix
$\bbd$ with diagonal entries $D_1,\ldots,D_L$, the minimum
information leakage is given by
\begin{align}
I_e^{\rm min}=\min_{\substack{U_i\rightarrow V_i\rightarrow
X_i\rightarrow Y_i,Z_i\\
\sigma_{X_i|V_iY_i}^2\leq D_i,~i=1,\ldots,L}}~\sum_{i=1}^L
I(V_i;X_i)-I(V_i;Y_i|U_i)+ I(X_i;Z_i|U_i)
\label{minimum_parallel_final_again}
\end{align}
as shown in Appendix~\ref{proof_of_minimum_leakage_parallel} (in
particular, see (\ref{minimum_parallel_final})). When we set
$U=\phi$, in other words, when we set $U_1=\phi,\ldots,U_L=\phi$,
(\ref{minimum_parallel_final_again}) reduces to
\begin{align}
I_e^{\rm min-\phi}=\min_{\substack{ V_i\rightarrow
X_i\rightarrow Y_i,Z_i\\
\sigma_{X_i|V_iY_i}^2\leq D_i,~i=1,\ldots,L}}~\sum_{i=1}^L
I(V_i;X_i)-I(V_i;Y_i)+ I(X_i;Z_i)
\label{minimum_parallel_final_again_1}
\end{align}
which is the desired result in (\ref{minimum_leakage_phi}).

Next, we prove (\ref{minimum_leakage_single}) by using
(\ref{minimum_parallel_final_again}). When we set $U=V$, in other
words, when we set $U_1=V_1,\ldots,U_L=V_L$ in
(\ref{minimum_parallel_final_again}), we get
\begin{align}
I_e^{\rm min}=\min_{\substack{U_i\rightarrow V_i\rightarrow
X_i\rightarrow Y_i,Z_i\\
\sigma_{X_i|V_iY_i}^2\leq D_i,~i=1,\ldots,L}}~\sum_{i=1}^L
I(V_i;X_i)+ I(X_i;Z_i|V_i)
\end{align}
which is the desired result in (\ref{minimum_leakage_single}).

\section{Proof of Lemma~\ref{lemma_equivalence}}
\label{proof_of_lemma_equivalence}

We note that since $\bbx,\bby$ are jointly Gaussian, we
have~\cite[page 155]{poor_book}
\begin{align}
\bbk_{X|Y}&=\bbk_X-\bbk_{XY}\bbk_Y^{-1}\bbk_{YX}\\
&=\bbk_X-\bbk_X(\bbk_X+\bbsigma_Y)^{-1}\bbk_X
\label{structure_1}\\
&=\bbk_{X}(\bbk_X+\bbsigma_Y)^{-1}\bbsigma_Y \label{future_use}
\end{align}
where (\ref{structure_1}) comes from the fact that
$\bby=\bbx+\bbn_Y$. Next, we have the following chain of
equalities
\begin{align}
\frac{|\bbk_X(\bbk_X+\bbsigma_Y)^{-1}|}{|\bbf(\bbd)(\bbf(\bbd)+\bbsigma_Y)^{-1}|}&=
\frac{|\bbk_X(\bbk_X+\bbsigma_Y)^{-1}\bbsigma_Y|}{|\bbf(\bbd)(\bbf(\bbd)+\bbsigma_Y)^{-1}\bbsigma_Y|}\\
&=
\frac{|\bbk_{X|Y}|}{|(\bbsigma_Y(\bbsigma_Y-\bbd)^{-1}\bbsigma_Y-\bbsigma_Y)\bbsigma_Y^{-1}(\bbsigma_Y-\bbd)|}\label{definition_FD}\\
&=\frac{|\bbk_{X|Y}|}{|\bbd|} \label{qed}
\end{align}
where (\ref{definition_FD}) follows from the definition of
$\bbf(\bbd)$, i.e., $
\bbf(\bbd)=\bbsigma_Y(\bbsigma_Y-\bbd)^{-1}\bbsigma_Y-\bbsigma_Y$.
Equation (\ref{qed}) implies (\ref{lemma_equivalence_statement});
completing the proof of Lemma~\ref{lemma_equivalence}.

\section{Proof of Lemma~\ref{lemma_equivalence_1}}
\label{proof_of_lemma_equivalence_1}

We first prove the first statement of the lemma. To this end,
using (\ref{future_use}), we have
\begin{align}
\bbk_{X|Y}
&=\bbk_{X}(\bbk_X+\bbsigma_Y)^{-1}\bbsigma_Y\\
&=\bbsigma_Y-\bbsigma_Y(\bbk_X+\bbsigma_Y)^{-1}\bbsigma_Y
\label{dummy_1}
\end{align}
Hence, using (\ref{dummy_1}), the constraint $\bbd \preceq
\bbk_{X|Y}$ can be expressed as
\begin{align}
\bbd\preceq
\bbsigma_Y-\bbsigma_Y(\bbk_X+\bbsigma_Y)^{-1}\bbsigma_Y
\end{align}
which is
\begin{align}
\bbsigma_Y(\bbk_X+\bbsigma_Y)^{-1}\bbsigma_Y \preceq
\bbsigma_Y-\bbd
\end{align}
where $\bbsigma_Y(\bbk_X+\bbsigma_Y)^{-1}\bbsigma_Y \succ\bzero $
implying $\bbsigma_Y-\bbd\succ \bzero$. Hence, $\bbsigma_Y-\bbd$
is non-singular, and $(\bbsigma_Y-\bbd)^{-1}$ exists.

Next, we prove the second statement of the lemma. To this end, we
note that since $(V,\bbx,\bby)$ are jointly Gaussian,
$\bby=\bbx+\bbn_Y$, and $V$ is independent of $\bbn_Y$,
$\bbk_{X|VY}$ is given by~\cite[page 155]{poor_book}
\begin{align}
\bbk_{X|VY}=\bbk_X-[~\bbk_{XV}~~\bbk_{X}~]~\bbm^{-1}~[~\bbk_{XV}~~
\bbk_{X}~]^\top \label{mmse_def}
\end{align}
where $\bbm$ is given by
\begin{align}
\bbm= \left[
\begin{array}{cc}
\bbk_V & \bbk_{VX}\\
\bbk_{XV} & \bbk_Y
\end{array}
\right]
\end{align}
Using block matrix inversion lemma~\cite[page
45]{matrix_cookbook}, $\bbm^{-1}$ can be obtained as
\begin{align}
\bbm^{-1}= \left[
\begin{array}{cc}
\bbk_V^{-1}+\bbk_V^{-1}\bbk_{VX}\bbdelta_M^{-1}\bbk_{XV}\bbk_V^{-1} & -\bbk_V^{-1}\bbk_{VX}\bbdelta_M^{-1}\\
-\bbdelta_M^{-1}\bbk_{XV}\bbk_V^{-1} &\bbdelta_M^{-1}
\end{array}
\right] \label{m_inverse}
\end{align}
where $\bbdelta_M$ is given by
\begin{align}
\bbdelta_M&=\bbk_Y-\bbk_{XV}\bbk_V^{-1}\bbk_{VX}\\
&=\bbk_X-\bbk_{XV}\bbk_V^{-1}\bbk_{VX}+\bbsigma_Y\\
&=\bbk_{X|V}+\bbsigma_Y \label{delta_def}
\end{align}
where the last equality follows from the fact that
$\bbk_{X|V}=\bbk_X-\bbk_{XV}\bbk_V^{-1}\bbk_{VX}$. Using
(\ref{m_inverse}) and (\ref{delta_def}), we get
\begin{align}
[~\bbk_{XV}~~\bbk_{X}~]~\bbm^{-1}=\big[~\bbsigma_Y\bbdelta_M^{-1}\bbk_{XV}\bbk_V^{-1}~~~\bbi-\bbsigma_Y\bbdelta_M^{-1}~\big]
\end{align}
using this in conjunction with (\ref{delta_def}), we obtain
\begin{align}
[~\bbk_{XV}~~\bbk_{X}~]~\bbm^{-1}~[~\bbk_{XV}~~\bbk_{X}~]^\top=\bbk_X-\bbsigma_Y+\bbsigma_Y\bbdelta_M^{-1}\bbsigma_Y
\label{complicated_mmse}
\end{align}
Using (\ref{complicated_mmse}) in (\ref{mmse_def}), we have
\begin{align}
\bbk_{X|VY}&=\bbsigma_Y-\bbsigma_Y\bbdelta_M^{-1}\bbsigma_Y\\
&=\bbsigma_Y-\bbsigma_Y(\bbk_{X|V}+\bbsigma_Y)^{-1}\bbsigma_Y
\label{complicated_mmse_1}
\end{align}
where (\ref{complicated_mmse_1}) follows from (\ref{delta_def}).
Thus, using (\ref{complicated_mmse_1}), the constraint
$\bbk_{X|VY}\preceq \bbd$ can be expressed as follows
\begin{align}
\bbsigma_Y-\bbsigma_Y(\bbk_{X|V}+\bbsigma_Y)^{-1}\bbsigma_Y
\preceq \bbd
\end{align}
from which, since $\bbsigma_Y-\bbd\succ \bzero $, the following
order can be obtained
\begin{align}
\bbk_{X|V} \preceq
\bbsigma_Y(\bbsigma_Y-\bbd)^{-1}\bbsigma_Y-\bbsigma_Y=\bbf(\bbd)
\end{align}
which completes the proof of Lemma~\ref{lemma_equivalence_1}.

\section{Conditioning Reduces MMSE}
\label{proof_of_conditioning_reduces_mmse}

Here, we prove that conditioning reduces MMSE. To this end, we
introduce the following lemma.

\begin{Lem}
\label{lemma_dummy} Let $\bbu$ and $\bbv$ be any two
$n$-dimensional random vectors and $g:\mathbb{R}^n\rightarrow
\mathbb{R}^n$. Then,
\begin{align}
E\left[g(\bbv)g^\top(\bbv)|\bbu=\bu\right]\succeq
E\left[g(\bbv)|\bbu=\bu\right]E\left[g^\top(\bbv)|\bbu=\bu\right]
\end{align}
\end{Lem}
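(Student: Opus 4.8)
The plan is to reduce this matrix inequality to the elementary fact that the variance of a scalar random variable is nonnegative. Fix an arbitrary deterministic vector $\bw\in\mathbb{R}^n$, and work throughout under the conditional distribution of $\bbv$ given $\bbu=\bu$. First I would consider the scalar random variable $\bw^\top g(\bbv)$ and note that its conditional variance is nonnegative, i.e.,
\[
E\big[(\bw^\top g(\bbv))^2\,\big|\,\bbu=\bu\big]\geq\big(E\big[\bw^\top g(\bbv)\,\big|\,\bbu=\bu\big]\big)^2 .
\]
Next, since $\bw$ is deterministic, I would rewrite $(\bw^\top g(\bbv))^2=\bw^\top g(\bbv)g^\top(\bbv)\bw$ and pull $\bw$ outside the conditional expectation by linearity, so that the previous display becomes
\[
\bw^\top\Big(E\big[g(\bbv)g^\top(\bbv)\,\big|\,\bbu=\bu\big]-E\big[g(\bbv)\,\big|\,\bbu=\bu\big]E\big[g^\top(\bbv)\,\big|\,\bbu=\bu\big]\Big)\bw\geq 0 .
\]
Since $\bw\in\mathbb{R}^n$ is arbitrary, the bracketed symmetric matrix is positive semi-definite, which is precisely the assertion of Lemma~\ref{lemma_dummy}.

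The argument is essentially a one-liner — its content is just that a conditional covariance matrix is positive semi-definite — so there is no genuine obstacle. The only point deserving a remark is that the conditional second moments should be assumed finite so that the conditional mean and the conditional covariance matrix are well defined; this holds under the standing finiteness assumptions of the paper. An equivalent route, if one prefers to avoid fixing $\bw$, is to observe that $E\big[(g(\bbv)-\mathbf{c})(g(\bbv)-\mathbf{c})^\top\,\big|\,\bbu=\bu\big]\succeq\bzero$ for every constant vector $\mathbf{c}$, and then specialize $\mathbf{c}=E[g(\bbv)\,|\,\bbu=\bu]$; expanding the square yields the same inequality.
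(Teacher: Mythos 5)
Your proposal is correct, and the ``alternative route'' you mention at the end---taking $\mathbf{c}=E[g(\bbv)\,|\,\bbu=\bu]$ and expanding $E[(g(\bbv)-\mathbf{c})(g(\bbv)-\mathbf{c})^\top\,|\,\bbu=\bu]\succeq\bzero$---is exactly the paper's one-line proof. Your primary argument via quadratic forms $\bw^\top(\cdot)\bw\geq 0$ is just the same fact expressed through test vectors, so the two are essentially identical.
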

\begin{proof}
The proof of this lemma comes from the following fact
\begin{align}
\bzero& \preceq
E\left[\left(g(\bbv)-E\left[g(\bbv)|\bbu=\bu\right]\right)\left(g(\bbv)-E\left[g(\bbv)|\bbu=\bu\right]\right)^\top|\bbu=\bu\right]\\
&=E\left[g(\bbv)g^\top(\bbv)|\bbu=\bu\right]-E\left[g(\bbv)|\bbu=\bu\right]E\left[g^\top(\bbv)|\bbu=\bu\right]
\end{align}
\end{proof}

We now prove the fact that conditioning reduces MMSE.
\begin{Lem}
\label{lemma_conditioning_reduces_mmse} If $U\rightarrow
V\rightarrow \bbx$, then $\bbk_{X|V}\preceq \bbk_{X|U}$.
\end{Lem}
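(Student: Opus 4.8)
The plan is to rewrite both MMSE matrices in terms of conditional-mean estimators and then apply Lemma~\ref{lemma_dummy}. Set $\hat{\bbx}_V = E[\bbx\,|\,V]$ and $\hat{\bbx}_U = E[\bbx\,|\,U]$. By the orthogonality principle, $E[\bbx\hat{\bbx}_V^\top] = E[\hat{\bbx}_V\hat{\bbx}_V^\top]$ and similarly for $U$, so that
\begin{align}
\bbk_{X|V} = E[\bbx\bbx^\top] - E[\hat{\bbx}_V\hat{\bbx}_V^\top], \qquad \bbk_{X|U} = E[\bbx\bbx^\top] - E[\hat{\bbx}_U\hat{\bbx}_U^\top].
\end{align}
Hence it suffices to establish $E[\hat{\bbx}_V\hat{\bbx}_V^\top] \succeq E[\hat{\bbx}_U\hat{\bbx}_U^\top]$.

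Next I would invoke the Markov chain $U\rightarrow V\rightarrow \bbx$, which gives $E[\bbx\,|\,U,V] = E[\bbx\,|\,V] = \hat{\bbx}_V$ and therefore, by the tower property, $\hat{\bbx}_U = E[\hat{\bbx}_V\,|\,U]$. Writing $g$ for the conditional-mean function so that $\hat{\bbx}_V = g(V)$, Lemma~\ref{lemma_dummy} applied with this $g$, $\bbu = U$, $\bbv = V$ gives, for every value $\bu$,
\begin{align}
E[g(V)g^\top(V)\,|\,U=\bu] \succeq E[g(V)\,|\,U=\bu]\,E[g^\top(V)\,|\,U=\bu] = \hat{\bbx}_U(\bu)\,\hat{\bbx}_U^\top(\bu).
\end{align}
Taking the expectation over $U$ preserves the positive semi-definite order, which yields $E[\hat{\bbx}_V\hat{\bbx}_V^\top] \succeq E[\hat{\bbx}_U\hat{\bbx}_U^\top]$ and hence $\bbk_{X|U} \succeq \bbk_{X|V}$.

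The one step that genuinely uses the hypothesis is the reduction $\hat{\bbx}_U = E[\hat{\bbx}_V\,|\,U]$, which rests on the identity $E[\bbx\,|\,U,V] = E[\bbx\,|\,V]$ implied by $U\rightarrow V\rightarrow\bbx$; without the Markov chain this need not hold. The remaining ingredients --- the orthogonality principle and Lemma~\ref{lemma_dummy} (stated for maps $\mathbb{R}^n\rightarrow\mathbb{R}^n$, but the identical conditional-Jensen argument works whatever the dimensions of $U$ and $V$ are) --- are routine, so I do not expect a real obstacle here.
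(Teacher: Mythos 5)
Your proof is correct and follows essentially the same route as the paper's: express each MMSE covariance as $E[\bbx\bbx^\top]-E[\hat{\bbx}\hat{\bbx}^\top]$ via the orthogonality principle, apply Lemma~\ref{lemma_dummy} with the conditional-mean map, and use the Markov chain $U\rightarrow V\rightarrow\bbx$ to identify $E\left[E[\bbx|V]\,|\,U\right]=E[\bbx|U]$. The only difference is presentational (you isolate the Markov-chain reduction up front rather than inside the chain of inequalities), and your side remark about Lemma~\ref{lemma_dummy} applying regardless of the domain dimensions is accurate.
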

\begin{proof}
We have
\begin{align}
\bbk_{X|V}&=E\left[\bbx\bbx^\top\right]-E\left[E\left[\bbx|\bbv\right]E\left[\bbx^\top|\bbv\right]\right]\\
&=E\left[\bbx\bbx^\top\right]-E\left[E\left[E\left[\bbx|\bbv\right]E\left[\bbx^\top|\bbv\right]|\bbu\right]\right]\\
&\preceq E\left[\bbx\bbx^\top\right]-E\left[
E\left[E\left[\bbx|\bbv\right]|\bbu\right]
E\left[E\left[\bbx^\top|\bbv\right]|\bbu\right] \right]
\label{lemma_dummy_implies} \\
&=
E\left[\bbx\bbx^\top\right]-E\left[E\left[\bbx|\bbu\right]E\left[\bbx^\top|\bbu\right]\right]
\label{markov_chain_has_smthing}
\end{align}
where (\ref{lemma_dummy_implies}) comes from
Lemma~\ref{lemma_dummy} and (\ref{markov_chain_has_smthing}) comes
from the following fact
\begin{align}
E\left[E\left[\bbx|\bbv\right]|\bbu\right]=E\left[\bbx|\bbu\right]
\end{align}
which is a consequence of the Markov chain $U\rightarrow
V\rightarrow \bbx$.
\end{proof}

\section{Proof of Lemma~\ref{lemma_equivalence_2}}
\label{proof_of_lemma_equivalence_2}

We now prove Lemma~\ref{lemma_equivalence_2}. Since any jointly
Gaussian $(U,V,\bbx)$ triple satisfying the Markov chain
$U\rightarrow V\rightarrow \bbx$ also satisfies $\bbk_{X|V}\preceq
\bbk_{X|U}$ due to Lemma~\ref{lemma_conditioning_reduces_mmse},
the feasible set of $\bar{L}^G$ already contains all jointly
Gaussian $(U,V)$ pairs satisfying the Markov chain $U\rightarrow
V\rightarrow \bbx$. Hence, we have $L^G\geq \bar{L}^G$. Next, we
show that $\bar{L}^G\geq L^G$ to complete the proof of
Lemma~\ref{lemma_equivalence_2}. To do so, we need to show that
for any jointly Gaussian $(U,V,\bbx)$ with conditional covariance
matrices $\bbk_{X|U}$ and $\bbk_{X|V}$ satisfying $\bzero\preceq
\bbk_{X|V}\preceq \bbk_{X|U}\preceq \bbk_X$ and $\bbk_{X|V}\preceq
\bbf(\bbd)$, there exists another jointly Gaussian $(U^G,V^G)$
pair such that this pair has the following properties
\begin{itemize}
\item $\bbk_{X|V^G}=\bbk_{X|V}$

\item $\bbk_{X|U^G}=\bbk_{X|U}$

\item $U^G\rightarrow V^G \rightarrow \bbx$

\end{itemize}
To this end, we note that $(U^G,V^G)$ can be represented as
\begin{align}
V^G&=\bba_V\bbx+\bbn_V \\
U^G&=\bba_U\bbx+\bbn_U
\end{align}
where $(\bbn_U,\bbn_V)$ and $\bbx$ are independent,
$\bbn_U,\bbn_V$ are zero-mean Gaussian random vectors with
identity covariance matrices. The cross covariance of $\bbn_U$ and
$\bbn_V$ is given by
$\bbsigma_{UV}=E\left[\bbn_U\bbn_V^\top\right]$, which needs to be
selected accordingly to ensure that $U^G\rightarrow V^G\rightarrow
\bbx$.

The conditional covariance $\bbk_{X|V^G}$ is given by~\cite[page
155]{poor_book}
\begin{align}
\bbk_{X|V^G}&=\bbk_X-\bbk_{XV^G}\bbk_{V^G}^{-1}\bbk_{V^GX}
\label{k_xv_g}
\end{align}
Since we are seeking a $V^G$ such that $\bbk_{X|V^G}=\bbk_{X|V}$,
we set $\bbk_{X|V^G}=\bbk_{X|V}$ in (\ref{k_xv_g}) yielding
\begin{align}
\bbk_{X|V}&=\bbk_X-\bbk_{XV^G}\bbk_{V^G}^{-1}\bbk_{V^GX}\\
&=\bbk_X-\bbk_{X}\bba_V^\top(\bba_V\bbk_X\bba_V^\top+\bbi)^{-1}\bba_V\bbk_{X}
\label{k_xv_g_1}
\end{align}
which is equivalent to
\begin{align}
\bbk_X^{-1}(\bbk_X-\bbk_{X|V})\bbk_X^{-1}=\bba_V^\top(\bba_V\bbk_X\bba_V^\top+\bbi)^{-1}\bba_V
\label{k_xv_g_2}
\end{align}
Next, we note the Woodbury matrix
identity~\cite{horn_johnson_book1}.
\begin{Lem}{\bf (\!\!\cite[page 17]{horn_johnson_book1})}
\label{lemma_woodbury}
\begin{align}
\left(\bba+\bbc\bbb\bbc^\top\right)^{-1}&=\bba^{-1}-\bba^{-1}\bbc\left(\bbb^{-1}+\bbc^{\top}\bba^{-1}\bbc\right)^{-1}\bbc^{\top}\bba^{-1}
\end{align}
\end{Lem}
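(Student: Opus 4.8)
The plan is to verify the identity directly, by checking that the claimed expression is a genuine inverse of $\bba+\bbc\bbb\bbc^\top$. I assume, as is implicit in the statement, that $\bba$, $\bbb$ and $\bbb^{-1}+\bbc^\top\bba^{-1}\bbc$ are all invertible, so that every inverse appearing below is well defined; under these hypotheses it suffices to check the product in one order only, since a square matrix possessing a one-sided inverse is invertible with that matrix as its two-sided inverse.

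Write $\bbt=\bba^{-1}-\bba^{-1}\bbc(\bbb^{-1}+\bbc^\top\bba^{-1}\bbc)^{-1}\bbc^\top\bba^{-1}$ for the candidate expression, and expand $\bbt(\bba+\bbc\bbb\bbc^\top)=\bbt\bba+\bbt\bbc\bbb\bbc^\top$. The first piece is $\bbt\bba=\bbi-\bba^{-1}\bbc(\bbb^{-1}+\bbc^\top\bba^{-1}\bbc)^{-1}\bbc^\top$, and the second is $\bbt\bbc\bbb\bbc^\top=\bba^{-1}\bbc\bbb\bbc^\top-\bba^{-1}\bbc(\bbb^{-1}+\bbc^\top\bba^{-1}\bbc)^{-1}\bbc^\top\bba^{-1}\bbc\bbb\bbc^\top$. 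Adding these, the three non-identity terms all carry a common left factor $\bba^{-1}\bbc$ and a common right factor $\bbc^\top$, so the key step is to factor these out; the surviving middle factor is $-(\bbb^{-1}+\bbc^\top\bba^{-1}\bbc)^{-1}+\bbb-(\bbb^{-1}+\bbc^\top\bba^{-1}\bbc)^{-1}\bbc^\top\bba^{-1}\bbc\,\bbb$. Pulling $(\bbb^{-1}+\bbc^\top\bba^{-1}\bbc)^{-1}$ to the front of the first and third terms turns this into $-(\bbb^{-1}+\bbc^\top\bba^{-1}\bbc)^{-1}(\bbi+\bbc^\top\bba^{-1}\bbc\,\bbb)+\bbb$, and since $\bbi+\bbc^\top\bba^{-1}\bbc\,\bbb=(\bbb^{-1}+\bbc^\top\bba^{-1}\bbc)\bbb$, the middle factor collapses to $-\bbb+\bbb=\bzero$. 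Hence $\bbt(\bba+\bbc\bbb\bbc^\top)=\bbi$, which is the assertion.

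I do not expect a genuine obstacle here, as the computation is elementary; the only care needed is in the bookkeeping of left versus right factors, since the matrices are not assumed to commute, and in applying $\bbi+\bbc^\top\bba^{-1}\bbc\,\bbb=(\bbb^{-1}+\bbc^\top\bba^{-1}\bbc)\bbb$ on the correct side. As an alternative route I would note a structural argument: $\bba+\bbc\bbb\bbc^\top$ is the Schur complement of $\bbb^{-1}$ in a $2\times2$ block matrix assembled from $\bba$, $\bbc$ and $\bbb^{-1}$, and computing the $(1,1)$ block of the inverse of that block matrix by eliminating the two diagonal blocks in the two possible orders produces exactly the two sides of the identity; this reuses only the block-inversion formula already invoked in Appendix~\ref{proof_of_lemma_equivalence_1}. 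Either way the verification is routine, and since the statement is quoted verbatim from \cite[page~17]{horn_johnson_book1}, it is recorded here only for completeness.
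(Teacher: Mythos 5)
Your direct verification is correct: writing $\bbs=(\bbb^{-1}+\bbc^\top\bba^{-1}\bbc)^{-1}$, the non-identity terms of $\bbt(\bba+\bbc\bbb\bbc^\top)$ do indeed factor as $\bba^{-1}\bbc\bigl[-\bbs+\bbb-\bbs\bbc^\top\bba^{-1}\bbc\bbb\bigr]\bbc^\top$, and the bracketed factor collapses to $-\bbs\bbs^{-1}\bbb+\bbb=\bzero$ exactly as you say, with the one-sided check sufficing because the matrices involved are square. The paper itself gives no proof of this lemma; it is quoted verbatim as the Woodbury matrix identity from \cite[page 17]{horn_johnson_book1} and used as a black box in Appendix~\ref{proof_of_lemma_equivalence_2}, so there is no argument in the paper for your derivation to diverge from. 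Your alternative block-matrix observation is also sound and, as you note, is essentially the same block-inversion machinery the paper already invokes in Appendix~\ref{proof_of_lemma_equivalence_1}; either route is a standard and acceptable justification of the cited identity.
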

Using Woodbury matrix identity, we get
\begin{align}
\left(\bba_V\bbk_X\bba_V^\top+\bbi\right)^{-1}&=
\bbi-\bba_V(\bbk_X^{-1}+\bba_V^\top\bba_V)^{-1}\bba_V^\top
\end{align}
using which in (\ref{k_xv_g_2}), we get
\begin{align}
\bbk_X^{-1}(\bbk_X-\bbk_{X|V})\bbk_X^{-1}&=\bba_V^\top\left[\bbi-\bba_V(\bbk_X^{-1}+\bba_V^\top\bba_V)^{-1}\bba_V^\top\right]\bba_V
\\
&=\bba_V^\top\bba_V-\bba_V^\top\bba_V(\bbk_X^{-1}+\bba_V^\top\bba_V)^{-1}\bba_V^\top
\bba_V\\
&=\bba_V^\top\bba_V-\bba_V^\top\bba_V(\bbk_X^{-1}+\bba_V^\top\bba_V)^{-1}\left(\bbk_X^{-1}+\bba_V^\top
\bba_V-\bbk_X^{-1}\right)\\
&=\bba_V^\top\bba_V(\bbk_X^{-1}+\bba_V^\top\bba_V)^{-1}
\bbk_X^{-1} \\
&=\left(\bbk_X^{-1}+\bba_V^\top\bba_V-\bbk_{X}^{-1}\right)(\bbk_X^{-1}+\bba_V^\top\bba_V)^{-1}
\bbk_X^{-1} \\
&=\bbk_X^{-1}-\bbk_{X}^{-1}(\bbk_X^{-1}+\bba_V^\top\bba_V)^{-1}
\bbk_X^{-1}
\end{align}
which implies
\begin{align}
\bbk_{X|V}=\left(\bbk_{X}^{-1}+\bba_V^\top\bba_V\right)^{-1}
\label{future_use_x}
\end{align}
which, in turn, implies
\begin{align}
\bba_V^\top\bba_V=\bbk_{X|V}^{-1}-\bbk_X^{-1}\label{condition_1}
\end{align}
Hence, if we select $\bba_V$ as satisfying (\ref{condition_1}), we
get $\bbk_{X|V^G}=\bbk_{X|V}$. Similarly, if we select $\bba_U$ to
satisfy
\begin{align}
\bba_U^\top\bba_U=\bbk_{X|U}^{-1}-\bbk_X^{-1}\label{condition_2}
\end{align}
then, we also have $\bbk_{X|U^G}=\bbk_{X|U}$.

Next, we will explicitly construct $\bba_{V}$ and $\bba_{U}$
matrices to satisfy (\ref{condition_1}) and (\ref{condition_2}),
respectively. To this end, we introduce the following lemma, which
will be used subsequently.
\begin{Lem}[\!\!\cite{simultanoues_diag}]
\label{lemma_simul_diag} Let $\bba,\bbb$ be two real symmetric
positive semi-definite matrices. Then, there exists a non-singular
matrix $\bbw$ such that
\begin{align}
\bba&=\bbw^\top \bblambda_A \bbw \\
\bbb&=\bbw^\top \bblambda_B \bbw \\
\end{align}
where $\bblambda_A$ and $\bblambda_B$ are diagonal matrices.
\end{Lem}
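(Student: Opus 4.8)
The plan is to reduce Lemma~\ref{lemma_simul_diag} to the classical case in which the two matrices have a positive-definite sum, and then to diagonalize that pair by the standard whitening argument; this keeps the proof self-contained even though the statement is quoted from the literature.

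\textbf{Step 1 (peeling off the common kernel).} I would set $\bbc=\bba+\bbb\succeq\bzero$ and let $\mathcal{N}=\ker\bba\cap\ker\bbb$. Since $\bba,\bbb$ are positive semidefinite, $\bbv^\top\bbc\bbv=0$ forces $\bbv^\top\bba\bbv=\bbv^\top\bbb\bbv=0$, hence $\bba\bbv=\bbb\bbv=\bzero$, so in fact $\ker\bbc=\mathcal{N}$. Choosing an orthonormal basis of $\mathbb{R}^n$ adapted to $\mathcal{N}^\perp\oplus\mathcal{N}$ and collecting it into an orthogonal matrix $\bbr$, the symmetry of $\bba$ (so that $\mathrm{range}\,\bba=(\ker\bba)^\perp\subseteq\mathcal{N}^\perp$) and likewise of $\bbb$ makes $\bbr^\top\bba\bbr$ and $\bbr^\top\bbb\bbr$ block-diagonal with a single nonzero block $\bba'$ respectively $\bbb'$, both symmetric positive semidefinite, and with $\bba'+\bbb'\succ\bzero$ because $\ker(\bba+\bbb)=\mathcal{N}$. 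If $(\bba',\bbb')$ admits a simultaneous diagonalization by a nonsingular congruence $\bbw'$, then padding $\bbw'$ with an identity block on $\mathcal{N}$ and composing with $\bbr^\top$ diagonalizes $(\bba,\bbb)$, the diagonal matrices merely acquiring a common zero block. So it suffices to treat the case $\bbc=\bba+\bbb\succ\bzero$.

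\textbf{Step 2 (positive-definite sum).} Here I would take the symmetric positive-definite square root $\bbc^{1/2}$ and observe that $\bbc^{-1/2}\bba\bbc^{-1/2}$ and $\bbc^{-1/2}\bbb\bbc^{-1/2}$ are symmetric positive semidefinite and sum to $\bbi$, so $\bbc^{-1/2}\bbb\bbc^{-1/2}=\bbi-\bbc^{-1/2}\bba\bbc^{-1/2}$. Diagonalizing the symmetric matrix $\bbc^{-1/2}\bba\bbc^{-1/2}$ by an orthogonal $\bbq$, say $\bbq^\top\bbc^{-1/2}\bba\bbc^{-1/2}\bbq=\bblambda_A$, the same $\bbq$ then diagonalizes the other matrix as $\bbq^\top\bbc^{-1/2}\bbb\bbc^{-1/2}\bbq=\bbi-\bblambda_A=:\bblambda_B$. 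Setting $\bbw:=\bbq^\top\bbc^{1/2}$, which is nonsingular as a product of an orthogonal matrix and a positive-definite one, gives $\bbw^\top\bblambda_A\bbw=\bbc^{1/2}\bbq\bblambda_A\bbq^\top\bbc^{1/2}=\bba$ and, in the same way, $\bbw^\top\bblambda_B\bbw=\bbb$, which is exactly the asserted decomposition.

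\textbf{Main obstacle.} Everything after the reduction is routine; the one point needing care is Step 1, because the textbook simultaneous-diagonalization theorem presumes that one of the two matrices is positive definite, whereas here both $\bba$ and $\bbb$ may be singular (in the intended application one has $\bba_V^\top\bba_V=\bbk_{X|V}^{-1}-\bbk_X^{-1}$ and $\bba_U^\top\bba_U=\bbk_{X|U}^{-1}-\bbk_X^{-1}$, which can both be singular). The facts one must nail down are that $\ker\bba\cap\ker\bbb=\ker(\bba+\bbb)$ and that, after restricting to the orthogonal complement of this common kernel, the sum is strictly positive definite --- both immediate from positive semidefiniteness of $\bba$ and $\bbb$ --- after which the whitening argument of Step 2 applies verbatim.
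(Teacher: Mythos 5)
The paper does not supply a proof of this lemma; it quotes the statement and cites it to the external reference \cite{simultanoues_diag}, so there is no in-paper argument to compare against. Your proof is correct, complete, and is in fact the standard route to this result. Step~2 is the classical whitening argument: once $\bba+\bbb\succ\bzero$, conjugating by $(\bba+\bbb)^{-1/2}$ turns the two matrices into a pair of symmetric PSD matrices summing to $\bbi$, hence commuting, hence simultaneously orthogonally diagonalizable, and undoing the whitening gives the congruence $\bbw=\bbq^\top\bbc^{1/2}$. Step~1 is the genuinely non-routine part and you handle it correctly: because $\bba,\bbb\succeq\bzero$, one has $\ker(\bba+\bbb)=\ker\bba\cap\ker\bbb$, and by symmetry both matrices are block-diagonal with respect to the orthogonal splitting $\mathcal{N}^\perp\oplus\mathcal{N}$, reducing the problem to a pair with positive-definite sum. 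The observation in your closing remark is also apt --- in the paper's application (Appendix~\ref{proof_of_lemma_equivalence_2}) the matrices are $\bbk_{X|V}^{-1}-\bbk_X^{-1}$ and $\bbk_{X|U}^{-1}-\bbk_X^{-1}$, both of which can be singular, so the textbook version assuming one argument positive definite would not apply directly and the reduction in Step~1 is genuinely needed.
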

Lemma~\ref{lemma_simul_diag} states that two real symmetric
positive semi-definite matrices can be diagonalized
simultaneously. Using this fact in
(\ref{condition_1})-(\ref{condition_2}), we get
\begin{align}
\bbk_{X|V}^{-1}-\bbk_X^{-1}&=\bbw^\top\bblambda_V^2\bbw
\label{decompose_1}\\
\bbk_{X|U}^{-1}-\bbk_X^{-1}&=\bbw^\top\bblambda_U^2\bbw
\label{decompose_2}
\end{align}
for some non-singular matrix $\bbw$, and diagonal matrices
$\bblambda_U,\bblambda_V$. Since $\bbk_{X|V}\preceq \bbk_{X|U},$
we have $\bbk_{X|V}^{-1}\succeq \bbk_{X|U}^{-1}$, which, in view
of (\ref{decompose_1})-(\ref{decompose_2}) imply
\begin{align}
\bbw^\top\left(\bblambda_V^2-\bblambda_U^2\right)\bbw\succeq
\bzero \label{decomp_implies_1}
\end{align}
Since $\bbw$ is non-singular, (\ref{decomp_implies_1}) implies
that
\begin{align}
\bblambda_V\succeq \bblambda_U \label{decomp_implies_2}
\end{align}
Finally, we choose
\begin{align}
\bba_V&=\bblambda_V\bbw \\
\bba_U&=\bblambda_U\bbw
\end{align}
which, in view of (\ref{condition_1})-(\ref{condition_2}) and
(\ref{decompose_1})-(\ref{decompose_2}), imply
$\bbk_{X|V^G}=\bbk_{X|V}$ and $\bbk_{X|U^G}=\bbk_{X|U}$.

Next, we show that a proper selection the cross-covariance matrix
$\bbsigma_{UV}$ would yield the desired Markov chain
$U^{G}\rightarrow V^{G} \rightarrow \bbx$. To this end, we
introduce the following matrix
\begin{align}
\bba_{UV}=\bblambda_U\bblambda_V^{\dag}
\end{align}
where the diagonal matrix $\bblambda_V^{\dag}$ is defined as
follows:
\begin{align}
\Lambda_{V,ii}^{\dag}= \left\{
\begin{array}{cc}
\frac{1}{\Lambda_{V,ii}},&\quad {\rm if}~~\Lambda_{V,ii}\neq 0\\
0,&\quad {\rm otherwise}
\end{array}
 \right.
\end{align}
Since $\bblambda_U\preceq \bblambda_V$, we have $\bblambda_U
\bblambda_V^{\dag} \bblambda_V=\bblambda_U$. Hence, we have
\begin{align}
\bba_{UV}\bba_{V}=\bba_{U}\label{towards_3rd_cond}
\end{align}
We also note the following
\begin{align}
\bba_{UV}\bba_{UV}^\top=\bblambda_U\big(\bblambda_V^{\dag}\big)^{2}\bblambda_U
\preceq \bbi
\end{align}
since $\bblambda_U\preceq \bblambda_V$.

Now, we are ready to show that $U^G$ and $V^G$ satisfy the Markov
chain $U^G\rightarrow V^G\rightarrow \bbx$ by specifying
$\bbsigma_{UV}$. We set $\bbn_U$ as follows
\begin{align}
\bbn_U&=\bba_{UV}\bbn_V+\tilde{\bbn} \label{covariance_structure}
\end{align}
where $\tilde{\bbn}$ is a zero-mean Gaussian random vector with
covariance matrix $\bbi-\bba_{UV}\bba_{UV}^\top$, and is
independent of $\bbn_V$. In view of (\ref{covariance_structure}),
we have
\begin{align}
U^G&=\bba_U\bbx+\bbn_U\\
&=\bba_{UV}\bba_V\bbx+\bba_{UV}\bbn_V+\tilde{\bbn} \\
&=\bba_{UV}V^G+\tilde{\bbn}
\end{align}
which implies that $(U^G,V^G)$ satisfy the Markov chain
$U^G\rightarrow V^G\rightarrow \bbx$; completing the proof.

\section{Proof of Lemma~\ref{lemma_KKT}}
\label{proof_of_lemma_KKT}

The Lagrangian for the optimization problem $\bar{L}^G$ is given
as follows
\begin{align}
\mathcal{L}\left(\bar{L}^G\right)&=\frac{1}{2}\log\frac{|\bbk_X|}{|\bbk_{X|V}|}
-\frac{1}{2}\log\frac{|\bbk_{X|U}+\bbsigma_Y|}{|\bbk_{X|V}+\bbsigma_Y|}
+\frac{1}{2}\log\frac{|\bbk_{X|U}+\bbsigma_Z|}{|\bbsigma_Z|} -{\rm
tr}(\bbm_0\bbk_{X|V})\nonumber\\
&\quad -{\rm tr}(\bbm_U(\bbk_{X|U}-\bbk_{X|V})) -{\rm
tr}(\bbm_X(\bbk_X-\bbk_{X|U}))-{\rm
tr}(\bbm_D(\bbf(\bbd)-\bbk_{X|V})) \label{Lagrangian}
\end{align}
where the positive semi-definite matrices
$\bbm_0,\bbm_U,\bbm_D,\bbm_X$ are the Lagrange multipliers for the
following constraints
\begin{align}
\bbk_{X|V}&\succeq \bzero \\
\bbk_{X|U}-\bbk_{X|V}&\succeq \bzero \\
\bbf(\bbd)-\bbk_{X|V}&\succeq \bzero \\
\bbk_X-\bbk_{X|U}&\succeq \bzero
\end{align}
respectively. Let $\bbk_{X|V}^*$ and $\bbk_{X|U}^*$ be the
minimizers of the optimization problem $\bar{L}^G$. Using
(\ref{Lagrangian}), the KKT conditions can be found as follows.
\begin{align}
\nabla_{\bbk_{X|V}}\mathcal{L}(\bar{L}^G)\mid_{\bbk_{X|V}=\bbk_{X|V}^*}&=\bzero \label{pre_KKT_1}\\
\nabla_{\bbk_{X|U}}\mathcal{L}(\bar{L}^G)\mid_{\bbk_{X|U}=\bbk_{X|U}^*}&=\bzero\label{pre_KKT_2}\\
{\rm tr}(\bbm_0\bbk_{X|V}^*)&=0\label{pre_KKT_3}\\
{\rm tr}(\bbm_U(\bbk_{X|U}^*-\bbk_{X|V}^*))&=0\label{pre_KKT_4}\\
{\rm tr}(\bbm_D(\bbf(\bbd)-\bbk_{X|V}^*))&=0 \label{pre_KKT_5}\\
{\rm tr}(\bbm_X(\bbk_X-\bbk_{X|U}^*))&=0\label{pre_KKT_6}
\end{align}
We first note that we have $\bbk_{X|V}^{*}\succ \bzero$, otherwise
$\bar{L}^G\rightarrow \infty$. Hence, using the fact that if
$\bba\succeq \bzero,\bbb\succeq \bzero $, ${\rm tr}(\bba\bbb)\geq
0$, and (\ref{pre_KKT_3}), we get $\bbm_0=\bzero$. Next, using the
fact that $\bbm_0=\bzero$ in (\ref{pre_KKT_1}), we get the KKT
condition given in (\ref{KKT_1}). Equation (\ref{pre_KKT_2})
implies (\ref{KKT_2}). Finally, using the fact that $\bba\succeq
\bzero ,\bbb\succeq \bzero $, ${\rm tr}(\bba\bbb)={\rm
tr}(\bbb\bba)\geq 0$ in (\ref{pre_KKT_4})-(\ref{pre_KKT_6}), we
can get the KKT conditions given in (\ref{KKT_3})-(\ref{KKT_5}),
respectively.

\section{Proof of Lemma~\ref{lemma_enhancement}}
\label{proof_of_lemma_enhancement}

We start with the second statement of the lemma. To this end, we
note that (\ref{KKT_2}) and (\ref{enhancement_def}) imply the
following.
\begin{align}
(\bbk_{X|U}^*+\tilde{\bbsigma}_Y)^{-1}&=(\bbk_{X|U}^*+\bbsigma_Y)^{-1}+\bbm_U \label{enh_implies_0}\\
&=(\bbk_{X|U}^*+\bbsigma_Z)^{-1}+\bbm_X \label{enh_implies}
\end{align}
Next, using the fact that if $\bba\succ \bzero,\bbb\succ \bzero $
and $\bba \succeq \bbb$, we have $\bba^{-1}\preceq \bbb^{-1}$ in
conjunction with the fact that $\bbm_U\succeq \bzero,\bbm_X\succeq
\bzero $, we can obtain the second statement of the lemma from
(\ref{enh_implies_0})-(\ref{enh_implies}).

Next, we consider the third statement of the lemma as follows
\begin{align}
\lefteqn{\bbk_{X|V}^*+\tilde{\bbsigma}_Y}\nonumber\\
&=\bbk_{X|V}^*+\left[(\bbk_{X|U}^*+\bbsigma_Y)^{-1}+\bbm_U\right]^{-1}-\bbk_{X|U}^*
\label{enh_implies_0_implies} \\
&=\bbk_{X|V}^*+\left[\bbi+(\bbk_{X|U}^*+\bbsigma_Y)\bbm_U\right]^{-1}(\bbk_{X|U}^*+\bbsigma_Y)-\bbk_{X|U}^*\\
&=\bbk_{X|V}^*+\left[\bbi+(\bbk_{X|U}^*-\bbk_{X|V}^*+\bbk_{X|V}^*+\bbsigma_Y)\bbm_U\right]^{-1}(\bbk_{X|U}^*+\bbsigma_Y)-\bbk_{X|U}^*\\
&=\bbk_{X|V}^*+\left[\bbi+(\bbk_{X|V}^*+\bbsigma_Y)\bbm_U\right]^{-1}(\bbk_{X|U}^*+\bbsigma_Y)-\bbk_{X|U}^* \label{KKT_3_implies}\\
&=\bbk_{X|V}^*+\left[(\bbk_{X|V}^*+\bbsigma_Y)^{-1}+\bbm_U\right]^{-1}(\bbk_{X|V}^*+\bbsigma_Y)^{-1}(\bbk_{X|U}^*+\bbsigma_Y)-\bbk_{X|U}^*
\\
&=\bbk_{X|V}^*+\left[(\bbk_{X|V}^*+\bbsigma_Y)^{-1}+\bbm_U\right]^{-1}(\bbk_{X|V}^*+\bbsigma_Y)^{-1}(\bbk_{X|U}^*-\bbk_{X|V}^*+\bbk_{X|V}^*+\bbsigma_Y)
\nonumber\\
&\quad -\bbk_{X|U}^*
\\
&=\bbk_{X|V}^*+\left[(\bbk_{X|V}^*+\bbsigma_Y)^{-1}+\bbm_U\right]^{-1}(\bbk_{X|V}^*+\bbsigma_Y)^{-1}(\bbk_{X|U}^*-\bbk_{X|V}^*)
\nonumber\\
&\quad +\left[(\bbk_{X|V}^*+\bbsigma_Y)^{-1}+\bbm_U\right]^{-1}
-\bbk_{X|U}^*
\\
&=\bbk_{X|V}^*+\left[(\bbk_{X|V}^*+\bbsigma_Y)^{-1}+\bbm_U\right]^{-1}\left[(\bbk_{X|V}^*+\bbsigma_Y)^{-1}+\bbm_U\right](\bbk_{X|U}^*-\bbk_{X|V}^*)
\nonumber\\
&\quad +\left[(\bbk_{X|V}^*+\bbsigma_Y)^{-1}+\bbm_U\right]^{-1}
-\bbk_{X|U}^* \label{KKT_3_implies_1}
\\
&=\bbk_{X|V}^*+(\bbk_{X|U}^*-\bbk_{X|V}^*)
+\left[(\bbk_{X|V}^*+\bbsigma_Y)^{-1}+\bbm_U\right]^{-1}
-\bbk_{X|U}^*
\\
&=\left[(\bbk_{X|V}^*+\bbsigma_Y)^{-1}+\bbm_U\right]^{-1}
\end{align}
where (\ref{enh_implies_0_implies}) comes from
(\ref{enh_implies_0}), (\ref{KKT_3_implies}) and
(\ref{KKT_3_implies_1}) follow from (\ref{KKT_3}).

Now, we consider the fourth statement of the lemma as follows
\begin{align}
(\bbk_{X|U}^{*}+\tilde{\bbsigma}_Y)^{-1}(\bbk_{X|V}^*+\tilde{\bbsigma}_Y)&=\bbi+(\bbk_{X|U}^{*}+\tilde{\bbsigma}_Y)^{-1}(\bbk_{X|V}^*-\bbk_{X|U}^*)\\
&=\bbi+\left[(\bbk_{X|U}^{*}+\bbsigma_Y)^{-1}+\bbm_U\right](\bbk_{X|V}^*-\bbk_{X|U}^*)\label{enh_implies_0_implies_1}\\
&=\bbi+(\bbk_{X|U}^{*}+\bbsigma_Y)^{-1}(\bbk_{X|V}^*-\bbk_{X|U}^*)\label{KKT_3_implies_2}\\
&=(\bbk_{X|U}^{*}+\bbsigma_Y)^{-1}(\bbk_{X|V}^*+\bbsigma_Y)
\end{align}
where (\ref{enh_implies_0_implies_1}) follows from
(\ref{enh_implies_0}), and (\ref{KKT_3_implies_2}) comes from
(\ref{KKT_3}).

Next, we consider the fifth statement of the lemma as follows
\begin{align}
(\bbk_{X|U}^*+\tilde{\bbsigma}_Y)^{-1}(\bbk_X+\tilde{\bbsigma}_Y)&=\bbi+(\bbk_{X|U}^*+\tilde{\bbsigma}_Y)^{-1}(\bbk_X-\bbk_{X|U}^*)\\
&=\bbi+\left[(\bbk_{X|U}^*+\bbsigma_Z)^{-1}+\bbm_X\right](\bbk_X-\bbk_{X|U}^*)\label{enh_implies_implies}\\
&=\bbi+(\bbk_{X|U}^*+\bbsigma_Z)^{-1}(\bbk_X-\bbk_{X|U}^*)\label{KKT_5_implies}\\
&=(\bbk_{X|U}^*+\bbsigma_Z)^{-1}(\bbk_X+\bbsigma_Z)
\end{align}
where (\ref{enh_implies_implies}) comes from (\ref{enh_implies}),
and (\ref{KKT_5_implies}) is due to (\ref{KKT_5}).

Now, we prove the last statement of the lemma. To this end, we
note that the third statement of this lemma and (\ref{KKT_1})
imply the following
\begin{align}
(\bbk_{X|V}^*+\tilde{\bbsigma}_Y)^{-1}+\bbm_D=(\bbk_{X|V}^*)^{-1}
\label{enh_implies_again}
\end{align}
which will be used in the sequel. Now, the last statement of this
lemma follows from
\begin{align}
(\bbk_{X|V}^*+\tilde{\bbsigma}_Y)^{-1}(\bbf(\bbd)+\tilde{\bbsigma}_Y)&=\bbi+(\bbk_{X|V}^*+\tilde{\bbsigma}_Y)^{-1}(\bbf(\bbd)-\bbk_{X|V}^{*})\\
&=\bbi+\left[(\bbk_{X|V}^*)^{-1}-\bbm_D\right](\bbf(\bbd)-\bbk_{X|V}^{*})\label{enh_implies_again_1}\\
&=\bbi+(\bbk_{X|V}^*)^{-1}(\bbf(\bbd)-\bbk_{X|V}^{*})
\label{KKT_4_implies}\\
&=(\bbk_{X|V}^*)^{-1}\bbf(\bbd)
\end{align}
where (\ref{enh_implies_again_1}) comes from
(\ref{enh_implies_again}), and (\ref{KKT_4_implies}) is due to
(\ref{KKT_4}).

Finally, we note that (\ref{enh_implies_again}) also implies the
first statement of the lemma; completing the proof.

\section{Proof of Lemma~\ref{lemma_lower_bound}}
\label{proof_of_lemma_lower_bound}

\subsection{Background}

\label{sec:background} We need some properties of the Fisher
information and the differential entropy, which are provided next.

\begin{Def}{\bf (\!\!\cite[Definition 3]{MIMO_BC_Secrecy})}
Let $(\bbu,\bbx)$ be an arbitrarily correlated length-$n$ random
vector pair with well-defined densities. The conditional Fisher
information matrix of $\bbx$ given $\bbu$ is defined as
\begin{align}
\bbj(\bbx|\bbu)=E\left[\brho(\bbx|\bbu)\brho(\bbx|\bbu)^\top\right]
\end{align}
where the expectation is over the joint density $f(\bu,\bx)$, and
the conditional score function $\brho(\bx|\bu)$ is
\begin{align}
\brho(\bx|\bu)&=\nabla \log f(\bx|\bu)=\left[~\frac{\partial\log
f(\bx|\bu)}{\partial x_1}~~\ldots~~\frac{\partial\log
f(\bx|\bu)}{\partial x_n}~\right]^\top
\end{align}
\end{Def}

We first present the conditional form of the Cramer-Rao
inequality, which is proved in~\cite{MIMO_BC_Secrecy}.
\begin{Lem}{\bf(\!\!\cite[Lemma 13]{MIMO_BC_Secrecy})}
\label{lemma_conditional_crb_vector} Let $\bbu,\bbx$ be
arbitrarily correlated random vectors with well-defined densities.
Let the conditional covariance matrix of $\bbx$ be ${\rm
Cov}(\bbx|\bbu)\succ \bzero$, then we have
\begin{align}
\bbj(\bbx|\bbu)\succeq {\rm Cov}(\bbx|\bbu)^{-1}
\end{align}
which is satisfied with equality if $(\bbu,\bbx)$ is jointly
Gaussian with conditional covariance matrix ${\rm
Cov}(\bbx|\bbu)$.
\end{Lem}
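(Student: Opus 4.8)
The plan is to obtain the bound from the positive semi-definiteness of a single second-moment matrix, carrying the conditioning on $\bbu$ through the classical Cramer-Rao argument. Write $\be=\bbx-E[\bbx|\bbu]$ for the conditional MMSE error, so that ${\rm Cov}(\bbx|\bbu)=E[\be\be^\top]$, and recall that $\brho(\bbx|\bbu)=\nabla_{\bbx}\log f(\bbx|\bbu)$ and $\bbj(\bbx|\bbu)=E[\brho(\bbx|\bbu)\brho(\bbx|\bbu)^\top]$. The key step is to show that the $2n\times 2n$ matrix
\begin{align}
E\left[\begin{bmatrix}\be\\ \brho(\bbx|\bbu)\end{bmatrix}\begin{bmatrix}\be\\ \brho(\bbx|\bbu)\end{bmatrix}^\top\right]=\begin{bmatrix}{\rm Cov}(\bbx|\bbu) & -\bbi\\ -\bbi & \bbj(\bbx|\bbu)\end{bmatrix}
\end{align}
is positive semi-definite; granting this, taking the Schur complement with respect to the block ${\rm Cov}(\bbx|\bbu)$, which is positive definite by hypothesis, gives $\bbj(\bbx|\bbu)-{\rm Cov}(\bbx|\bbu)^{-1}\succeq\bzero$, which is exactly the claim.

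The left-hand side is a genuine second-moment matrix, hence automatically positive semi-definite, so the only real work is to verify the claimed entries. The diagonal blocks are ${\rm Cov}(\bbx|\bbu)$ and $\bbj(\bbx|\bbu)$ by definition, and the content is the off-diagonal block. Conditioning on $\bbu=\bu$,
\begin{align}
E\left[\be\,\brho(\bbx|\bbu)^\top\mid\bbu=\bu\right]=\int\big(\bx-E[\bbx|\bbu=\bu]\big)\big(\nabla_{\bx}f(\bx|\bu)\big)^\top\,d\bx ,
\end{align}
and integrating by parts in each coordinate, using that the Jacobian of the map $\bx\mapsto\bx-E[\bbx|\bbu=\bu]$ equals $\bbi$ while $f(\cdot|\bu)$ integrates to one, this integral evaluates to $-\bbi$ for every $\bu$; averaging over $\bbu$ then gives $E[\be\,\brho(\bbx|\bbu)^\top]=-\bbi$. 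I expect the only delicate point to be the vanishing of the boundary terms in this integration by parts, which is precisely where the hypothesis of well-defined (and suitably decaying) densities is used; this is the usual regularity caveat attached to the Cramer-Rao inequality.

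For the equality statement I would substitute the Gaussian conditional density directly: if $(\bbu,\bbx)$ is jointly Gaussian with conditional covariance ${\rm Cov}(\bbx|\bbu)$, then $f(\bbx|\bbu)$ is Gaussian with mean $E[\bbx|\bbu]$ (an affine function of $\bbu$) and covariance ${\rm Cov}(\bbx|\bbu)$ (constant in $\bbu$), so $\brho(\bbx|\bbu)=-{\rm Cov}(\bbx|\bbu)^{-1}\be$, whence $\bbj(\bbx|\bbu)=E\big[{\rm Cov}(\bbx|\bbu)^{-1}\be\be^\top{\rm Cov}(\bbx|\bbu)^{-1}\big]={\rm Cov}(\bbx|\bbu)^{-1}$ and the bound is met with equality. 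Since the statement is quoted verbatim from \cite[Lemma 13]{MIMO_BC_Secrecy}, one may alternatively just invoke that reference.
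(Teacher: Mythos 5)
Your proof is correct, and it is the standard conditional Cramer--Rao argument: form the unconditional second-moment matrix of the pair $(\be,\brho(\bbx|\bbu))$, observe it is automatically positive semi-definite, identify the off-diagonal block as $-\bbi$ by an integration by parts (conditioned on each $\bbu=\bu$ and then averaged), and extract the inequality by Schur complement against the positive-definite block ${\rm Cov}(\bbx|\bbu)=E[\be\be^\top]$. The equality computation for the jointly Gaussian case, $\brho(\bbx|\bbu)=-{\rm Cov}(\bbx|\bbu)^{-1}\be$, is also right.

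For context, the paper does not prove this statement at all --- it simply quotes it from the cited reference \cite[Lemma 13]{MIMO_BC_Secrecy}, which is why the lemma carries a citation tag. So your write-up is a self-contained proof where the paper offers none; the argument you give is the one that any proof of this lemma would use, and it is almost certainly the same as the one in the cited source. You correctly flag the one genuinely substantive regularity point, namely the vanishing of the boundary terms in the integration by parts, which is exactly what the hypothesis of ``well-defined densities'' is meant to cover. One minor point worth keeping straight: ${\rm Cov}(\bbx|\bbu)$ and $\bbj(\bbx|\bbu)$ must both be read as the \emph{averaged} (unconditional-expectation) quantities $E[\be\be^\top]$ and $E[\brho\brho^\top]$, as you implicitly do when you form a single $2n\times 2n$ second-moment matrix; a purely pointwise-in-$\bbu$ version of the inequality would also follow from your argument, but the averaged form is the one the paper uses downstream.
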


The following lemma will be used in the upcoming proof. The
unconditional version of this lemma, i.e., the case $\bbt=\phi$,
is proved in~\cite[Lemma 6]{MIMO_BC_Secrecy}.
\begin{Lem}{\bf(\!\!\cite[Lemma 6]{MIMO_BC_Secrecy})}
\label{lemma_change_in_fisher} Let $\bbt,\bbu,\bbv_1,\bbv_2$ be
random vectors such that $(\bbt,\bbu)$ and \break$(\bbv_1,\bbv_2)$
are independent. Moreover, let $\bbv_1,\bbv_2$ be Gaussian random
vectors with covariance matrices $\bbsigma_1,\bbsigma_2$ such that
$\bzero \prec \bbsigma_1 \preceq \bbsigma_2$. Then, we have
\begin{align}
\bbj^{-1}(\bbu+\bbv_2|\bbt)-\bbsigma_2 \succeq
\bbj^{-1}(\bbu+\bbv_1|\bbt)-\bbsigma_1
\end{align}
\end{Lem}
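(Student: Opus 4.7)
The plan is to extend the unconditional case (the $\bbt=\phi$ version proved as Lemma~6 of~\cite{MIMO_BC_Secrecy}) by exploiting the independence hypothesis $(\bbt,\bbu)\ci(\bbv_1,\bbv_2)$. First, since $\bbsigma_2\succeq\bbsigma_1\succ\bzero$, I would realize $\bbv_2\stackrel{d}{=}\bbv_1+\bbw$ on an enlarged probability space, where $\bbw$ is Gaussian with covariance $\bbsigma_w:=\bbsigma_2-\bbsigma_1\succeq\bzero$ and independent of $(\bbt,\bbu,\bbv_1)$. The inequality to prove then rearranges to the single-step statement
\begin{align}
\bbj^{-1}(\bbs+\bbw\,|\,\bbt)\succeq \bbj^{-1}(\bbs\,|\,\bbt)+\bbsigma_w,
\end{align}
applied with $\bbs=\bbu+\bbv_1$. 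If $\bbsigma_w$ is singular I would first replace it by $\bbsigma_w+\epsilon\bbi$, prove the claim, and let $\epsilon\downarrow 0$, using continuity of the conditional Fisher information under vanishing Gaussian smoothing.

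Next, I would prove the single-step inequality via a conditional Brown/Tweedie identity. Because $\bbw$ is independent of $(\bbs,\bbt)$, the conditional score of $\bbs+\bbw$ given $\bbt$ admits the representation
\begin{align}
\brho(\bbs+\bbw\,|\,\bbt)=-\bbsigma_w^{-1}\,E[\bbw\,|\,\bbs+\bbw,\bbt],
\end{align}
which is obtained by differentiating under the integral in the convolution $f(\ba|\bbt)=\int f_{\bbs|\bbt}(\bs)\,\phi_{\bbsigma_w}(\ba-\bs)\,d\bs$. Taking outer products and expectations yields the key identity
\begin{align}
\bbsigma_w\,\bbj(\bbs+\bbw\,|\,\bbt)\,\bbsigma_w=\bbsigma_w-E\!\left[{\rm Cov}(\bbw\,|\,\bbs+\bbw,\bbt)\right],
\end{align}
where the expectation is over $(\bbs+\bbw,\bbt)$. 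This reformulation is useful because it produces an equality at the level of $\bbj$ itself, not its inverse, which is exactly what is needed to sidestep Jensen-type obstructions when averaging over $\bbt$.

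Finally, I would estimate the conditional MMSE matrix $E[{\rm Cov}(\bbw|\bbs+\bbw,\bbt)]$ from above. Using the suboptimal estimator of $\bbw$ built from the posterior mean of $\bbs$ given $(\bbs+\bbw,\bbt)$, together with the conditional Cramer-Rao bound of Lemma~\ref{lemma_conditional_crb_vector} applied to the conditional density $f(\bbs\,|\,\bbs+\bbw,\bbt)$, one obtains
\begin{align}
E[{\rm Cov}(\bbw\,|\,\bbs+\bbw,\bbt)]\succeq \bbsigma_w\bigl(\bbsigma_w+\bbj^{-1}(\bbs\,|\,\bbt)\bigr)^{-1}\bbj^{-1}(\bbs\,|\,\bbt),
\end{align}
which is the conditional analogue of the standard MMSE/Fisher identity for Gaussian channels. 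Substituting this bound into the key identity and simplifying the resulting Schur-complement expression yields exactly $\bbj^{-1}(\bbs+\bbw|\bbt)\succeq \bbj^{-1}(\bbs|\bbt)+\bbsigma_w$, completing the proof.

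The main obstacle is Step~3: one cannot simply apply the unconditional Lemma~6 pointwise in $\bbt_0$ and average, because the conditional Fisher information in this paper is defined as an expectation over the conditioning variable, and operator convexity of the matrix inverse makes Jensen's inequality run the wrong way. The score-function/Tweedie route above works around this by producing the required inequality at the level of $\bbj(\cdot|\bbt)$ itself, where expectations distribute linearly, and only inverting at the final step.
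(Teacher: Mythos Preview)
The paper does not actually prove this lemma; it cites the unconditional case ($\bbt=\phi$) from~\cite[Lemma~6]{MIMO_BC_Secrecy} and states the conditional version without argument. Your proposal therefore supplies more than the paper does, and the route you outline via the conditional Brown/Tweedie identity together with the conditional Cram\'er--Rao bound of Lemma~\ref{lemma_conditional_crb_vector} is correct. One sharpening of your Step~3: since the posterior density of $\bbs$ given $(\bbs+\bbw,\bbt)$ is proportional to the product of the Gaussian likelihood and the prior $f(\bs|\bbt)$, and since $\bbw$ is independent of $(\bbs,\bbt)$ with zero mean, a direct computation of the posterior score gives $\bbj(\bbs\,|\,\bbs+\bbw,\bbt)=\bbsigma_w^{-1}+\bbj(\bbs|\bbt)$ \emph{with equality}; Lemma~\ref{lemma_conditional_crb_vector} then delivers the MMSE lower bound immediately, and the Woodbury identity closes the argument. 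No suboptimal-estimator detour is needed.

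Your observation about the averaging obstacle is valid, but there is a shorter bypass that is presumably what the cited reference intends. For any fixed matrix $\bba$, the matrix Fisher information inequality applied for each value of the conditioning variable gives
\begin{align*}
\bbj(\bbs+\bbw\,|\,\bbt=\cdot)\preceq \bba\,\bbj(\bbs\,|\,\bbt=\cdot)\,\bba^\top+(\bbi-\bba)\,\bbsigma_w^{-1}\,(\bbi-\bba)^\top,
\end{align*}
and since this is linear in the pointwise Fisher matrices, averaging over $\bbt$ preserves it with the averaged quantities $\bbj(\cdot|\bbt)$ in place. Optimizing the single matrix $\bba$ only afterwards, with $\bba=\bbsigma_w^{-1}(\bbj(\bbs|\bbt)+\bbsigma_w^{-1})^{-1}$, yields $\bbj(\bbs+\bbw|\bbt)\preceq(\bbj^{-1}(\bbs|\bbt)+\bbsigma_w)^{-1}$, which is the claim. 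Your Tweedie/CRB route and this route are equivalent in strength; yours has the merit of relying only on lemmas already stated in the paper.
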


The following lemma will also be used in the upcoming proof.
\begin{Lem}{\bf(\!\!\cite[Lemma 8]{MIMO_BC_Secrecy})}
\label{Shamai_s_lemma} Let $\bbk_1,\bbk_2$ be positive
semi-definite matrices satisfying
$\bzero\preceq\bbk_1\preceq\bbk_2$, and $\mathbf{f}(\bbk)$ be a
matrix-valued function such that $\mathbf{f}(\bbk)\succeq\bzero$
for $\bbk_1\preceq\bbk\preceq \bbk_2$. Moreover,
$\mathbf{f}(\bbk)$ is assumed to be gradient of a scalar field.
Then, we have
\begin{align}
\int_{\bbk_1}^{\bbk_2}\mathbf{f}(\bbk)d\bbk \geq 0
\end{align}
\end{Lem}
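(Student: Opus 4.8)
The plan is to use the path-independence of the line integral of a gradient field in order to reduce the claim to a pointwise trace inequality along the straight-line segment joining $\bbk_1$ and $\bbk_2$. Recall first that $\int_{\bbk_1}^{\bbk_2}\mathbf{f}(\bbk)\,d\bbk$ denotes the line integral $\int_0^1 {\rm tr}\big(\mathbf{f}(\bbk(t))\,\dot{\bbk}(t)\big)\,dt$ taken along any smooth path $t\mapsto\bbk(t)$, $t\in[0,1]$, with $\bbk(0)=\bbk_1$ and $\bbk(1)=\bbk_2$ (here $\mathbf{f}(\bbk)$ and $d\bbk$ are paired via the trace, which is unambiguous since both are symmetric). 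Since by hypothesis $\mathbf{f}(\bbk)=\nabla_{\bbk}\phi(\bbk)$ for some scalar potential $\phi$, the fundamental theorem of calculus for line integrals gives $\int_{\bbk_1}^{\bbk_2}\mathbf{f}(\bbk)\,d\bbk=\phi(\bbk_2)-\phi(\bbk_1)$; in particular the value is independent of the chosen path, so it suffices to exhibit one admissible path along which the integral is manifestly non-negative.

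Next I would take the straight-line segment $\bbk(t)=\bbk_1+t(\bbk_2-\bbk_1)$, $t\in[0,1]$. For every such $t$ we have $\bbk(t)-\bbk_1=t(\bbk_2-\bbk_1)\succeq\bzero$ and $\bbk_2-\bbk(t)=(1-t)(\bbk_2-\bbk_1)\succeq\bzero$, since $\bbk_2-\bbk_1\succeq\bzero$; hence $\bbk_1\preceq\bbk(t)\preceq\bbk_2$ for all $t\in[0,1]$, so the hypothesis of the lemma yields $\mathbf{f}(\bbk(t))\succeq\bzero$ along the entire segment. Along this path $\dot{\bbk}(t)=\bbk_2-\bbk_1$ is constant, so
\begin{align}
\int_{\bbk_1}^{\bbk_2}\mathbf{f}(\bbk)\,d\bbk=\int_0^1 {\rm tr}\big(\mathbf{f}(\bbk(t))(\bbk_2-\bbk_1)\big)\,dt .
\end{align}
For each $t$ the integrand is the trace of the product of the two positive semi-definite matrices $\mathbf{f}(\bbk(t))$ and $\bbk_2-\bbk_1$, and ${\rm tr}(\bba\bbb)\geq 0$ whenever $\bba\succeq\bzero$ and $\bbb\succeq\bzero$ (the same elementary fact already used in the proof of Lemma~\ref{lemma_KKT}, e.g. by writing $\bbb=\bbb^{1/2}\bbb^{1/2}$ and noting ${\rm tr}(\bbb^{1/2}\bba\bbb^{1/2})\geq 0$). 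Hence the integrand is non-negative for every $t\in[0,1]$, and therefore so is the integral, which is exactly the assertion $\int_{\bbk_1}^{\bbk_2}\mathbf{f}(\bbk)\,d\bbk\geq 0$.

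I do not anticipate any genuine obstacle here: the only point requiring care is making precise the meaning of the matrix-valued line integral and invoking path-independence so that one may switch to the straight-line segment; once that is done, convexity of the order interval $\{\bbk:\bbk_1\preceq\bbk\preceq\bbk_2\}$ keeps $\mathbf{f}$ non-negative along the segment, and the standard trace inequality for products of positive semi-definite matrices finishes the argument.
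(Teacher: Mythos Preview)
Your argument is correct and is essentially the standard proof of this fact: use path-independence to pass to the straight segment, observe that the order interval $\{\bbk:\bbk_1\preceq\bbk\preceq\bbk_2\}$ is convex so the hypothesis $\mathbf{f}(\bbk)\succeq\bzero$ applies along the whole segment, and finish with ${\rm tr}(\bba\bbb)\geq 0$ for $\bba,\bbb\succeq\bzero$. Note, however, that the paper does not supply its own proof of this lemma; it is quoted verbatim from \cite[Lemma~8]{MIMO_BC_Secrecy}, so there is nothing in the present paper to compare your proposal against beyond the statement itself.
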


The following generalization of the de Bruijn identity~\cite{Stam,
Blachman} is due to~\cite{Palomar_Gradient}, where the
unconditional form of this identity, i.e., $\bbu=\phi$, is proved.
Its generalization to this conditional form for an arbitrary
$\bbu$ is rather straightforward, and is given in~\cite[Lemma
16]{MIMO_BC_Secrecy}.
\begin{Lem}{\bf(\!\!\cite[Lemma 16]{MIMO_BC_Secrecy})}
\label{gradient_fisher_conditional} Let $(\bbu,\bbx)$ be an
arbitrarily correlated random vector pair with finite second order
moments, and also be independent of the random vector $\bbn$ which
is zero-mean Gaussian with covariance matrix
$\bbsigma_N\succ\bzero$. Then, we have
\begin{align}
\nabla_{\bbsigma_N} h(\bbx+\bbn|\bbu)=\frac{1}{2}
\bbj(\bbx+\bbn|\bbu)
\end{align}
\end{Lem}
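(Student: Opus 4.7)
The plan is to derive the conditional identity by applying the unconditional matrix de Bruijn identity of~\cite{Palomar_Gradient} pointwise in $\bu$, then integrating against the marginal density $f(\bu)$, and finally justifying the interchange of the matrix gradient $\nabla_{\bbsigma_N}$ with this expectation.

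First, fix an arbitrary realization $\bbu=\bu$. Since $\bbn$ is independent of $(\bbu,\bbx)$, conditioning on $\bbu=\bu$ leaves $\bbn$ a zero-mean Gaussian vector with covariance $\bbsigma_N$, independent of $\bbx$ drawn from the conditional density $f(\bx|\bu)$. Regarding $\bu$ as a frozen parameter, we can apply the unconditional identity of~\cite{Palomar_Gradient} to the random vector $\bbx|\{\bbu=\bu\}$ perturbed by $\bbn$, obtaining
\begin{align}
\nabla_{\bbsigma_N}\, h(\bbx+\bbn\mid\bbu=\bu) \;=\; \tfrac{1}{2}\,\bbj(\bbx+\bbn\mid\bbu=\bu),
\end{align}
where both sides are computed from the conditional distribution of $\bbx+\bbn$ given $\bbu=\bu$.

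Next, take the expectation with respect to $f(\bu)$ of both sides. On the right-hand side, the definition of the conditional Fisher information matrix as an expectation over the \emph{joint} density $f(\bu,\bx+\bn)$ gives $E_{\bbu}[\bbj(\bbx+\bbn\mid\bbu=\bu)] = \bbj(\bbx+\bbn\mid\bbu)$. On the left-hand side, the definition of conditional differential entropy as $h(\bbx+\bbn\mid\bbu)=E_{\bbu}[h(\bbx+\bbn\mid\bbu=\bu)]$ gives, after interchanging $\nabla_{\bbsigma_N}$ and $E_{\bbu}[\cdot]$,
\begin{align}
E_{\bbu}\!\left[\nabla_{\bbsigma_N}\, h(\bbx+\bbn\mid\bbu=\bu)\right] \;=\; \nabla_{\bbsigma_N}\, h(\bbx+\bbn\mid\bbu).
\end{align}
Equating the two expectations yields the claimed identity.

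The main technical obstacle is justifying the interchange of $\nabla_{\bbsigma_N}$ with the expectation over $\bbu$. I would handle this via a dominated-convergence argument: since $\bbsigma_N\succ\bzero$, the conditional density of $\bbx+\bbn$ given $\bbu=\bu$ is smooth in $\bbsigma_N$ and the partial derivatives of $h(\bbx+\bbn\mid\bbu=\bu)$ with respect to the entries of $\bbsigma_N$ coincide pointwise in $\bu$ with the corresponding entries of $\tfrac{1}{2}\bbj(\bbx+\bbn\mid\bbu=\bu)$, which is integrable in $\bu$ by Lemma~\ref{lemma_conditional_crb_vector} together with the finite second-order moment hypothesis on $(\bbu,\bbx)$; one then obtains uniform integrability of the difference quotients on a neighborhood of any fixed $\bbsigma_N\succ\bzero$, validating the interchange. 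This is the same regularity mechanism that underlies the unconditional derivation in~\cite{Palomar_Gradient}, so the additional work is essentially bookkeeping rather than a new analytic difficulty.
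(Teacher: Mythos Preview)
Your approach is correct and is exactly the ``rather straightforward'' conditional extension the paper alludes to; in fact, the paper does not supply its own proof of this lemma at all but simply cites \cite[Lemma~16]{MIMO_BC_Secrecy} and notes that the unconditional case from \cite{Palomar_Gradient} extends directly. Your conditioning-then-averaging argument, together with the dominated-convergence justification for swapping $\nabla_{\bbsigma_N}$ and the expectation over $\bbu$, is precisely the mechanism one would expect that citation to contain.
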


The following lemma provides a connection between the conditional
covariance matrix and the Fisher information matrices of a random
vector.
\begin{Lem}
\label{lemma_connection} Let $(V,\bbx)$ be two arbitrary random
vectors with finite second moments, and $\bbn$ be a zero-mean
Gaussian random vector with covariance matrix $\bbsigma_N$. Let
$\bby=\bbx+\bbn$. Assume $(V,\bbx)$ and $\bbn$ are independent. We
have
\begin{align}
\bbk_{X|VY}=\bbsigma_N-\bbsigma_N\bbj(\bbx+\bbn|V)\bbsigma_N
\end{align}
\end{Lem}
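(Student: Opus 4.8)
The plan is to prove the identity through the classical connection between the score function of a noisy observation and the conditional mean of the additive Gaussian noise, carried out in its conditional (given $V$) form. I would first treat the case $\bbsigma_N\succ\bzero$, so that all the Gaussian densities below have full support; the general positive semi-definite case then follows by replacing $\bbsigma_N$ with $\bbsigma_N+\epsilon\bbi$, applying the result, and letting $\epsilon\rightarrow 0$, since both sides of the claimed identity are continuous in the noise covariance. Write $\bby=\bbx+\bbn$ and let $\phi_{\bbsigma_N}$ denote the zero-mean Gaussian density with covariance $\bbsigma_N$. Conditioned on $V=v$, the vector $\bby$ has density $f_{\bby|V}(\by|v)=\int f_{\bbx|V}(\bx|v)\,\phi_{\bbsigma_N}(\by-\bx)\,d\bx$, which, being a convolution with a smooth, rapidly decaying kernel, is differentiable in $\by$; hence the conditional score function $\brho(\by|v)=\nabla_{\by}\log f_{\bby|V}(\by|v)$ is well defined, and $\bbj(\bbx+\bbn|V)=E[\brho(\bby|V)\brho(\bby|V)^\top]$ is finite by the finite-second-moment hypothesis.

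The key step is the identity
\begin{align}
E[\bbn|\bby,V]=-\bbsigma_N\,\brho(\bby|V).
\end{align}
Differentiating the integral for $f_{\bby|V}(\by|v)$ under the integral sign (justified by dominated convergence, since $\phi_{\bbsigma_N}(\by-\cdot)$ and $\nabla_{\by}\phi_{\bbsigma_N}(\by-\cdot)$ are dominated, uniformly for $\by$ in a neighborhood, by integrable functions of $\bx$) and using $\nabla_{\by}\phi_{\bbsigma_N}(\by-\bx)=-\bbsigma_N^{-1}(\by-\bx)\phi_{\bbsigma_N}(\by-\bx)$ gives $\nabla_{\by}f_{\bby|V}(\by|v)=-\bbsigma_N^{-1}E[\bby-\bbx|\bby=\by,V=v]\,f_{\bby|V}(\by|v)$. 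Dividing by $f_{\bby|V}(\by|v)$ and using $\bby-\bbx=\bbn$ yields the displayed identity.

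To conclude, observe that $\bby=\bbx+\bbn$ forces $E[\bbx|\bby,V]+E[\bbn|\bby,V]=\bby$, so
\begin{align}
\bbx-E[\bbx|\bby,V]=-\big(\bbn-E[\bbn|\bby,V]\big)=-\bbn-\bbsigma_N\,\brho(\bby|V),
\end{align}
using the key step in the last equality. Since $\bbk_{X|VY}=E\big[(\bbx-E[\bbx|\bby,V])(\bbx-E[\bbx|\bby,V])^\top\big]$, expanding the outer product of $\bbn+\bbsigma_N\brho(\bby|V)$ produces four terms: $E[\bbn\bbn^\top]=\bbsigma_N$; the term $\bbsigma_N E[\brho(\bby|V)\brho(\bby|V)^\top]\bbsigma_N=\bbsigma_N\bbj(\bbx+\bbn|V)\bbsigma_N$; and the two cross terms, each of which equals $-\bbsigma_N\bbj(\bbx+\bbn|V)\bbsigma_N$ after conditioning on $(\bby,V)$ and substituting $E[\bbn|\bby,V]=-\bbsigma_N\brho(\bby|V)$. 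Summing the four terms gives $\bbk_{X|VY}=\bbsigma_N-\bbsigma_N\bbj(\bbx+\bbn|V)\bbsigma_N$, as claimed. The only points requiring care are the regularity issues — existence and smoothness of the conditional densities, supplied by the Gaussian smoothing, and the legitimacy of differentiating under the integral — together with the reduction from $\bbsigma_N\succ\bzero$ to general $\bbsigma_N\succeq\bzero$ via the limiting argument noted at the outset; I expect none of these to present a substantial obstacle.
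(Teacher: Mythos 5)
Your proof is correct, and it is more self-contained than the paper's own treatment: the paper only cites Palomar--Verdú for the unconditional case $V=\phi$ and gestures towards Lemma~\ref{gradient_fisher_conditional} (the conditional de Bruijn identity) for the generalization, without spelling out the argument. You instead prove the result from scratch via the conditional form of the Tweedie/score identity $E[\bbn|\bby,V]=-\bbsigma_N\brho(\bby|V)$, obtained by differentiating the convolution $f_{\bby|V}=f_{\bbx|V}\ast\phi_{\bbsigma_N}$ under the integral sign, and then expand $\bbk_{X|VY}$ as the covariance of $\bbn+\bbsigma_N\brho(\bby|V)$. The algebra of the four terms checks out: $E[\bbn\bbn^\top]=\bbsigma_N$, the pure score term gives $+\bbsigma_N\bbj\bbsigma_N$, and each cross term, after conditioning on $(\bby,V)$ and substituting the Tweedie identity, gives $-\bbsigma_N\bbj\bbsigma_N$, for a net $\bbsigma_N-\bbsigma_N\bbj\bbsigma_N$. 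This is essentially the route Palomar--Verdú take for $V=\phi$, carried out cleanly in the conditional setting, and it is arguably the most direct proof; the de Bruijn route the paper hints at (differentiating conditional entropy in $\bbsigma_N$ and matching it against the gradient of $\log|\cdot|$) is a detour that also works but requires more bookkeeping.

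One small caveat: the reduction from $\bbsigma_N\succ\bzero$ to $\bbsigma_N\succeq\bzero$ by sending $\epsilon\to 0$ in $\bbsigma_N+\epsilon\bbi$ is asserted to follow from ``continuity of both sides in the noise covariance,'' but continuity of $\bbj(\bbx+\bbn|V)$ and of $\bbk_{X|VY}$ as $\bbsigma_N$ degenerates is not automatic (the conditioning $\sigma$-algebra changes discontinuously in the singular limit) and would need a separate argument. Fortunately this gap is immaterial here: in the paper the lemma is invoked only with $\bbsigma_N=\bbsigma_Y\succ\bzero$ (and its enhanced variants, which are also positive definite), so the full-rank case you prove cleanly is all that is used.
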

Lemma~\ref{lemma_connection} is proved in~\cite{Palomar_Gradient}
for $V=\phi$. Its generalization to the current conditional form
can be obtained by using the conditional Fisher information and
Lemma~\ref{gradient_fisher_conditional}.

\subsection{Proof}

We first consider the cost function of the optimization problem
$\bar{L}$
\begin{align}
C(\bar{L})&=I(V;\bbx)-I(V;\tilde{\bby}|U)+I(\bbx;\bbz|U) \label{original}\\
&=I(V;\bbx)-I(V;\tilde{\bby})+I(U;\tilde{\bby})+I(\bbx;\bbz)-I(U;\bbz) \label{mc_degraded_implies_0}\\
&=I(V;\bbx)-I(V;\tilde{\bby})+I(U;\tilde{\bby},\bbz)+I(\bbx;\bbz)-I(U;\bbz)\label{mc_degraded_implies}\\
&=I(V;\bbx)-I(V;\tilde{\bby})+I(U;\tilde{\bby}|\bbz)+I(\bbx;\bbz)
\\
&\geq I(V;\bbx)-I(V;\tilde{\bby})+I(\bbx;\bbz)
\label{nonnegativity}
\end{align}
where (\ref{mc_degraded_implies_0})-(\ref{mc_degraded_implies})
come from the following Markov chain
\begin{align}
U\rightarrow V\rightarrow \bbx \rightarrow \tilde{\bby}
\rightarrow \bby,\bbz
\end{align}
and (\ref{nonnegativity}) comes from the non-negativity of the
mutual information. On the other hand, (\ref{nonnegativity}) can
be obtained from (\ref{bar_L}) by choosing $U=\phi$, i.e., we have
\begin{align}
\bar{L}\leq \min_{\substack{V\rightarrow \bbx \rightarrow
\tilde{\bby}\rightarrow \bby,\bbz\\\bbk_{X|VY}\preceq
\bbd}}~~I(V;\bbx)-I(V;\tilde{\bby})+I(\bbx;\bbz)
\label{nonnegativity_x}
\end{align}
Hence, (\ref{nonnegativity}) and (\ref{nonnegativity_x}) imply the
following
\begin{align}
\bar{L}&=\min_{\substack{V\rightarrow \bbx \rightarrow
\tilde{\bby}\rightarrow \bby,\bbz\\\bbk_{X|VY}\preceq
\bbd}}~~I(V;\bbx)-I(V;\tilde{\bby})+I(\bbx;\bbz)\\
& =\min_{\substack{V\rightarrow \bbx \rightarrow
\tilde{\bby}\rightarrow \bby,\bbz\\\bbk_{X|VY}\preceq
\bbd}}~~I(V;\bbx|\tilde{\bby})+I(\bbx;\bbz)\label{optimization_similar}
\end{align}
where (\ref{optimization_similar}) comes from the Markov chain
$V\rightarrow \bbx \rightarrow \tilde{\bby}$. We note that the
optimization problem in (\ref{optimization_similar}) is similar to
the one we already studied in
(\ref{future_use_0})-(\ref{monotonocity}). Indeed, if the
constraint $\bbk_{X|VY}\preceq \bbd$ in
(\ref{optimization_similar}) was $\bbk_{X|V\tilde{Y}}\preceq
\bbd$, both optimization problems would be identical, and using
the analysis in (\ref{future_use_0})-(\ref{monotonocity}), we
could conclude that (\ref{optimization_similar}) is minimized by a
Gaussian $V$ satisfying $\bbk_{X|V\tilde{Y}}\preceq \bbd$.
However, the difference between these two constraints necessitates
a new proof, and indeed, showing the optimality of Gaussian $V$
for the optimization problem in (\ref{optimization_similar}) is
not as straightforward as showing the optimality of Gaussian $V$
for the optimization problem in (\ref{future_use_0}).

We find the minimizer for the optimization problem $\bar{L}$ in
two steps. In the first step, for a given feasible $V$, we
explicitly construct a feasible Gaussian $\bar{V}$ which provides
the same value for the cost function of $\bar{L}$ as the original
$V$ does. Thus, this first step implies that restricting $V$ to be
Gaussian does not change the optimum value of the optimization
problem $\bar{L}$. Consequently, in the second step of the proof,
we minimize $\bar{L}$ over all feasible  Gaussian $V$. To this
end, we note that the cost function of the optimization problem
$\bar{L}$ can be written as
\begin{align}
C(\bar{L})=h(\tilde{\bby}|V)-h(\bbx|V)+c \label{cost_function}
\end{align}
for some constant $c$, which is independent of $V$. From now on,
we focus on the difference of the two differential entropy terms
in (\ref{cost_function}). Next, we note that using
Lemma~\ref{gradient_fisher_conditional}, we have
\begin{align}
h(\tilde{\bby}|V)-h(\bbx|V)=\frac{1}{2}\int_{\bzero}^{\tilde{\bbsigma}_Y}\bbj(\bbx+\bbn|V)d\bbsigma_N
\label{de_bruin_implies}
\end{align}
where $\bbn$ is zero-mean Gaussian random vector with covariance
matrix $\bbsigma_N$ satisfying $\bzero\preceq \bbsigma_N$. Next,
we find upper and lower bounds for (\ref{de_bruin_implies}). We
note that Lemma~\ref{lemma_change_in_fisher} implies the following
upper bound for $\bbj(\bbx+\bbn|V)$
\begin{align}
\bbj(\bbx+\bbn|V)\preceq
\left[\bbj^{-1}(\bbx|V)+\bbsigma_N\right]^{-1}
\label{upper_bound_fisher}
\end{align}
Using (\ref{upper_bound_fisher}) in (\ref{de_bruin_implies}) in
conjunction with Lemma~\ref{Shamai_s_lemma}, we get
\begin{align}
h(\tilde{\bby}|V)-h(\bbx|V)\leq
\frac{1}{2}\log\frac{|\bbj^{-1}(\bbx|V)+\tilde{\bbsigma}_Y|}{|\bbj^{-1}(\bbx|V)|}\label{upper_bound_cost}
\end{align}
We note that due to Lemma~\ref{lemma_conditional_crb_vector}, we
have $\bbj(\bbx|V)\succeq \bbk_{X|V}^{-1}\succ \bzero$, i.e.,
(\ref{upper_bound_cost}) is well-defined. Similarly, using
Lemma~\ref{lemma_change_in_fisher}, we have
\begin{align}
\bbj^{-1}(\bbx+\tilde{\bbn}_Y|V)-\tilde{\bbsigma}_Y \succeq
\bbj^{-1}(\bbx+\bbn|V)-\bbsigma_N,\quad \bbsigma_N \preceq
\tilde{\bbsigma}_Y
\end{align}
which implies
\begin{align}
\bbj(\bbx+\bbn|V)\succeq
\left[\bbj^{-1}(\bbx+\tilde{\bbn}_Y|V)-\tilde{\bbsigma}_Y+\bbsigma_N\right]^{-1}
\label{lower_bound_fisher}
\end{align}
Using (\ref{lower_bound_fisher}) in (\ref{de_bruin_implies}) in
conjunction with Lemma~\ref{Shamai_s_lemma}, we get
\begin{align}
h(\tilde{\bby}|V)-h(\bbx|V)\geq \frac{1}{2}
\log\frac{|\bbj^{-1}(\bbx+\tilde{\bbn}_Y|V)|}{|\bbj^{-1}(\bbx+\tilde{\bbn}_Y|V)-\tilde{\bbsigma}_Y|}
\label{lower_bound_cost}
\end{align}
Now, we rewrite the bounds in (\ref{upper_bound_cost}) and
(\ref{lower_bound_cost}). To this end, we define the following
function
\begin{align}
f(t)=\frac{1}{2}
\log\frac{|\bbk(t)+\tilde{\bbsigma}_Y|}{|\bbk(t)|},\quad 0\leq
t\leq 1\label{function}
\end{align}
where the matrix $\bbk(t)$ is given as follows
\begin{align}
\bbk(t)=t\bbj^{-1}(\bbx|V)+(1-t)\left[\bbj^{-1}(\bbx+\tilde{\bbn}_Y|V)-\tilde{\bbsigma}_Y\right]
\end{align}
Hence, using $f(t)$ in (\ref{function}), the bounds in
(\ref{upper_bound_cost}) and (\ref{lower_bound_cost}) can be
rewritten as follows:
\begin{align}
f(0)\leq h(\tilde{\bby}|V)-h(\bbx|V) \leq f(1)
\end{align}
Since $f(t)$ is continuous in $t$, there exists $t^*$ such that
\begin{align}
f(t^*)&= h(\tilde{\bby}|V)-h(\bbx|V)\\
&=\frac{1}{2} \log
\frac{|\bbk(t^*)+\tilde{\bbsigma}_Y|}{|\bbk(t^*)|} \label{fixing}
\end{align}
where $\bbk(t^*)$ is bounded as follows
\begin{align}
\bbj^{-1}(\bbx|V)\preceq \bbk(t^*)&\preceq
\bbj^{-1}(\bbx+\tilde{\bbn}_Y|V)-\tilde{\bbsigma}_Y\\
&\preceq \bbj^{-1}(\bbx+\bbn_Y|V)-\bbsigma_Y
\label{bound_the_fixed_one}
\end{align}
where we used the fact that $0\leq t^*\leq 1$ and
Lemma~\ref{lemma_change_in_fisher}. Thus, (\ref{fixing}) implies
that if we pick a Gaussian $\bar{V}$ satisfying
$\bbk_{X|\bar{V}}=\bbk(t^*)$, it provides the same value for the
cost function of $\bar{L}$ as the original $V$ does.

Next, we check whether this Gaussian $\bar{V}$ is feasible, i.e.,
whether it satisfies $\bbk_{X|\bar{V}Y}\preceq \bbd$. To this end,
using Lemma~\ref{lemma_connection}, we get
\begin{align}
\bbk_{X|\bar{V}Y}=\bbsigma_Y-\bbsigma_Y\bbj(\bby|\bar{V})\bbsigma_Y\label{connection_implies}
\end{align}
Since $\bar{V}$ is Gaussian,
Lemma~\ref{lemma_conditional_crb_vector} implies that
\begin{align}
\bbj(\bby|\bar{V})&=\bbk_{Y|\bar{V}}^{-1}\\
&=(\bbk_{X|\bar{V}}+\bbsigma_Y)^{-1} \label{markov_chain}
\end{align}
where (\ref{markov_chain}) follows from the fact that
$(\bar{V},\bbx)$ and $\bbn_Y$ are independent. Moreover, due to
(\ref{bound_the_fixed_one}), we have $\bbk_{X|\bar{V}}\preceq
\bbj^{-1}(\bby|V)-\bbsigma_Y$, which together with
(\ref{markov_chain}) imply the following
\begin{align}
\bbj(\bby|\bar{V})\succeq \bbj(\bby|V) \label{bound}
\end{align}
Using (\ref{bound}) in (\ref{connection_implies}), we get
\begin{align}
\bbk_{X|\bar{V}Y}&\preceq
\bbsigma_Y-\bbsigma_Y\bbj(\bby|V)\bbsigma_Y\\
&=\bbk_{X|VY}\label{connection_implies_1}\\
&\preceq \bbd \label{assumption}
\end{align}
where (\ref{connection_implies_1}) follows from
Lemma~\ref{lemma_connection} and (\ref{assumption}) is due to the
assumption that $V$ is feasible, i.e., $\bbk_{X|VY}\preceq \bbd$.
Equation (\ref{assumption}) implies that the constructed Gaussian
random vector $\bar{V}$ is feasible, i.e., for each feasible $V$,
there exists a feasible Gaussian $\bar{V}$ which provides the same
value for the cost function of $\bar{L}$; completing the first
step of the proof.

Hence, in view of this first step of the proof, we can restrict
$V$ to be Gaussian which leads to the following form for
$\bar{L}$:
\begin{align}
\bar{L}&=\min_{\substack{V\rightarrow \bbx \rightarrow
\tilde{\bby}\rightarrow \bby,\bbz\\ V~{\rm
is~Gaussian}\\\bbk_{X|VY}\preceq
\bbd}}~~I(V;\bbx)-I(V;\tilde{\bby})+I(\bbx;\bbz)\\
&= \min_{\substack{V\rightarrow \bbx \rightarrow
\tilde{\bby}\rightarrow \bby,\bbz\\ V~{\rm
is~Gaussian}\\\bbk_{X|V}\preceq
\bbf(\bbd)}}~~I(V;\bbx)-I(V;\tilde{\bby})+I(\bbx;\bbz)\label{Some_lemma_implies}\\
&=\min_{\bbk_{X|V}\preceq \bbf(\bbd)}~~
\frac{1}{2}\log\frac{|\bbk_X|}{|\bbk_{X|V}|}-\frac{1}{2}\log\frac{|\bbk_X+\tilde{\bbsigma}_Y|}{|\bbk_{X|V}+\tilde{\bbsigma}_Y|}
+\frac{1}{2}\log\frac{|\bbk_X+\bbsigma_Z|}{|\bbsigma_Z|}\\
&=
\frac{1}{2}\log\frac{|\bbk_X|}{|\bbf(\bbd)|}-\frac{1}{2}\log\frac{|\bbk_X+\tilde{\bbsigma}_Y|}{|\bbf(\bbd)+\tilde{\bbsigma}_Y|}
+\frac{1}{2}\log\frac{|\bbk_X+\bbsigma_Z|}{|\bbsigma_Z|}\label{final_step}
\end{align}
where (\ref{Some_lemma_implies}) follows from
Lemma~\ref{lemma_equivalence_1}, and (\ref{final_step}) comes from
the fact that
\begin{align}
\frac{|\bbk_{X|V}+\tilde{\bbsigma}_Y|}{|\bbk_{X|V}|}
\end{align}
is monotonically decreasing in the positive semi-definite matrices
$\bbk_{X|V}$; completing the proof of
Lemma~\ref{lemma_lower_bound}.

\section{Proof of Lemma~\ref{lemma_inclusion}}
\label{proof_of_lemma_inclusion}

We note that due to Theorem~\ref{theorem_Gauss_single}, we already
have single-letter descriptions for the regions
$\mathcal{R}_o(\bbd)$ and $\mathcal{R}_\alpha (\bbd)$. Thus, to
prove Lemma~\ref{lemma_inclusion}, it suffices to show that for
any given feasible $(U,V)$, these two regions satisfy the
relationship given in Lemma~\ref{lemma_inclusion}. We first note
the following Markov chains
\begin{align}
& U\rightarrow V \rightarrow \bbx \rightarrow
\bar{\bar{\bby}}_\alpha \rightarrow \bby \label{MC_y}\\
& U\rightarrow V \rightarrow \bbx \rightarrow
\bar{\bar{\bbz}}_\alpha \rightarrow \bbz \label{MC_z}
\end{align}
Next, we show that any feasible $(U,V)$ for the region
$\mathcal{R}_o(\bbd)$ is also feasible for the region
$\lim_{\alpha\rightarrow 0}\mathcal{R}_\alpha(\bbd)$. To this end,
we note that
\begin{align}
\bbd &\succeq \bbk_{X|VY} \\
&\succeq \bbk_{X|VY\bar{\bar{Y}}_\alpha} \label{cond_reduces_mmse_1}\\
&= \bbk_{X|V\bar{\bar{Y}}_\alpha} \label{MC_y_implies}
\end{align}
where (\ref{cond_reduces_mmse_1}) is due to the fact that
conditioning reduces MMSE and (\ref{MC_y_implies}) follows from
the Markov chain in (\ref{MC_y}). Moreover, it can be shown that
$\lim_{\alpha \rightarrow 0}\bbk_{X|V\bar{\bar{Y}}_\alpha} $
exists and is equal to $\bbk_{X|VY}$. Hence, this observation and
(\ref{MC_y_implies}) imply that $(U,V)$ is also feasible for the
region $\lim_{\alpha \rightarrow 0}\mathcal{R}_\alpha (\bbd)$.

Next, we show that for a given $(U,V)$, any rate inside the region
$\mathcal{R}_o(\bbd)$ is also inside $\lim_{\alpha \rightarrow
0}\mathcal{R}_\alpha (\bbd)$. To this end, for a given $(U,V)$, we
denote the minimum achievable rates in $\mathcal{R}_o(\bbd)$ and
$\mathcal{R}_\alpha (\bbd)$ by $R_o$ and $R_\alpha$, respectively.
Due to Theorem~\ref{theorem_Gauss_single}, we have
\begin{align}
R_o-R_\alpha &=
[I(V;\bbx)-I(V;\bby)]-[I(V;\bbx)-I(V;\bar{\bar{\bby}}_\alpha)] \\
&=I(V;\bar{\bar{\bby}}_\alpha)-I(V;\bby)\\
&=I(V;\bar{\bar{\bby}}_\alpha|\bby)\label{MC_y_implies_1}\\
&\geq 0
\end{align}
where (\ref{MC_y_implies_1}) comes from the Markov chain in
(\ref{MC_y}). Equation (\ref{MC_y_implies_1}) implies that any
achievable rate within the region $\mathcal{R}_o(\bbd)$ is also
included in the region $\lim_{\alpha \rightarrow
0}\mathcal{R}_\alpha (\bbd)$.

Finally, we show that for a given $(U,V)$, any achievable
information leakage inside the region $\mathcal{R}_o(\bbd)$ is
also inside $\lim_{\alpha \rightarrow 0}\mathcal{R}_\alpha
(\bbd)$. To this end, for a given $(U,V)$, we denote the minimum
information leakage in $\mathcal{R}_o(\bbd)$ and
$\mathcal{R}_\alpha (\bbd)$ by $I_{e,o}$ and $I_{e,\alpha}$,
respectively. Due to Theorem~\ref{theorem_Gauss_single}, we have
\begin{align}
I_{e,o}-I_{e,\alpha}&=\left[I(V;\bbx)-I(V;\bby|U)+I(\bbx;\bbz|U)\right]\nonumber \\
&\quad -
\big[I(V;\bbx)-I(V;\bar{\bar{\bby}}_\alpha|U)+I(\bbx;\bar{\bar{\bbz}}_\alpha|U)\big]\\
&=\big[I(V;\bar{\bar{\bby}}_\alpha|U)-I(V;\bby|U)\big] +\big
[I(\bbx;\bbz|U)-I(\bbx;\bar{\bar{\bbz}}_\alpha|U)\big]\\
&=I(V;\bar{\bar{\bby}}_\alpha|U,\bby) +\big
[I(\bbx;\bbz|U)-I(\bbx;\bar{\bar{\bbz}}_\alpha|U)\big]
\label{MC_y_implies_2} \\
&\geq I(\bbx;\bbz|U)-I(\bbx;\bar{\bar{\bbz}}_\alpha|U)\\
&\geq I(\bbx;\bbz)-I(\bbx;\bar{\bar{\bbz}}_\alpha)
\label{MC_z_implies} \\
&=\frac{1}{2} \log|\bbh_Z\bbk_X\bbh_Z^\top+\bbi|-\frac{1}{2}
\log\frac{|\bbk_X+\bbr_Z (\bblambda_Z+\alpha
\bbi)^{-2}\bbr_Z^\top|}{|\bbr_Z (\bblambda_Z+\alpha
\bbi)^{-2}\bbr_Z^\top|}\\
&=\frac{1}{2} \log|\bbh_Z\bbk_X\bbh_Z^\top+\bbi|-\frac{1}{2}
\log\frac{|\bbk_X+\bbr_Z (\bblambda_Z+\alpha \bbi)^{-1}\bbq_Z^\top
\bbq_Z(\bblambda_Z+\alpha \bbi)^{-1}\bbr_Z^\top|}{|\bbr_Z
(\bblambda_Z+\alpha \bbi)^{-1}\bbq_Z^\top
\bbq_Z(\bblambda_Z+\alpha \bbi)^{-1}\bbr_Z^\top|}\\
&=\frac{1}{2} \log|\bbh_Z\bbk_X\bbh_Z^\top+\bbi|-\frac{1}{2} \log
|\bbq_Z (\bblambda_Z+\alpha \bbi)\bbr_Z^\top \bbk_X\bbr_Z
(\bblambda_Z+\alpha \bbi)\bbq_Z^\top+\bbi| \label{inclusion}
\end{align}
where (\ref{MC_y_implies_2}) comes from the Markov chain in
(\ref{MC_y}) and (\ref{MC_z_implies}) follows from the Markov
chain in (\ref{MC_z}). Equation (\ref{inclusion}) implies that
\begin{align}
\lim_{\alpha \rightarrow 0} I_{e,o}-I_{e,\alpha}& \geq \frac{1}{2}
\log|\bbh_Z\bbk_X\bbh_Z^\top+\bbi|-\lim_{\alpha\rightarrow 0
}\frac{1}{2} \log |\bbq_Z (\bblambda_Z+\alpha \bbi)\bbr_Z^\top
\bbk_X\bbr_Z (\bblambda_Z+\alpha \bbi)\bbq_Z^\top+\bbi|\\
&= \frac{1}{2} \log|\bbh_Z\bbk_X\bbh_Z^\top+\bbi|-\frac{1}{2} \log
|\bbq_Z \bblambda_Z\bbr_Z^\top \bbk_X\bbr_Z
\bblambda_Z\bbq_Z^\top+\bbi| \label{continuity}\\
&= \frac{1}{2} \log|\bbh_Z\bbk_X\bbh_Z^\top+\bbi|-\frac{1}{2}
\log|\bbh_Z\bbk_X\bbh_Z^\top+\bbi|\\
&=0 \label{inclusion_final}
\end{align}
where (\ref{continuity}) comes from the continuity of the
determinant in positive semi-definite matrices. Equation
(\ref{inclusion_final}) implies that any achievable information
leakage in the region $\mathcal{R}_o(\bbd)$ is also inside the
region $\lim_{\alpha\rightarrow 0}\mathcal{R}_\alpha(\bbd)$;
completing the proof of Lemma~\ref{lemma_inclusion}.

\section{Proof of Theorem~\ref{theorem_outer_general}}

\label{proof_of_theorem_outer_general}

We start the proof of Theorem~\ref{theorem_outer_general} by first
expressing Theorem~\ref{theorem_outer} for the side information
model given by
(\ref{general_alpha_1_aligned})-(\ref{general_alpha_2_aligned}).
In other words, we first provide an outer bound for the region
$\mathcal{R}_\alpha (\bbd)$ by using Theorem~\ref{theorem_outer}.
To this end, to be able to use Theorem~\ref{theorem_outer}, we
need $\bbd \preceq \bbk_{X|\bar{\bar{Y}}_\alpha}$. However, since
we originally have $\bbd \preceq \bbk_{X|Y}$ and
$\bbk_{X|\bar{\bar{Y}}_\alpha}\preceq \bbk_{X|Y}$, where the
latter one follows from the Markov chain $\bbx\rightarrow
\bar{{\bby}}_\alpha\rightarrow \bby$ and the fact that
conditioning reduces MMSE, $\bbk_{X|\bar{\bar{Y}}_\alpha}-\bbd$
might be indefinite. However, the only place we use the condition
$\bbd\preceq \bbk_{X|Y}$ is to be able to show the equivalence
between $\bbk_{X|VY}\preceq \bbd$ and $\bbk_{X|V}\preceq
\bbf(\bbd)$ for Gaussian $V$ in Lemma~\ref{lemma_equivalence_1}.
In particular, we only need the fact that $\bbsigma_Y-\bbd$ is
non-singular to show this equivalence, and which is implied by
$\bbd\preceq \bbk_{X|Y}$. However, still there might be distortion
matrices $\bbd$ for which although we have non-singular
$\bbsigma_Y-\bbd$, the condition $\bbd\preceq \bbk_{X|Y}$ is not
satisfied. Hence, if we can find an $\alpha^* $ such that
\begin{align}
\bbsigma_{Y,\alpha}-\bbd \succ \bzero,\quad 0<\alpha\leq \alpha^*
\label{desired_condition}
\end{align}
we can still use Theorem~\ref{theorem_outer} to obtain an outer
bound for the region $\mathcal{R}_\alpha (\bbd)$. Now, we
establish the existence of such an $\alpha^*$. Using the
assumption $\bbd\preceq \bbk_{X|Y}$, we have
\begin{align}
\bbd\preceq \bbk_{X|Y}=(\bbk_X^{-1}+\bbh_Y^\top\bbh_Y)^{-1}
\label{non_singular}
\end{align}
where the equality follows from (\ref{future_use_x}). Equation
(\ref{non_singular}) implies that
\begin{align}
\bzero &\prec \bbd^{-1}-\bbh_Y^\top\bbh_Y \label{future_use_y}\\
&= \bbd^{-1}-\bbr_Y\bblambda_Y^2\bbr_Y^\top
\end{align}
where we use the singular value decomposition of $\bbh_Y$. Thus,
since $\bbd^{-1}-\bbr_Y\bblambda_Y^2\bbr_Y^\top$ is strictly
positive definite, there exists $0< \beta$ such that
\begin{align}
 \bbd^{-1}-\bbr_Y\bblambda_Y^2\bbr_Y^\top &\succ
\beta^2\bbi\\
&=\beta^2 \bbr_Y\bbr_Y^\top
\end{align}
which implies
\begin{align}
 \bbd^{-1}\succ \bbr_Y(\bblambda_Y^2+\beta^2)\bbr_Y^\top
\end{align}
which, in turn, implies the existence of an $\alpha^*$ such that
\begin{align}
 \bbd^{-1}\succ \bbr_Y(\bblambda_Y+\alpha)^2\bbr_Y^\top,\quad
0<\alpha \leq \alpha^* \label{extension}
\end{align}
Hence, using the definition of $\bbsigma_{Y,\alpha}$ in
(\ref{extension}), we get
\begin{align}
\bbd^{-1}\succ \bbsigma_{Y,\alpha}^{-1},\quad  0<\alpha \leq
\alpha^*
\end{align}
which is equivalent to the desired condition in
(\ref{desired_condition}) which is needed to use
Theorem~\ref{theorem_outer} to obtain an outer bound for the
region $\mathcal{R}_\alpha(\bbd)$. Hence, assuming that
$0<\alpha\leq\alpha^*$, an outer bound for the region
$\mathcal{R}_\alpha(\bbd)$ can be written as the union of rate and
information leakage $(R,I_e)$ pairs satisfying
\begin{align}
R&\geq \frac{1}{2} \log
\frac{|\bbk_{X|\bar{\bar{Y}}_\alpha}|}{|\bbd|}=\frac{1}{2}\log\frac{|\bbk_X|}{|\bbf_\alpha(\bbd)|}-\frac{1}{2}\log\frac{|\bbk_X+\bbsigma_{Y,\alpha}|}{|\bbf_\alpha(\bbd)+\bbsigma_{Y,\alpha}|}
\label{rate_bound_alpha}\\
I_e & \geq~\min_{\substack{0\preceq \bbk_{X|V}\preceq
\bbk_{X|U}\preceq \bbk_X\\ \bbk_{X|V}\preceq \bbf_\alpha(\bbd)}}
\frac{1}{2}\log\frac{|\bbk_X|}{|\bbk_{X|V}|}-\frac{1}{2}\log\frac{|\bbk_{X|U}+\bbsigma_{Y,\alpha}|}{|\bbk_{X|V}+\bbsigma_{Y,\alpha}|}
+\frac{1}{2}
\log\frac{|\bbk_{X|U}+\bbsigma_{Z,\alpha}|}{|\bbsigma_{Z,\alpha}|}
\label{leakage_bound_alpha}
\end{align}
where
$\bbf_\alpha(\bbd)=\bbsigma_{Y,\alpha}(\bbsigma_{Y,\alpha}-\bbd)^{-1}\bbsigma_{Y,\alpha}-\bbsigma_{Y,\alpha}$.
We now find the limiting region that comes from the one described
by (\ref{rate_bound_alpha})-(\ref{leakage_bound_alpha}) as
$\alpha\rightarrow 0$. To this end,we introduce the following
lemma that will be used subsequently.
\begin{Lem}
\label{lemma_matrix_limits}
\begin{align}
\lim_{\alpha\rightarrow0}
\bbk_{X|\bar{\bar{Y}}_\alpha}&=\bbk_{X|Y}\label{matrix_limit_1}\\
\lim_{\alpha\rightarrow
0}\bbf_\alpha(\bbd)&=(\bbd^{-1}-\bbh_{Y}^\top\bbh_Y)^{-1}
\label{matrix_limit_2}
\end{align}
\end{Lem}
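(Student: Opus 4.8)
The plan is to reduce both displayed limits to the single elementary fact that the \emph{inverse} noise covariance converges, namely $\bbsigma_{Y,\alpha}^{-1}\to\bbh_Y^\top\bbh_Y$ as $\alpha\to 0$, and then to invoke continuity of the map $\bbm\mapsto\bbm^{-1}$ on the cone of positive definite matrices. To establish that basic convergence, I would observe from (\ref{covariance_y_alpha}) and the orthogonality of $\bbr_Y$ that $\bbsigma_{Y,\alpha}^{-1}=\bbr_Y(\bblambda_Y+\alpha\bbi)^2\bbr_Y^\top$ for every $\alpha>0$, so that $\bbsigma_{Y,\alpha}^{-1}\to\bbr_Y\bblambda_Y^2\bbr_Y^\top$, and the latter equals $\bbh_Y^\top\bbh_Y$ by the singular value decomposition $\bbh_Y=\bbq_Y\bblambda_Y\bbr_Y^\top$ together with $\bbq_Y^\top\bbq_Y=\bbi$. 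I want to flag at the outset that $\bbsigma_{Y,\alpha}$ itself need \emph{not} converge---its entries blow up along the eigendirections of $\bbr_Y$ corresponding to vanishing singular values of $\bbh_Y$---which is precisely why the argument is organized around the inverse matrices $\bbsigma_{Y,\alpha}^{-1}$ and $\bbf_\alpha(\bbd)^{-1}$, which do converge, and why continuity of inversion is invoked only on sets bounded away from singular matrices.

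For (\ref{matrix_limit_1}): since $\bar{\bar{\bby}}_\alpha=\bbx+\bar{\bbn}_{Y,\alpha}$ with $\bar{\bbn}_{Y,\alpha}$ independent of $\bbx$ and with positive definite covariance $\bbsigma_{Y,\alpha}$, the Gaussian MMSE identity---equivalently (\ref{future_use_x}) applied after rewriting $\bar{\bar{\bby}}_\alpha$ in the identity-noise form $(\bblambda_Y+\alpha\bbi)\bbr_Y^\top\bbx+\bar{\bbn}_Y$ it was obtained from---gives $\bbk_{X|\bar{\bar{Y}}_\alpha}=(\bbk_X^{-1}+\bbsigma_{Y,\alpha}^{-1})^{-1}$, while $\bbk_{X|Y}=(\bbk_X^{-1}+\bbh_Y^\top\bbh_Y)^{-1}$ by (\ref{non_singular}). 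Since $\bbk_X^{-1}+\bbsigma_{Y,\alpha}^{-1}\succ\bzero$ for every $\alpha>0$ and converges to the positive definite matrix $\bbk_X^{-1}+\bbh_Y^\top\bbh_Y$, continuity of inversion yields $\bbk_{X|\bar{\bar{Y}}_\alpha}\to\bbk_{X|Y}$, that is (\ref{matrix_limit_1}); this is the $V=\phi$ case of the convergence used in the proof of Lemma~\ref{lemma_inclusion}.

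For (\ref{matrix_limit_2}): I would first simplify $\bbf_\alpha(\bbd)$ algebraically. Factoring, $\bbf_\alpha(\bbd)=\bbsigma_{Y,\alpha}(\bbsigma_{Y,\alpha}-\bbd)^{-1}\bbsigma_{Y,\alpha}-\bbsigma_{Y,\alpha}=\bbsigma_{Y,\alpha}(\bbsigma_{Y,\alpha}-\bbd)^{-1}\big[\bbsigma_{Y,\alpha}-(\bbsigma_{Y,\alpha}-\bbd)\big]=\bbsigma_{Y,\alpha}(\bbsigma_{Y,\alpha}-\bbd)^{-1}\bbd$, and since $\bbsigma_{Y,\alpha}-\bbd\succ\bzero$ for $0<\alpha\le\alpha^*$ by (\ref{desired_condition}) while $\bbd\succ\bzero$, this matrix is nonsingular on that range with $\bbf_\alpha(\bbd)^{-1}=\bbd^{-1}(\bbsigma_{Y,\alpha}-\bbd)\bbsigma_{Y,\alpha}^{-1}=\bbd^{-1}-\bbsigma_{Y,\alpha}^{-1}$. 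By the first paragraph this tends to $\bbd^{-1}-\bbh_Y^\top\bbh_Y$, which is positive definite by (\ref{future_use_y}); inverting and using continuity of inversion once more gives $\bbf_\alpha(\bbd)\to(\bbd^{-1}-\bbh_Y^\top\bbh_Y)^{-1}$, which is (\ref{matrix_limit_2}).

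I do not expect a genuinely hard step here. The one point demanding care is the one flagged above: one must not manipulate $\bbsigma_{Y,\alpha}$ or $\bbf_\alpha(\bbd)$ themselves (they may be unbounded as $\alpha\to 0$ when $\bbh_Y$ is rank deficient), but rather their inverses, so the whole argument comes down to the two-line identity $\bbf_\alpha(\bbd)^{-1}=\bbd^{-1}-\bbsigma_{Y,\alpha}^{-1}$ and the elementary limit $\bbr_Y(\bblambda_Y+\alpha\bbi)^2\bbr_Y^\top\to\bbh_Y^\top\bbh_Y$, with continuity of matrix inversion doing the rest.
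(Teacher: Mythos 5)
Your proof is correct and follows essentially the same algebraic route as the paper's own proof: both reduce the two limits to the identities $\bbk_{X|\bar{\bar{Y}}_\alpha}=(\bbk_X^{-1}+\bbsigma_{Y,\alpha}^{-1})^{-1}$ and $\bbf_\alpha(\bbd)=(\bbd^{-1}-\bbsigma_{Y,\alpha}^{-1})^{-1}$, then pass to the limit inside the outer inverse after observing $\bbsigma_{Y,\alpha}^{-1}=\bbr_Y(\bblambda_Y+\alpha\bbi)^2\bbr_Y^\top\to\bbh_Y^\top\bbh_Y$. The only real difference is in how the final inversion step is justified: you invoke the standard continuity of $\bbm\mapsto\bbm^{-1}$ at invertible matrices, whereas the paper proves a bespoke auxiliary lemma (its Lemma~\ref{lemma_matrix_limit_core}) by expanding $(\bbi+\bbc)^{-1}$ as a Neumann series under the hypothesis $\bba\succ f(\alpha)\bbb\succeq\bzero$. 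Your route is a bit more economical and more elementary in the sense that it appeals to a universally known continuity fact rather than building a convergent-series argument from scratch, while the paper's version makes the required positive-definiteness hypotheses fully explicit inside the lemma statement. Your flag that $\bbsigma_{Y,\alpha}$ itself may diverge when $\bbh_Y$ is rank-deficient, so that one must work with the inverses throughout, is exactly the right point of care and is implicitly what the paper's choice of lemma form $(\bba+f(\alpha)\bbb)^{-1}$ is guarding against.
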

The proof of Lemma~\ref{lemma_matrix_limits} is given in
Appendix~\ref{proof_of_lemma_matrix_limits}.

We first consider the rate bound in (\ref{rate_bound_alpha}) as
follows
\begin{align}
\lim_{\alpha\rightarrow 0}\frac{1}{2} \log
\frac{|\bbk_{X|\bar{\bar{Y}}_\alpha}|}{|\bbd|}=\frac{1}{2} \log
\frac{|\bbk_{X|Y}|}{|\bbd|} \label{limit_rate_bound}
\end{align}
which follows from the continuity of the determinant in positive
semi-definite matrices and (\ref{matrix_limit_1}). Similarly, for
the second expression in the rate bound in
(\ref{rate_bound_alpha}), we have
\begin{align}
\lefteqn{\lim_{\alpha\rightarrow
0}\frac{1}{2}\log\frac{|\bbk_X|}{|\bbf_\alpha(\bbd)|}-\frac{1}{2}\log\frac{|\bbk_X+\bbsigma_{Y,\alpha}|}{|\bbf_\alpha(\bbd)+\bbsigma_{Y,\alpha}|}}\nonumber\\
&=\frac{1}{2}\log\frac{|\bbk_X|}{|(\bbd^{-1}-\bbh_Y^\top\bbh_Y)^{-1}|}
-\lim_{\alpha \rightarrow 0}
\frac{1}{2}\log\frac{|\bbk_X+\bbsigma_{Y,\alpha}|}{|\bbf_\alpha(\bbd)+\bbsigma_{Y,\alpha}|}\label{matrix_limit_2_implies_1}\\
&=\frac{1}{2}\log\frac{|\bbk_X|}{|(\bbd^{-1}-\bbh_Y^\top\bbh_Y)^{-1}|}-\lim_{\alpha
\rightarrow 0}
\frac{1}{2}\log\frac{|\bbk_X+\bbr_Y(\bblambda_Y+\alpha\bbi)^{-2}\bbr_Y^\top|}{|\bbf_\alpha(\bbd)+\bbr_Y(\bblambda_Y+\alpha\bbi)^{-2}\bbr_Y^\top|}\label{def_noise_y_alpha_implies_1}
\\
&=\frac{1}{2}\log\frac{|\bbk_X|}{|(\bbd^{-1}-\bbh_Y^\top\bbh_Y)^{-1}|}-\lim_{\alpha
\rightarrow 0}
\frac{1}{2}\log\frac{|\bbk_X+\bbr_Y(\bblambda_Y+\alpha\bbi)^{-1}\bbq_Y^\top\bbq_Y(\bblambda_Y+\alpha\bbi)^{-1}\bbr_Y^\top|}{|\bbf_\alpha(\bbd)+\bbr_Y(\bblambda_Y+\alpha\bbi)^{-1}\bbq_Y^\top\bbq_Y(\bblambda_Y+\alpha\bbi)^{-1}\bbr_Y^\top|}
\\
&=\frac{1}{2}\log\frac{|\bbk_X|}{|(\bbd^{-1}-\bbh_Y^\top\bbh_Y)^{-1}|}-\lim_{\alpha
\rightarrow 0} \frac{1}{2}\log\frac{|\bbq_Y(\bblambda_Y+\alpha
\bbi)\bbr_Y^\top\bbk_X\bbr_Y(\bblambda_Y+\alpha\bbi)\bbq_Y^\top+\bbi|}{|\bbq_Y(\bblambda_Y+\alpha
\bbi)\bbr_Y^\top\bbf_\alpha(\bbd)\bbr_Y(\bblambda_Y+\alpha\bbi)\bbq_Y^\top+\bbi|}\\
&=\frac{1}{2}\log\frac{|\bbk_X|}{|(\bbd^{-1}-\bbh_Y^\top\bbh_Y)^{-1}|}-
\frac{1}{2}\log\frac{|\bbq_Y\bblambda_Y\bbr_Y^\top\bbk_X\bbr_Y\bblambda_Y\bbq_Y^\top+\bbi|}{|\bbq_Y\bblambda_Y
\bbr_Y^\top(\bbd^{-1}-\bbh_Y^\top\bbh_Y)^{-1}\bbr_Y\bblambda_Y\bbq_Y^\top+\bbi|}\label{matrix_limit_2_implies_2}\\
&=\frac{1}{2}\log\frac{|\bbk_X|}{|(\bbd^{-1}-\bbh_Y^\top\bbh_Y)^{-1}|}-
\frac{1}{2}\log\frac{|\bbh_Y\bbk_X\bbh_Y^\top+\bbi|}{|\bbh_Y(\bbd^{-1}-\bbh_Y^\top\bbh_Y)^{-1}\bbh_Y^\top+\bbi|}
\label{svd_implies_2}
\end{align}
where (\ref{matrix_limit_2_implies_1}) is due to the continuity of
the determinant in positive semi-definite matrices and
(\ref{matrix_limit_2}), (\ref{def_noise_y_alpha_implies_1}) comes
from the definition of $\bbsigma_{Y,\alpha}$,
(\ref{matrix_limit_2_implies_2}) comes from the continuity of the
determinant in positive semi-definite matrices and
(\ref{matrix_limit_2}), and (\ref{svd_implies_2}) is obtained by
using the singular value decomposition of $\bbh_Y$. Hence,
(\ref{limit_rate_bound}) and (\ref{svd_implies_2}) imply that any
rate $R$ inside the region $\lim_{\alpha\rightarrow
0}\mathcal{R}_\alpha (\bbd)$ satisfies
\begin{align}
R&\geq\frac{1}{2} \log \frac{|\bbk_{X|Y}|}{|\bbd|}\\
&=\frac{1}{2}\log\frac{|\bbk_X|}{|(\bbd^{-1}-\bbh_Y^\top\bbh_Y)^{-1}|}-
\frac{1}{2}\log\frac{|\bbh_Y\bbk_X\bbh_Y^\top+\bbi|}{|\bbh_Y(\bbd^{-1}-\bbh_Y^\top\bbh_Y)^{-1}\bbh_Y^\top+\bbi|}
\label{rate_bound_alpha_limit}
\end{align}
Following a similar analysis, the limit of the information leakage
in (\ref{leakage_bound_alpha}) can be found as
\begin{align}
\min_{\substack{\bzero \preceq \bbk_{X|V}\preceq \bbk_{X|U}\preceq
\bbk_X\\ \bbk_{X|V}\preceq (\bbd^{-1}-\bbh_Y^\top\bbh_Y)^{-1}}}
\frac{1}{2}\log\frac{|\bbk_X|}{|\bbk_{X|V}|}-\frac{1}{2}\log\frac{|\bbh_Y\bbk_{X|U}\bbh_Y^\top+\bbi|}{|\bbh_Y\bbk_{X|V}\bbh_Y^\top+\bbi|}
+\frac{1}{2} \log|\bbh_Y\bbk_{X|U}\bbh_Y^\top+\bbi|
\label{leakage_bound_alpha_limit}
\end{align}
which implies that any information leakage $I_e$ inside the region
$\lim_{\alpha\rightarrow 0}\mathcal{R}_\alpha(\bbd)$ should be
larger than (\ref{leakage_bound_alpha_limit}); completing the
proof of Theorem~\ref{theorem_outer_general}.

\section{Proof of Lemma~\ref{lemma_matrix_limits}}
\label{proof_of_lemma_matrix_limits}

We first prove the following lemma which will be used
subsequently.
\begin{Lem}
\label{lemma_matrix_limit_core} Let
$\bbk(\alpha)=(\bba+f(\alpha)\bbb)^{-1},~0< \alpha \leq\alpha^*$,
where $\bba\succ f(\alpha)\bbb\succeq \bzero,~0\leq \alpha \leq
\alpha^*$ and $f(\alpha)$ is continuous in $\alpha$. Then, we have
\begin{align}
\lim_{\alpha\rightarrow 0}\bbk(\alpha)=(\bba+f(0)\bbb)^{-1}
\end{align}
\end{Lem}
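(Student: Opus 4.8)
The plan is to deduce this from elementary continuity: once we know that the target matrix $\bba+f(0)\bbb$ is invertible, the claim follows because $\alpha\mapsto\bba+f(\alpha)\bbb$ is continuous (continuity of $f$ together with continuity of scalar multiplication and addition of matrices) and $\bbm\mapsto\bbm^{-1}$ is continuous on the set of invertible matrices. The invertibility of $\bba+f(0)\bbb$ is immediate: evaluating the hypothesis $\bba\succ f(\alpha)\bbb\succeq\bzero$ at $\alpha=0$ (which is permitted, since it is assumed on the whole interval $0\leq\alpha\leq\alpha^*$) gives $\bba\succ\bzero$ and $f(0)\bbb\succeq\bzero$, hence $\bba+f(0)\bbb\succeq\bba\succ\bzero$, so $(\bba+f(0)\bbb)^{-1}$ is well defined; moreover the same chain of inequalities gives the uniform bound $\bzero\prec(\bba+f(\alpha)\bbb)^{-1}\preceq\bba^{-1}$ for every $\alpha\in[0,\alpha^*]$.

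To keep the argument self-contained rather than quoting ``inversion is continuous'' as a black box, I would use the resolvent-type identity
\begin{align}
\bbk(\alpha)-\bbk(0)&=-\,\bbk(\alpha)\big[(\bba+f(\alpha)\bbb)-(\bba+f(0)\bbb)\big]\bbk(0)\nonumber\\
&=-\big(f(\alpha)-f(0)\big)\,\bbk(\alpha)\,\bbb\,\bbk(0),
\end{align}
then take spectral norms and apply submultiplicativity together with the uniform bound $\|\bbk(\alpha)\|\leq\|\bba^{-1}\|$ (using that $\bzero\preceq\bbp\preceq\bbq$ implies $\|\bbp\|\leq\|\bbq\|$ for the spectral norm) to obtain $\|\bbk(\alpha)-\bbk(0)\|\leq|f(\alpha)-f(0)|\,\|\bba^{-1}\|^2\,\|\bbb\|$. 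Since $f$ is continuous at $0$, the right-hand side tends to $0$ as $\alpha\rightarrow0$, which is exactly $\lim_{\alpha\rightarrow0}\bbk(\alpha)=(\bba+f(0)\bbb)^{-1}$.

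I do not expect a genuine obstacle here; this is an elementary continuity statement, and its only two mildly delicate points are (i) ensuring the limiting inverse exists, which the positive-definiteness of $\bba$ (read off from the hypothesis at $\alpha=0$) takes care of, and (ii) keeping the product $\bbk(\alpha)\,\bbb\,\bbk(0)$ bounded along $\alpha\rightarrow0$, which is precisely what the bound $\bbk(\alpha)\preceq\bba^{-1}$ supplies.
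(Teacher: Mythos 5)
Your proof is correct and takes a genuinely different route from the paper's. The paper factors $\bba^{-1/2}$ out of $(\bba+f(\alpha)\bbb)^{-1}$ and expands $(\bbi+f(\alpha)\bba^{-1/2}\bbb\bba^{-1/2})^{-1}$ as a Neumann series, then passes the limit $\alpha\rightarrow 0$ through the infinite sum term by term. That interchange is valid, but justifying it carefully requires a uniform geometric bound on $\|f(\alpha)\bba^{-1/2}\bbb\bba^{-1/2}\|$ over $[0,\alpha^*]$ (which follows from compactness and continuity of $f$, though the paper does not spell this out). Your approach via the resolvent identity
\begin{align}
\bbk(\alpha)-\bbk(0)=-\big(f(\alpha)-f(0)\big)\,\bbk(\alpha)\,\bbb\,\bbk(0)
\end{align}
combined with the uniform bound $\bzero\prec\bbk(\alpha)\preceq\bba^{-1}$ avoids any series manipulation entirely, is shorter, and has the additional merit of producing an explicit quantitative estimate $\|\bbk(\alpha)-\bbk(0)\|\leq|f(\alpha)-f(0)|\,\|\bba^{-1}\|^2\|\bbb\|$, i.e., $\bbk$ inherits the modulus of continuity of $f$. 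The one ingredient you invoke beyond basic algebra --- that $\bzero\preceq\bbp\preceq\bbq$ implies $\|\bbp\|\leq\|\bbq\|$ in spectral norm, and that $\bzero\prec\bbp\preceq\bbq$ implies $\bbq^{-1}\preceq\bbp^{-1}$ --- is standard and is also used implicitly elsewhere in the paper. Both proofs are sound; yours is the cleaner of the two for this statement.
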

\begin{proof}
In the proof of this lemma, we use the fact that if
$\lim_{n\rightarrow \infty}\bbc^n=\bzero$, we have
\begin{align}
(\bbi+\bbc)^{-1}=\sum_{n=0}^{\infty}(-1)^n\bbc^n \label{the_sum}
\end{align}
where $\bbc^0=\bbi$~\cite[page 19]{matrix_cookbook}. Now, we
consider
\begin{align}
\bbk(\alpha)&= (\bba+f(\alpha)\bbb)^{-1}\\
&=\bba^{-1/2}(\bbi+f(\alpha)\bba^{-1/2}\bbb\bba^{-1/2})^{-1}\bba^{-1/2}
\label{before_sum_applies}
\end{align}
where due to $\bba\succ f(\alpha)\bbb\succeq \bzero$, we have
$\bbi\succ  f(\alpha)\bba^{-1/2}\bbb\bba^{-1/2} \succeq \bzero$
which implies
\begin{align}
\lim_{n\rightarrow
\infty}\left(f(\alpha)\bba^{-1/2}\bbb\bba^{-1/2}\right)^n=\bzero
\end{align}
Hence, we can use (\ref{the_sum}) in (\ref{before_sum_applies}) to
get
\begin{align}
\bbk(\alpha)
&=\bba^{-1/2}\left[\sum_{n=0}^{\infty}(-1)^nf^n(\alpha)(\bba^{-1/2}\bbb\bba^{-1/2})^{n}\right]\bba^{-1/2}
\end{align}
which implies
\begin{align}
\lim_{\alpha\rightarrow 0}\bbk(\alpha) &=\lim_{\alpha\rightarrow
0}\bba^{-1/2}\left[\sum_{n=0}^{\infty}(-1)^nf^n(\alpha)(\bba^{-1/2}\bbb\bba^{-1/2})^{n}\right]\bba^{-1/2}\\
&=\bba^{-1/2}\left[\sum_{n=0}^{\infty}(-1)^nf^n(0)(\bba^{-1/2}\bbb\bba^{-1/2})^{n}\right]\bba^{-1/2}\\
&=\bba^{-1/2}\left[\bbi+f(0)\bba^{-1/2}\bbb\bba^{-1/2}\right]^{-1}\bba^{-1/2}
\label{after_sum_applies}\\
&=(\bba+f(0)\bbb)^{-1}
\end{align}
where (\ref{after_sum_applies}) comes from (\ref{the_sum});
completing the proof of Lemma~\ref{lemma_matrix_limit_core}.
\end{proof}

We now consider (\ref{matrix_limit_1}) in
Lemma~\ref{lemma_matrix_limits} as follows
\begin{align}
\bbk_{X|\bar{\bar{\bby}}_\alpha}&=\bbk_X(\bbk_X+\bbsigma_{Y,\alpha})^{-1}\bbsigma_{Y,\alpha}
\label{future_use_implies}\\
&=(\bbk_{X}^{-1}+\bbsigma_{Y,\alpha}^{-1})^{-1} \\
&=\left[\bbk_{X}^{-1}+\bbr_{Y}(\bblambda_Y+\alpha\bbi)^{2}\bbr_Y^\top\right]^{-1}
\label{def_noise_y_alpha} \\
&=\left[\bbk_{X}^{-1}+\bbr_{Y}\bblambda_Y^{2}\bbr_Y^\top+\bbr_{Y}(2\alpha
\bblambda_Y+\alpha^2\bbi)\bbr_Y^\top\right]^{-1}
\label{future_use_again}
\end{align}
where $0<\alpha\leq \alpha^*$. Equation (\ref{future_use_implies})
comes from (\ref{future_use}), (\ref{def_noise_y_alpha}) is due to
the definition of $\bbsigma_{Y,\alpha}$. We note that
$\bbk_{X}^{-1}+\bbr_{Y}\bblambda_Y^{2}\bbr_Y^\top \succ \bzero$,
and thus, $\alpha^*$ can be selected to ensure that
\begin{align}
\bbk_{X}^{-1}+\bbr_{Y}\bblambda_Y^{2}\bbr_Y^\top\succ
\bbr_{Y}(2\alpha \bblambda_Y+\alpha^2\bbi)\bbr_Y^\top
\end{align}
for all $0\leq \alpha\leq \alpha^*$. Hence, we can use Lemma
\ref{lemma_matrix_limit_core} in (\ref{future_use_again}) to get
\begin{align}
\lim_{\alpha \rightarrow 0}\bbk_{X|\bar{\bar{\bby}}_\alpha}&=
\left[\bbk_{X}^{-1}+\bbr_{Y}\bblambda_Y^{2}\bbr_Y^\top\right]^{-1}\\
&=\left[\bbk_{X}^{-1}+\bbr_{Y}\bblambda_Y\bbq_Y^\top\bbq_Y\bblambda_Y
\bbr_Y^\top\right]^{-1}\\
&=(\bbk_X^{-1}+\bbh_Y^T\bbh_Y)^{-1}\label{svd_implies}\\
&=\bbk_{X|Y} \label{future_use_x_implies}
\end{align}
where (\ref{svd_implies}) comes from the singular value
decomposition of $\bbh_Y$ and (\ref{future_use_x_implies}) is due
to (\ref{future_use_x}); completing the proof of
(\ref{matrix_limit_1}).

Next, we consider (\ref{matrix_limit_2}) in
Lemma~\ref{lemma_matrix_limits} as follows
\begin{align}
\bbf_\alpha (\bbd)&=
\bbsigma_{Y,\alpha}(\bbsigma_{Y,\alpha}-\bbd)^{-1}\bbsigma_{Y,\alpha}-\bbsigma_{Y,\alpha}
\\
&=\bbsigma_{Y,\alpha}(\bbsigma_{Y,\alpha}-\bbd)^{-1}\bbd\\
&=(\bbd^{-1}-\bbsigma_{Y,\alpha}^{-1})^{-1}\\
&=(\bbd^{-1}-\bbr_Y(\bblambda_Y+\alpha\bbi)^{2}\bbr_Y^\top)^{-1}\label{def_noise_y_alpha_implies}
\\
&=\left[\bbd^{-1}-\bbr_Y\bblambda_Y^2\bbr_{Y}^\top-\bbr_Y(2\alpha\bblambda_Y+\alpha^2\bbi)\bbr_Y^\top\right]^{-1}\\
&=\left[\bbd^{-1}-\bbr_Y\bblambda_Y\bbq_Y^\top\bbq_Y\bblambda_Y
\bbr_{Y}^\top-\bbr_Y(2\alpha\bblambda_Y+\alpha^2\bbi)\bbr_Y^\top\right]^{-1}\\
&=\left[\bbd^{-1}-\bbh_Y^\top
\bbh_{Y}-\bbr_Y(2\alpha\bblambda_Y+\alpha^2\bbi)\bbr_Y^\top\right]^{-1}
\label{svd_implies_1}
\end{align}
where $0<\alpha\leq \alpha^*$. Equation
(\ref{def_noise_y_alpha_implies}) comes from the definition of
$\bbsigma_{Y,\alpha}$ and (\ref{svd_implies_1}) is obtained by
using the singular value decomposition of $\bbh_Y$. We note that
$\bbd^{-1}-\bbh_Y^\top\bbh_Y$ is strictly positive definite as
(\ref{future_use_y}) indicates, and hence, there exists an
$\alpha^*$ such that
\begin{align}
\bbd^{-1}-\bbh_Y^\top \bbh_{Y}\succ
\bbr_Y(2\alpha\bblambda_Y+\alpha^2\bbi)\bbr_Y^\top
\end{align}
for all $0\leq \alpha\leq \alpha^*$. Consequently, we can use
Lemma~\ref{lemma_matrix_limit_core} in (\ref{svd_implies_1}) to
get
\begin{align}
\lim_{\alpha \rightarrow
0}\bbf_\alpha(\bbd)=(\bbd^{-1}-\bbh_Y^\top\bbh_Y)^{-1}
\end{align}
which completes the proof of Lemma~\ref{lemma_matrix_limits}.

\bibliographystyle{unsrt}
\bibliography{IEEEabrv,references2}
\end{document}